\newcommand{\LoadPackagesNow}{}
\newcommand{\LoadPackageLater}[2][]{%
   \g@addto@macro{\LoadPackagesNow}{%
      \usepackage[#1]{#2}%
   }%
}
\g@addto@macro\bfseries{\boldmath}
\numberwithin{equation}{section}
\definecolor{pdfurlcolor}{rgb}{0,0,0.6}
\definecolor{pdffilecolor}{rgb}{0.7,0,0}
\definecolor{pdflinkcolor}{rgb}{0,0,0.6}
\definecolor{pdfcitecolor}{rgb}{0,0,0.6}
\newcommand{\ifargdef}[3][{}]{\ifthenelse{\equal{#2}{}}{#1}{#3}}
\newlength{\hangind}
\newcommand{\myhangindent}[1]{\settowidth{\hangind}{\widthof{#1}}\hangindent=\the\hangind}
\newenvironment{highlight}{\begin{quote}\itshape}{\end{quote}}
\newenvironment{properties}[2][2em]
{\begin{enumerate}[label={\textsc{(#2\arabic*)}},leftmargin=#1]}
{\end{enumerate}} 
\newenvironment{listingroman}
{\begin{enumerate}[label={(\roman*)}]}
{\end{enumerate}}
\newenvironment{rmklist}
{\begin{enumerate}[label={(\arabic*)},itemindent=2em,leftmargin=0em]}
{\end{enumerate}}
\newenvironment{deflist}
{\begin{enumerate}[label={(\arabic*)}]}
{\end{enumerate}}
\newenvironment{expstep}
{\begin{itemize}[label={$\blacktriangleright$},leftmargin=1.5em]}
{\end{itemize}}
\newenvironment{expsubstep}
{\begin{itemize}[label={$\vartriangleright$},leftmargin=3em]}
{\end{itemize}}
\newcommand{\expkwd}[1]{\noindent\textbf{#1}}
\renewcommand{\qedsymbol}{$_\blacksquare$}
\providecommand{\qedhere}{\hfill\qedsymbol}
\newtheoremstyle{claim}
	{\topsep}{\topsep}%
	{\itshape}
	{}
	{}
	{}
	{.5em}
	{{\bfseries\boldmath\thmname{#1} \thmnumber{#2}} \thmnote{(#3)}}
\newtheoremstyle{definition}
	{\topsep}{\topsep}%
	{}
	{}
	{}
	{}
	{.5em}
	{\textbf{\thmname{#1} \thmnumber{#2}} \thmnote{(#3)}}
\newtheoremstyle{algorithm}
	{\topsep}{\topsep}%
	{}
	{}
	{\bfseries\boldmath}
	{}
	{\newline}
	{\thmname{#1} \thmnumber{#2} \thmnote{(#3)}}
\declaretheorem[style=claim,numberwithin=section]{theorem}
\declaretheorem[style=claim,sibling=theorem]{lemma}
\declaretheorem[style=claim,sibling=theorem]{corollary}
\declaretheorem[style=claim,sibling=theorem]{proposition}
\declaretheorem[style=definition,sibling=theorem]{definition}
\declaretheorem[style=definition,sibling=theorem,name={Definition \& Notation}]{defnotation}
\declaretheorem[style=definition,sibling=theorem]{model}
\declaretheorem[style=definition,sibling=theorem]{problem}
\declaretheorem[style=definition,sibling=theorem,qed=$\Diamond$]{remark}
\declaretheorem[style=algorithm,sibling=theorem,%
	preheadhook={\begin{mdframed}[style=emphframe] \setcounter{mpfootnote}{\value{footnote}}},%
	postfoothook=\setcounter{footnote}{\value{mpfootnote}}\end{mdframed}]{experiment}
\declaretheorem[style=algorithm,sibling=theorem,%
	preheadhook={\begin{mdframed}[style=emphframe] \setcounter{mpfootnote}{\value{footnote}}},%
	postfoothook=\setcounter{footnote}{\value{mpfootnote}}\end{mdframed}]{algorithm}
\newcommand{\opleft}[1]{\mathopen{}\left#1}
\newcommand{\opright}[1]{\right#1\mathclose{}}
\newcommandx{\braces}[4]{%
\ifstrequal{#3}{normal}{#1#4#2}{%
\ifstrequal{#3}{auto}{\left#1#4\right#2}{%
\ifstrequal{#3}{opauto}{\opleft#1#4\opright#2}{%
#3#1#4#3#2}}}%
}
\newcommandx{\opannot}[3][3=\downarrow]{\stackrel{\mathclap{\substack{#1 \\ #3 \vspace{2pt}}}}{#2}}
\newcommandx{\lineannot}[3][3=\rightarrow]{\mathllap{\boxed{\text{\textsmaller{#1}}} #3} #2}
\newcommandx{\multilineannot}[4][4=\rightarrow]{\mathllap{\boxed{\parbox{#1}{\RaggedRight\textsmaller{#2}}} #4} #3}
\newcommand{\N}{\mathbb{N}} 
\newcommand{\R}{\mathbb{R}} 
\newcommand{\C}{\mathbb{C}} 
\renewcommand{\iff}{\Leftrightarrow} 
\newcommand{\suchthat}[1][normal]{\ifstrequal{#1}{normal}{\mid}{#1|}} 
\newcommand{\setcompl}[1]{#1^c} 
\newcommand{\cardinality}[1]{\abs{#1}} 
\newcommand{\intersec}{\cap} 
\newcommandx{\intvcl}[3][1=normal]{\braces{[}{]}{#1}{#2, #3}} 
\newcommandx{\intvop}[3][1=normal]{\braces{(}{)}{#1}{#2, #3}} 
\newcommandx{\intvclop}[3][1=normal]{\braces{[}{)}{#1}{#2, #3}} 
\newcommandx{\intvopcl}[3][1=normal]{\braces{(}{]}{#1}{#2, #3}} 
\DeclareMathOperator*{\argmin}{argmin} 
\DeclareMathOperator{\sign}{sign}
\newcommandx{\abs}[2][1=normal]{\braces{\lvert}{\rvert}{#1}{#2}} 
\newcommandx{\ceil}[2][1=normal]{\braces{\lceil}{\rceil}{#1}{#2}} 
\newcommandx{\floor}[2][1=normal]{\braces{\lfloor}{\rfloor}{#1}{#2}} 
\newcommandx{\round}[2][1=normal]{\braces{[}{]}{#1}{#2}} 
\newcommandx{\der}[1]{D^{#1}} 
\newcommandx{\gradient}{\nabla} 
\newcommandx{\partder}[4][1={},4={}]{\frac{\partial^{#4} #2}{\partial #3^{#4}}\ifargdef{#1}{\Big|_{#1}}} 
\newcommandx{\integ}[4][1={},2={}]{\int_{#1}^{#2} #3 \, #4} 
\newcommandx{\asympffaster}[2][1=normal]{o\braces{(}{)}{#1}{#2}} 
\newcommandx{\asympfaster}[2][1=normal]{O\braces{(}{)}{#1}{#2}} 
\newcommandx{\asympeq}[2][1=normal]{\Theta\braces{(}{)}{#1}{#2}} 
\newcommandx{\asympsslower}[2][1=normal]{\omega\braces{(}{)}{#1}{#2}} 
\newcommandx{\asympslower}[2][1=normal]{\Omega\braces{(}{)}{#1}{#2}} 
\DeclareMathOperator{\Id}{Id} 
\newcommand{\matr}[1]{\begin{bmatrix} #1 \end{bmatrix}} 
\newcommand{\smallmatr}[1]{\left[\begin{smallmatrix} #1 \end{smallmatrix}\right]} 
\newcommandx{\norm}[2][1=normal]{\braces{\|}{\|}{#1}{#2}} 
\renewcommandx{\sp}[3][1=normal]{\braces{\langle}{\rangle}{#1}{#2, #3}} 
\newcommandx{\End}[2][2={}]{\mathcal{L}\opleft( #1 \ifargdef{#2}{, #2} \opright)} 
\newcommand{\orthcompl}[1]{{#1}^\perp} 
\DeclareMathOperator{\spann}{\operatorname{span}} 
\newcommand{\T}{\mathsf{T}} 
\renewcommand{\vec}[1]{\boldsymbol{#1}} 
\newcommandx{\measure}[2][1=normal]{\operatorname{vol}\braces{(}{)}{#1}{#2}} 
\newcommand{\indset}[1]{\chi_{#1}} 
\DeclareMathOperator{\supp}{supp} 
\newcommandx{\Leb}[3][1={},3=normal]{L^{#2}\ifargdef{#1}{\braces{(}{)}{#3}{#1}}{}} 
\newcommandx{\Lebnorm}[4][1=normal,3={2},4={}]{\norm[#1]{#2}_{#3}} 
\renewcommandx{\l}[3][1={},3=normal]{\ell^{#2}\ifargdef{#1}{\braces{(}{)}{#3}{#1}}} 
\newcommandx{\lnorm}[4][1=normal,3={2},4={}]{\norm[#1]{#2}_{#3}} 
\newcommandx{\Smooth}[4][1={},3={},4=normal]{C_{#3}^{#2}\ifargdef{#1}{\braces{(}{)}{#4}{#1}}} 
\newcommandx{\Schwartz}[2][1={},2=normal]{\mathscr{S}\ifargdef{#1}{\braces{(}{)}{#2}{#1}}} 
\newcommandx{\Schwartzpoly}[2][1=normal]{\braces{\langle}{\rangle}{#1}{\abs[#1]{#2}} } 
\newcommandx{\Tempdistr}[2][1={},2=normal]{\mathscr{S}'\ifargdef{#1}{\braces{(}{)}{#2}{#1}}} 
\newcommandx{\distrinp}[3][1=normal]{\braces{\langle}{\rangle}{#1}{#2, #3}} 
\newcommandx{\ft}[3][1=default,2=auto]{
\ifstrequal{#1}{default}{\widehat{#3}}{
\ifstrequal{#1}{long}{{\braces{(}{)}{#2}{#3}}^{\wedge}}{}}} 
\newcommandx{\ift}[3][1=default,2=auto]{
\ifstrequal{#1}{default}{\check{#3}}{
\ifstrequal{#1}{long}{{\braces{(}{)}{#2}{#3}}^{\vee}}{}}} 
\newcommand{\erf}{\operatorname{erf}} 
\newcommand{\define}[1]{\emph{#1}}
\newcommand{\y}{\vec{y}} 
\newcommand{\A}{\vec{A}} 
\renewcommand{\a}{\vec{a}} 
\newcommand{\x}{\vec{x}} 
\newcommand{\xsparse}{\bar{\vec{x}}} 
\newcommand{\grtr}{\vec{x}^\ast} 
\newcommand{\sparse}{\vec{x}^\ast_{\text{sp}}} 
\newcommand{\compr}{\vec{x}^\ast_{\text{c}}} 
\newcommand{\grtrsparse}{\bar{\vec{x}}^\ast} 
\newcommand{\comprsparse}{\bar{\vec{x}}^\ast_{\text{c}}} 
\newcommand{\signalclass}{\mathcal{X}}
\newcommandx{\solu}[1][1={}]{\ifargdef[\hat{\vec{x}}]{#1}{\hat{#1}}} 
\newcommand{\h}{\vec{h}} 
\renewcommand{\v}{\vec{v}} 
\newcommand{\vnull}{\vec{0}} 
\newcommand{\sset}{K} 
\newcommand{\dict}{\vec{D}} 
\newcommand{\aop}{\vec{\Psi}} 
\newcommand{\aopr}{\vec{\Psi_{\texttt{rdwt}}}} 
\newcommand{\aoprD}{\vec{\Psi_{\texttt{irdwt-2}}}} 
\newcommand{\aopi}{\vec{\Psi_{\texttt{irdwt}}}} 
\newcommand{\aopd}{\vec{\Psi_{\texttt{dwt}}}} 
\newcommand{\aoptv}{\vec{\Psi_{\text{TV-1}}}} 
\newcommand{\aoptvD}{\vec{\Psi_{\text{TV-2}}}} 
\newcommand{\aopdct}{\vec{\Psi}_{\texttt{dct-2}}} 
\newcommand{\aoprand}{\vec{\Psi_{\text{rand}}}} 
\newcommand{\aopart}{\tilde{\vec{\Psi}}} 
\newcommand{\avecart}{\tilde{\vec{\psi}}} 
\newcommand{\avec}{\vec{\psi}} 
\newcommand{\gram}{\vec{G}} 
\newcommand{\proj}[1]{\vec{P}_{#1}} 
\newcommand{\I}[1]{\vec{I}_{#1}} 
\newcommand{\ssupp}[1][{}]{\mathcal{S}_{#1}} 
\newcommand{\ssuppc}[1][{}]{\setcompl{\mathcal{S}}_{#1}} 
\newcommand{\lam}[1][{}]{\Lambda} 
\newcommand{\anasign}[1]{\vec{\sigma}_{#1}}
\newcommand{\spparam}[1]{S_{\ssupp[#1]}^{\anasign{#1}}}
\newcommand{\cospparam}[1]{L_{\ssuppc[#1]}}
\newcommand{\cospparamdg}[1]{L_{\ssuppc[#1]}^{\vec{\smallsetminus}}}
\newcommand{\samplecompl}[1]{\mathcal{M}(#1)}
\newcommand{\ed}[1]{\delta(#1)}
\newcommand{\fbu}{b} 
\newcommand{\fbl}{a} 
\newcommand{\ananorm}[1]{\lnorm{\aop #1}[1]}
\newcommandx{\subdiff}[2][2={}]{\partial #1\ifargdef{#2}{(#2)}{}}
\newcommand{\dv}{\vec{w}}
\newcommand{\dvparam}{\alpha}
\newcommandx{\clip}[3][1=normal]{\operatorname{clip}\braces{(}{)}{#1}{#2; #3}}
\newcommand{\projnsp}[1][{}]{U_{#1}}
\newcommand{\noise}{\vec{e}} 
\newcommand{\noiseparam}{\eta} 
\newcommand{\probsuccess}{u} 
\newcommandx{\prob}[2][1={},2=normal]{\mathbb{P}\ifargdef{#1}{\braces{[}{]}{#2}{#1}}}
\newcommandx{\mean}[2][1={},2=normal]{\mathbb{E}\ifargdef{#1}{\braces{[}{]}{#2}{#1}}}
\newcommandx{\var}[2][1={},2=normal]{\mathbb{V}\ifargdef{#1}{\braces{[}{]}{#2}{#1}}}
\newcommand{\distributed}{\sim}
\newcommandx{\Normdistr}[3][1=normal]{\mathcal{N}\braces{(}{)}{#1}{#2, #3}} 
\newcommandx{\normsubg}[2][1=normal]{\norm[#1]{#2}_{\psi_2}} 
\newcommand{\subgparam}{\kappa} 
\newcommand{\gaussian}{\vec{g}} 
\newcommand{\gaussianuniv}{g} 
\newcommand{\Covmatr}{\vec{\Sigma}} 
\newcommand{\objfunc}{f}
\newcommandx{\anorm}[3][1=normal,3={\sset}]{\norm[#1]{#2}_{#3}} 
\newcommand{\pospart}[1]{\left[#1\right]_+}
\newcommandx{\ball}[2][1={},2={}]{B_{#1}^{#2}} 
\renewcommand{\S}{S} 
\newcommand{\meanwidth}[2][{}]{w_{#1}(#2)} 
\newcommand{\effdim}[2][{}]{w_{#1}^2(#2)} 
\newcommand{\conic}{\wedge} 
\newcommand{\cone}[1]{\operatorname{cone}(#1)} 
\newcommand{\descset}[1]{\mathscr{D}(#1)} 
\newcommand{\desccone}[1]{\mathscr{D}_{\conic}(#1)} 
\newcommand{\subd}[1]{\partial #1} 
\newcommand{\basis}{\vec{B}}
\begin{document}

\pagestyle{scrheadings}

\begin{center}
	\bfseries\larger[2]{$\l{1}$-Analysis Minimization and Generalized (Co-)Sparsity:\\ When Does Recovery Succeed?}
\end{center}

\vspace{1\baselineskip}
\begin{addmargin}[2em]{2em}
\begin{center}
\noindent{\normalsize\bfseries{Martin Genzel \qquad Gitta Kutyniok \qquad Maximilian März}}

\vspace{.5\baselineskip}
{\smaller\noindent Technische Universit\"at Berlin, Department of Mathematics \\
Straße des 17. Juni 136, 10623 Berlin, Germany

\noindent E-Mail: \href{mailto:genzel@math.tu-berlin.de}{\texttt{[genzel,kutyniok,maerz]@math.tu-berlin.de}}}
\end{center}


\vspace{1\baselineskip}
{\smaller
\noindent\textbf{Abstract.}
This paper investigates the problem of signal estimation from undersampled noisy sub-Gaussian measurements under the assumption of a cosparse model.
Based on generalized notions of sparsity, we derive novel recovery guarantees for the $\l{1}$-analysis basis pursuit, enabling accurate predictions of its sample complexity.
The corresponding bounds on the number of required measurements do explicitly depend on the Gram matrix of the analysis operator and therefore particularly account for its mutual coherence structure.
Our findings defy conventional wisdom which promotes the \mbox{sparsity} of analysis coefficients as the crucial quantity to study.
In fact, this common paradigm breaks down completely in many situations of practical interest, for instance, when applying a redundant (multilevel) frame as analysis prior.
By extensive numerical experiments, we demonstrate that, in contrast, our theoretical sampling-rate bounds reliably capture the recovery capability of various examples, such as redundant wavelets systems, total variation, or random frames.
The proofs of our main results build upon recent achievements in the convex geometry of data mining problems. More precisely, we establish a sophisticated upper bound on the conic Gaussian mean width that is associated with the underlying $\l{1}$-analysis polytope.
Due to a novel localization argument, it turns out that the presented framework naturally extends to stable recovery, allowing us to incorporate compressible coefficient sequences as well.

\vspace{.5\baselineskip}
\noindent\textbf{Key words.}
Compressed sensing, cosparse modeling, analysis sparsity, $\l{1}$-analysis basis pursuit, stable recovery, redundant frames, total variation, Gaussian mean width

\vspace{.25\baselineskip}
\noindent\textbf{AMS subject classifications.} 42C15, 42C40, 65J22, 94A08, 94A20

}
\end{addmargin}
\newcommand{\shortauthor}{Genzel, Kutyniok, and März: $\l{1}$-Analysis Minimization and Generalized (Co-)Sparsity}


\thispagestyle{plain}

\section{Introduction}
\label{sec:intro}

Initiated by the pioneering works of Candès, Donoho, Romberg, and Tao \cite{candes2006cs,candes2006stable,donoho2006cs}, a considerable amount of research on \emph{compressed sensing} during the last decade has dramatically changed our methodology for exploiting structure in many signal processing tasks.
The classical setup in this field considers the problem of reconstructing an unknown \emph{sparse signal vector} $\grtr \in \R^n$ from \emph{non-adaptive, linear measurements} of the form
\begin{equation}\label{eq:intro:meas}
	\y = \A \grtr + \noise,
\end{equation}
where $\A \in \R^{m\times n}$ is a known \emph{sensing matrix} and $\noise \in \R^m$ captures potential distortions, usually due to noise.
The success of compressed sensing is based on the fundamental insight that---by explicitly incorporating a sparsity prior---this task becomes feasible even if $m \ll n$.
In particular, there exist numerous convex and greedy recovery methods that enjoy both efficient implementations and a rich theoretical foundation.
Among them, probably the most popular approach is the \emph{basis pursuit}~\cite{chen1998}:
\begin{equation}\label{eq:intro:bp}\tag{$\text{BP}_{\noiseparam}$}
	\min_{\x \in \R^n} \lnorm{\x}[1] \quad \text{subject to \quad $\lnorm{\A \x - \y} \leq \noiseparam$,}
\end{equation}
where $\noiseparam \geq 0$ is chosen such that $\lnorm{\noise} \leq \noiseparam$.
The crucial ingredient of \eqref{eq:intro:bp} is the $\l{1}$ objective functional which promotes sparse solutions of the minimization problem.
Remarkably, it can be shown that \eqref{eq:intro:bp} indeed recovers an $S$-sparse vector\footnote{That means, at most $S$ entries of $\grtr$ are non-zero, or more formally, $\lnorm{\grtr}[0] \coloneqq \cardinality{\supp(\grtr)} \leq S$.} $\grtr$  with the optimal sampling rate of $m = \asympfaster{S \cdot \log(2S/n)}$ if the measurement design $A$ is drawn according to an appropriate random distribution.

Although such types of theoretical guarantees are elegant and practically appealing, the traditional assumption of sparsity is not directly satisfied in most real-world applications. Fortunately, many signals-of-interest do at least exhibit a \emph{low-complexity representation} with respect to a certain transformation, for instance, the wavelet or Fourier transform.
In this work, we study the so-called \emph{analysis sparsity model} (also known as \emph{cosparse model}), which has gained increasing attention within the past years~\cite{candes2011csdict,nam2013,kabanava2015}.
The key idea of this approach is to test (``analyze'') the signal $\grtr \in \R^n$ with a collection of \emph{analysis vectors} $\avec_1, \dots, \avec_N \in \R^n$, i.e., one computes
\begin{equation}\label{eq:intro:anacoeff}
	\aop\grtr = (\sp{\avec_1}{\grtr}, \dots, \sp{\avec_N}{\grtr}) \in \R^N,
\end{equation}
where the matrix $\aop \coloneqq [\avec_1 \dots \avec_N]^\T \in \R^{N\times n}$ is called the \emph{analysis operator}. If $\aop$ is able to reflect the underlying structure of $\grtr$, one might expect that these \emph{analysis coefficients} are dominated by only a few large entries.
This hypothesis of transform sparsity motivates the following generalization of \eqref{eq:intro:bp} that is typically referred to as the \emph{analysis basis pursuit} (or \emph{$\l{1}$-analysis minimization}):
\begin{equation}\label{eq:intro:anabp}\tag{$\text{BP}_{\noiseparam}^{\aop}$}
	\min_{\x \in \R^n} \ananorm{\x} \quad \text{subject to \quad $\lnorm{\A \x - \y} \leq \noiseparam$.}
\end{equation}
\newcommand{\refabpnoiseless}[2]{(\hyperref[eq:intro:anabp]{$\text{BP}_{#1}^{#2}$})}%
In fact, the adapted strategy of \eqref{eq:intro:anabp} has turned out to work surprisingly well for numerous problem settings, such as in total variation minimization~\cite{rudin1992,chambolle2004} or for multiscale transforms in classical signal and image processing tasks~\cite{cai2010,plonka2011,lustig2008,vander2013shearlet}. Apart from that, it has become also relevant to regularizing physics-driven inverse problems~\cite{kitic2016,nam2012} and operator learning approaches based on test data sets \cite{rubinstein2013ksvd,hawe2013,chen2014learning,bian2016learning}. 
More recently, the cosparse model has been connected to topics in deep neural networks as well, e.g., in the context of multi-layer sparse coding \cite{aberdam2019holistic}.
But despite these successful applications, many theoretical properties of the analysis basis pursuit remain unexplored and ``its rigorous understanding is still in
its infancy'' \cite[p.~174]{kabanava2015}.

\subsection{Does (Co-)Sparsity Explain the Success (or Failure) of the Analysis Basis Pursuit?}
\label{subsec:intro:issues}

As already foreshadowed by \eqref{eq:intro:anacoeff}, a central objective of the analysis formulation is to come up with an operator $\aop \in \R^{N \times n}$ that provides a coefficient vector $\aop \grtr$ of ``low complexity.'' The traditional theory of compressed sensing---where $\aop$ is just the identity---would suggest that the sparsity of $\aop \grtr$ is the key concept to look at. Indeed, a large part of the related literature precisely builds upon this intuition.
Although many of those approaches rely on different proof strategies, e.g., the $\dict$-RIP \cite{candes2011csdict} or conic geometry \cite{kabanava2015}, they eventually promote results of a very similar type: Recovery of $\grtr \in \R^n$ via \eqref{eq:intro:anabp} succeeds if the number of measurements obeys
\begin{equation}\label{eq:intro:boundliterature}
	m \geq C \cdot S \cdot \operatorname{PolyLog}(\tfrac{2N}{S}),
\end{equation}
where $S \coloneqq \lnorm{\aop \grtr}[0]$ and $C > 0$ is a constant that might depend on $\aop$.\footnote{In order to indicate that $S$ is associated with the space of analysis coefficients $\R^N$, we have decided to use a capital letter, whereas small letters are perhaps more common in the literature when referring to sparsity.}
This bound on the sampling rate clearly resembles classical guarantees for the basis pursuit \eqref{eq:intro:bp} and therefore forms a quite natural extension towards analysis sparsity.

Another important branch of research takes a somewhat contrary perspective, and identifies the \emph{cosparsity} $L \coloneqq N - S$ as a crucial quantity for the success of the analysis model.
A remarkable observation of \cite{nam2013} was that the location of vanishing coefficients in $\aop \grtr$ is the driving force behind the analysis formulation, rather than the number of non-zero components.
This viewpoint naturally leads to the so-called \emph{cosparse signal model}, which is typically described by \emph{union-of-subspaces} \cite{blumensath2009}.
Following this terminology, it has turned out that successful recovery via combinatorial searching can be guaranteed if the number of observations is of the order of the signal's manifold dimension \cite[Sec.~3]{nam2013}.
However, such a simple relationship does not seem to carry over to tractable methods like \eqref{eq:intro:anabp}, see \cite{giryes2015}.
Indeed, many sampling-rate bounds relying on cosparsity can be easily translated into sparsity-based statements, simply meaning that $S$ is replaced by $N - L$ in \eqref{eq:intro:boundliterature}, e.g., see \cite[Thm. 1]{kabanava2015} or \cite[Thm.~3.8]{giryes2014}.

\enlargethispage{.5\baselineskip}
Even though the above approaches are quite appealing due to their simplicity, interpretability, and consistency, it still remains unclear whether they are sufficient for a sound foundation of \eqref{eq:intro:anabp} in its general form.
The notions of analysis \mbox{(co-)sparsity} are completely determined by the support of $\aop \grtr$, which in turn does not account for the coherence structure of the individual analysis vectors $\avec_1, \dots, \avec_N$; in other words, the underlying ``geometry'' of $\aop$ is ignored.

To get a first glimpse of this issue, let us consider a simple example: Figure~\ref{fig:intro:1d} shows the results of a numerical simulation that reconstructs a block-signal (see Figure~\ref{fig:intro:1d:signal}) using three different analysis operators.
The plot of Figure~\ref{fig:intro:1d:pt_frame} exhibits the phase transition behavior of \refabpnoiseless{\noiseparam=0}{\aop} for a \emph{redundant, discrete Haar wavelet transform} $\aopr$ and the analysis operator $\aopi$ associated with the inverse wavelet transform (i.e., a certain dual frame of $\aopr$; see Subsection~\ref{subsec:intro:notation}\ref{item:intro:notation:frames}).
Although the \mbox{(co-)sparsity} ($S = 906$ and $L = 886$) are exactly the same for both choices, their recovery capability indeed differs dramatically!
In conclusion, just investigating the parameters $S$ and $L$ does by far not explain why the transition of $\aopi$ happens much earlier ($m \approx 85$) than the one of $\aopr$ ($m \approx 240$).
Even more striking, the prediction of \eqref{eq:intro:boundliterature} deviates from the truth by orders of magnitudes.

The plot of Figure~\ref{fig:intro:1d:pt_onb} reveals another insight: While \eqref{eq:intro:boundliterature} is reliable for an \emph{orthonormal Haar wavelet transform} $\aopd \in \R^{n \times n}$, a comparison with the actual recovery rate of \refabpnoiseless{\noiseparam=0}{\aopi} indicates that \emph{redundancy} can be beneficial in the analysis model.
Finally, it is also worth mentioning that a \emph{compressibility} argument does not ``save the day'' here: Figure~\ref{fig:intro:1d:decay} demonstrates that the transitions of \refabpnoiseless{\noiseparam=0}{\aopi} and \refabpnoiseless{\noiseparam=0}{\aopr} both take place far before the remaining coefficients would be negligibly small.

\begin{figure}
	\centering
	\begin{subfigure}[t]{0.45\textwidth}
		\centering
		\includegraphics[width=\textwidth]{Images/IntroFig/signal.png}
		\caption{}
		\label{fig:intro:1d:signal}
	\end{subfigure}%
	\qquad
	\begin{subfigure}[t]{0.45\textwidth}
		\centering
		\includegraphics[width=\textwidth]{Images/IntroFig/PT_Frame.png}
		\caption{}
		\label{fig:intro:1d:pt_frame}
	\end{subfigure}%
	
	\vspace{.5\baselineskip}
	\begin{subfigure}[t]{0.45\textwidth}
		\centering
		\includegraphics[width=\textwidth]{Images/IntroFig/PT_ONB_r.png}
		\caption{}
		\label{fig:intro:1d:pt_onb}
	\end{subfigure}%
	\qquad
	\begin{subfigure}[t]{0.45\textwidth}
		\centering
		\includegraphics[width=\textwidth]{Images/IntroFig/decay_r.png}
		\caption{}
		\label{fig:intro:1d:decay}
	\end{subfigure}%
	\caption{The phase transition of \refabpnoiseless{\noiseparam=0}{\aop} for wavelets in 1D with noiseless Gaussian measurements ($\noiseparam = 0$). For a detailed description of this experiment, see Subsection~\ref{subsec:appl:wavelets}. \subref{fig:intro:1d:signal}~Signal vector $\grtr \in \R^{256}$. \subref{fig:intro:1d:pt_frame}~Success rate of exact recovery via \refabpnoiseless{\noiseparam=0}{\aopr} and \refabpnoiseless{\noiseparam=0}{\aopi}, respectively. \subref{fig:intro:1d:pt_onb}~Success rate of exact recovery via \refabpnoiseless{\noiseparam=0}{\aopd}. \subref{fig:intro:1d:decay}~Decay behavior of the analysis coefficients; the sorted and $\l{1}$-normalized magnitudes of the coefficients $\aopr\grtr$ and $\aopi\grtr$ exhibit a very similar behavior; the sorted coefficients of $\aopd\grtr$ have been scaled manually to a comparable order of magnitude.}
	\label{fig:intro:1d}
\end{figure}

We emphasize that the above example is not too specific or artificial, but rather illustrates a scenario that often occurs in applications: 
Due to linear dependencies within $\aop$, the analysis sparsity oftentimes cannot become arbitrarily small.
For instance, if $\aop$ corresponds to a highly redundant dictionary, one typically has $S \gg n$, whereas the true sample complexity is relatively small (below the space dimension $n$).
This important phenomenon gives rise to the following fundamental question:
\begin{highlight}
	If \mbox{(co-)sparsity} does not fully explain what is happening, which general principles lead to success or failure of the analysis basis pursuit \eqref{eq:intro:anabp}?
\end{highlight}
To be more precise about this concern, let us formulate three central issues that we will address in this paper:
\begin{properties}[3em]{Q}
\item\label{quest:intro:samplecompl}
	\emph{Sampling rate.} How many measurements are needed for an accurate estimate of $\grtr$ via \eqref{eq:intro:anabp}?
	What crucial quantities and parameters determine the \emph{sample complexity}?
\item\label{quest:intro:compressibility}
	\emph{Compressibility.} What if $\grtr$ is only \define{compressible}, i.e., its analysis coefficients $\aop\grtr$ are not exactly sparse but only close to a sparse vector? Is \eqref{eq:intro:anabp} stable under such model inaccuracies?
\item\label{quest:intro:interpretability}
	\emph{Interpretability.} What practical guidelines can we derive from our recovery results? Which characteristics of $\aop$ deserve special attention in applications?
\end{properties}

\subsection{Main Contributions and Overview}
\label{subsec:intro:contrib}

A major concern of this work is to shed more light on the problem of $\l{1}$-analysis recovery and to provide a deeper understanding of its underlying mechanisms.
Our first main result in Section~\ref{sec:results} (Theorem~\ref{thm:results:recovery}) focuses on the setup of noiseless Gaussian measurements, i.e., $\noise = \vnull$ in \eqref{eq:intro:meas} and the entries of $\A$ are independent standard Gaussians. In this prototypical situation, we can even expect an \emph{exact} reconstruction of $\grtr$ via \refabpnoiseless{\noiseparam=0}{\aop}, supposed that $m$ is sufficiently large.
In fact, Theorem~\ref{thm:results:recovery} states a \emph{non-asymptotic} and \emph{non-uniform} bound on the sampling rate which intends to meet the following desiderata:
\begin{listingroman}
\item\label{item:intro:desiderata:accurate}
	\emph{Accurate.} The number of required measurements should be close to the (optimal) sample complexity.
\item\label{item:intro:desiderata:computable}
	\emph{Computable.} The expressions should be explicit with respect to the underlying analysis operator $\aop$ and numerically evaluable. 
\item\label{item:intro:desiderata:interpretable}
	\emph{Interpretable.} The involved parameters should have a clear meaning and resemble well-known principles.
\item\label{item:intro:desiderata:generic}
	\emph{Generic.} The bound should not be tailored to a specific operator $\aop$ (e.g., wavelets or total variation), but apply in a general setting.
\end{listingroman}
The key ingredients of our bound are three parameters, generalizing the notions of sparsity and cosparsity (Definition~\ref{def:results:generalsparsity}). While the support of the coefficients $\aop\grtr$ still plays a major role, these novel quantities do also take account of the \emph{(mutual) coherence structure} of $\aop$ by incorporating the (off-diagonal) entries of its Gram matrix. Indeed, this refinement will allow us to satisfy the above ``wish list'' for many different examples of interest.
Besides the well-known case of orthonormal bases, this particularly includes \emph{highly redundant} analysis operators, which are only poorly understood in that context so far.

The proof of Theorem~\ref{thm:results:recovery} (provided in Section~\ref{sec:proofs}) loosely follows a recent technique proposed by \cite[Recipe~4.1]{amelunxen2014edge}. More precisely, we establish a highly non-trivial upper bound on the \emph{Gaussian mean width}\footnote{This is equivalent to bounding the so-called \emph{statistical dimension}, introduced in \cite{amelunxen2014edge}. But we will keep using the notion of \emph{Gaussian mean width}, which is more common in the literature.} of the descent cone of $\x \mapsto \lnorm{\aop \x}[1]$ at $\grtr$ (Theorem~\ref{thm:proofs:mainresults:scbound}), which is known to characterize the \emph{phase transition behavior} of \refabpnoiseless{\noiseparam=0}{\aop} for Gaussian measurements.
Noteworthy, Kabanava, Rauhut, and Zhang follow a very similar proof strategy in \cite{rkz2015}. It is therefore not surprising that their approach bears a certain resemblance to ours, but there are in fact crucial differences as a more detailed comparison in Section~\ref{sec:appl} and Subsection~\ref{subsec:discussion:further} will show.

Our second guarantee (Theorem~\ref{thm:results:robuststable}) tackles the problem of stability \ref{quest:intro:compressibility} in the setting of noisy sub-Gaussian measurements.
It is based on the simple geometric idea of approximating the ground truth $\grtr$ by another signal vector $\grtrsparse \in \R^n$ of lower complexity, in the sense that its associated Gaussian mean width is significantly smaller.
This novel result is achieved by means of a localized variant of the (conic) Gaussian mean width \cite{mendelson2007reconstruction} which allows us to quantify the mismatch caused by approximating $\grtr$.
Let us emphasize that this approach goes beyond the ``naive'' reasoning for compressible vectors that just suggests determining the best $S$-term approximation of the analysis coefficient vector $\aop\grtr$.

Section~\ref{sec:appl} presents several applications of the framework developed in Section~\ref{sec:results}.
A special focus here is on the little studied example of redundant wavelet frames, but the variety of our results will be also demonstrated for other important choices of $\aop$, such as the finite difference operator and random frames.
In this course, we assess the predictive quality of our theoretical findings by means of various numerical experiments.
Section~\ref{sec:discussion} is then dedicated to an overview of the related literature, revisiting the initial concern of Subsection~\ref{subsec:intro:issues}: \emph{Does \mbox{(co-)sparsity} explain the recovery performance of \eqref{eq:intro:anabp}?}
This discussion is concluded by Section~\ref{sec:concl}, where we address the issues of \ref{quest:intro:interpretability} in greater detail:
Even though the main objective of this work is a mathematical foundation, our novel perspective on \mbox{(co-)sparsity} might have practical implications as well, for instance, when designing or learning an analysis operator for a specific task.
Finally, we point out some unresolved challenges left for future works, in particular, the question of how close to optimal our prediction bounds are.

\subsection{General Notation}
\label{subsec:intro:notation}

Let us fix some notations and conventions that will be frequently used in the remainder of this paper:
\begin{rmklist}
\item\label{item:intro:notation:vecmatr}
	For an integer $d \in \N$, we set $[d] \coloneqq \{1, \dots, d\}$. If $\mathcal{I} \subset [d]$, the set complement of $\mathcal{I}$ in $[d]$ is given by $\setcompl{\mathcal{I}} \coloneqq [d] \setminus \mathcal{I}$. Vectors and matrices are denoted by lower- and uppercase boldface letters, respectively. Unless stated otherwise, their entries are indicated by subscript indices and lowercase letters, e.g., $\x = (x_1, \dots, x_d) \in \R^d$ for a vector and $\vec{B} = [b_{k,k'}] \in \R^{d\times d'}$ for a matrix.
	For $\mathcal{I} \subset [d]$, the vector $\x_{\mathcal{I}} \in \R^{\cardinality{\mathcal{I}}}$ is the restriction of $\x \in \R^d$ to the components of $\mathcal{I}$. Similarly, $\vec{B}_{\mathcal{I}} \in \R^{\cardinality{\mathcal{I}}\times d'}$ restricts a matrix $\vec{B} \in \R^{d \times d'}$ to the rows corresponding to $\mathcal{I}$.
\item\label{item:intro:notation:lp}
	Let $\x = (x_1, \dots, x_d) \in \R^d$. The \emph{support} of $\x$ is defined by the set of its non-zero entries $\supp(\x) \coloneqq \{ k \in [d] \suchthat x_k \neq 0 \}$ and the \emph{sparsity} of $\x$ is $\lnorm{\x}[0] \coloneqq \cardinality{\supp(\x)}$. 
	For $1 \leq p \leq \infty$, we denote the \emph{$\l{p}$-norm} on $\R^d$ by $\lnorm{\cdot}[p]$. The associated \emph{unit ball} is given by $\ball[p][d] \coloneqq \{ \x \in \R^d \suchthat \lnorm{\x}[p] \leq 1 \}$ and the \emph{Euclidean unit sphere} is $S^{d-1} \coloneqq \{ \x \in \R^d \suchthat \lnorm{\x} = 1 \}$. 
	If $\Sigma_S^d \coloneqq \{ \x \in \R^d \suchthat \lnorm{\x}[0] \leq S \}$ denotes the set of all \emph{$S$-sparse} vectors in $\R^d$, the \emph{best $S$-term approximation error} (with respect to the $\l{p}$-norm) of a vector $\x' \in \R^d$ is
	\begin{equation}
		\sigma_S(\x')_p \coloneqq \min_{\x \in \Sigma_S^d} \lnorm{\x' - \x}[p].
	\end{equation}
\item\label{item:intro:notation:convexgeo}
	We denote the \emph{conic hull} of a set $\sset \subset \R^d$ by $\cone{\sset}$. If $\projnsp \subset \R^d$ is a linear subspace, the associated \emph{orthogonal projection onto $\projnsp$} is denoted by $\proj{\projnsp} \in \R^{d \times d}$.
	Then, we have $\proj{\orthcompl{\projnsp}} = \I{d} - \proj{\projnsp}$, where $\orthcompl{\projnsp} \subset \R^d$ is the orthogonal complement of $\projnsp$ and $\I{d} \in \R^{d \times d}$ is the identity matrix.
\item\label{item:intro:notation:rndvec}
	If $\gaussian$ is a mean-zero Gaussian random vector in $\R^d$ with covariance matrix $\Covmatr \in \R^{d \times d}$, we just write $\gaussian \distributed \Normdistr{\vnull}{\Covmatr}$. Moreover, a random vector $\a$ in $\R^d$ is \emph{sub-Gaussian} if $\normsubg{\a} < \infty$, where $\normsubg{\cdot}$ is the sub-Gaussian norm; see \cite[Def.~5.22]{vershynin2012random}.
\item\label{item:intro:notation:frames}
	Let $\aop \in \R^{N \times n}$ be a matrix with row vectors $\avec_1, \dots, \avec_N \in \R^n$. Then, the collection $\mathcal{F} \coloneqq \{\avec_k\}_{k \in [N]} \subset \R^n$ forms a \emph{frame} for $\R^n$ with frame bounds $0 < \fbl \leq \fbu < \infty$ if
	\begin{equation}
		\fbl \cdot \lnorm{\x}^2 \leq \lnorm{\aop \x}^2 \leq \fbu \cdot \lnorm{\x}^2 \quad \text{for all $\x \in \R^n$.}
	\end{equation}
	If $\fbl = \fbu$, then $\mathcal{F}$ is said to be a \emph{tight frame}.
	If there is no danger of confusion, we will identify the \emph{analysis operator} $\aop$ with $\mathcal{F}$.
	Finally, we call a frame $\tilde{\aop} \in \R^{N \times n}$ a \emph{dual frame} of $\aop$ if ${\tilde{\aop}}^\T \aop = \I{n}$. 
	See \cite{casazza2012finite} for a detailed introduction to finite-dimensional frame theory.
\item\label{item:intro:notation:clip}
	For $s, t \in \R$, we define the \emph{clip function} $\clip{s}{t} \coloneqq \sign(s) \cdot \min\{\abs{s}, t\}$ and the \emph{positive part} $\pospart{s} \coloneqq \max\{s, 0\}$.
\item\label{item:intro:notation:constants}
	The letter $C$ is always reserved for a (generic) constant, whose value could change from time to time.
	We refer to $C$ as a \emph{numerical constant} if its value does not depend on any other involved parameter.
	If an \mbox{(in-)equality} holds true up to a numerical constant $C$, we sometimes simply write $A \lesssim B$ instead of $A \leq C \cdot B$. 
\end{rmklist}

\section{Main Results}
\label{sec:results}

In this section, we present the main theoretical results of this work. Starting with some notations and model assumptions in Subsection~\ref{subsec:results:setup}, our novel sparsity parameters are introduced in Subsection~\ref{subsec:results:parameters}.
Based on these central notions, Subsection~\ref{subsec:results:samplecompl} then establishes an exact recovery result for noiseless Gaussian observations.
Finally, in Subsection~\ref{subsec:results:compressible}, we show that the analysis basis pursuit \eqref{eq:intro:anabp} is also robust against noise and stable under model inaccuracies, meaning that the analysis coefficient vector is allowed to be compressible.
For the sake of readability, all corresponding proofs are postponed to Section~\ref{sec:proofs}.

\subsection{Model Setup and Notation}
\label{subsec:results:setup}

In this part, we state the standing assumptions and model hypothesis of our recovery framework; see also Table~\ref{tab:results:notation} for a summary.
The following notations and conventions are supposed to hold true for the remainder of this section.
Let us first recall the linear measurement scheme of \eqref{eq:intro:meas} and set up a formal random observation model:
\begin{model}[Noisy Linear Measurements]\label{model:results:setup:meas}
	Let $\grtr \in \R^n$ be a fixed vector, which is typically referred to as the \emph{signal} (or \emph{source}).
	The \define{sensing vectors} $\a_1, \dots, \a_m \in \R^n$ are assumed to be independent copies of an isotropic, mean-zero, sub-Gaussian random vector $\a$ in $\R^n$ with $\normsubg{\a} \leq \subgparam$.
	These vectors form the rows of the \emph{sensing matrix} $\A \coloneqq [\a_1 \dots \a_m]^\T \in \R^{m \times n}$.
	The actual \define{measurements} of $\grtr$ are then given by
	\begin{equation}\label{eq:results:setup:meas}
		\y \coloneqq \A \grtr + \noise \in \R^m,
	\end{equation}
	where $\noise = (e_1, \dots, e_m) \in \R^m$ models \emph{noise}, which could be deterministic and systematic.
	We assume that the noise is $\l{2}$-bounded, i.e., $\lnorm{\noise} \leq \noiseparam$ for some $\noiseparam \geq 0$.
\end{model}
Next, we fix our notation for the analysis operator and coefficients. Note that the dimension of the analysis domain $\R^N$ does not necessarily have to be larger than the dimension of $\R^n$.
\begin{defnotation}\label{def:results:setup:analysis}
\begin{rmklist}
\item
	The matrix $\aop = [\psi_{k,j}] \in \R^{N\times n}$ is called an \define{analysis operator} (or \define{analysis matrix}) if none of its rows equals the zero vector.
	The rows of $\aop$, denoted by $\avec_1, \dots, \avec_N \in \R^{n}$, are called the \define{analysis vectors}.
	Moreover, we define the \define{Gram matrix} of $\aop$ as
	\begin{equation}
		\gram = [g_{k,k'}] \coloneqq \aop \aop^\T = [\sp{\avec_k}{\avec_{k'}}]  \in \R^{N \times N}.
	\end{equation}
\item
	The \define{analysis coefficients} of a vector $\x \in \R^n$ (with respect to $\aop$) are given by
	\begin{equation}
		\aop \x = (\sp{\avec_1}{\x}, \dots, \sp{\avec_N}{\x}) \in \R^{N}.
	\end{equation}
	The \define{analysis support} of $\x$ is denoted by $\ssupp[\x] \coloneqq \supp(\aop \x) \subset [N]$, and if $S = \cardinality{\ssupp[\x]}$, we say that $\x$ is \define{$S$-analysis-sparse}.\footnote{If there is no danger of confusion, we may omit the term ``analysis'' and just speak of \emph{coefficients}, \emph{support}, \emph{sparsity}, etc.} Analogously, we call the complement $\ssuppc[\x] \coloneqq \setcompl{\supp(\aop \x)} \subset [N]$ the \define{analysis cosupport} of $\x$ and speak of an \define{$L$-analysis-cosparse} vector if $L = \cardinality{\ssuppc[\x]} = N - S$.
\end{rmklist}
\end{defnotation}

\begin{table}
\centering
\begin{tabular}{l|l}
	\textbf{Term} & \textbf{Notation} \\
 	\hline\hline
	(Ground truth) Signal vector & $\grtr \in \R^n$ \\ \hline
	Sensing vectors & $\a_1, \dots, \a_m \in \R^{m}$ \\
	Sensing matrix & $\A  = [a_{i,j}] = [\a_1 \dots \a_m]^\T \in \R^{m \times n}$ \\ \hline
	Noise variables & $\noise = (e_1, \dots, e_m) \in \R^{m}$ \\ \hline
	Measurement variables & $\y = (y_1, \dots, y_m) = \A \grtr + \noise \in \R^{m}$ \\ \hline
	Analysis vectors & $\avec_1, \dots, \avec_N \in \R^{n}$ \\
	Analysis operator & $\aop = [\psi_{k,j}] = [\avec_1 \dots \avec_N]^\T \in \R^{N \times n}$ \\ \hline
	Gram matrix & $\gram = [g_{k,k'}] = \aop \aop^\T = [\sp{\avec_k}{\avec_{k'}}]  \in \R^{N \times N}$ \\ \hline
	Analysis coefficients of $\x \in \R^n$ & $\aop \x = (\sp{\avec_1}{\x}, \dots, \sp{\avec_N}{\x}) \in \R^{N}$ \\ \hline
	Analysis support and sparsity & $\ssupp[\x] = \supp(\aop \x) \subset [N]$ and $S = \cardinality{\ssupp[\x]}$ \\ \hline
	Analysis cosupport and cosparsity & $\ssuppc[\x] = \setcompl{\supp(\aop \x)} \subset [N]$ and $L = \cardinality{\ssuppc[\x]} = N - S$ \\ \hline
	Analysis sign vector & $\anasign{\x} = (\sigma_{\x}^{1}, \dots, \sigma_{\x}^{N}) = \sign(\aop \x) \in \{-1,0,1\}^N$ \\
\end{tabular}
\caption{A summary of important notations used in this paper.}
\label{tab:results:notation}
\end{table}

\subsection{Generalized (Co-)Sparsity}
\label{subsec:results:parameters}

Before presenting the actual recovery results, we need to introduce three adapted notions of sparsity, which will form the heart of our sampling-rate bounds:
\begin{definition}[Generalized (Co-)Sparsity]\label{def:results:generalsparsity}
	Let $\aop \in \R^{N \times n}$ be an analysis operator and let $\x \in \R^n$. 
	We define the \define{generalized sparsity} of $\x$ (with respect to $\aop$) by
	\begin{equation}\label{eq:results:generalsparsity:spparam}
		\spparam{\x} \coloneqq \sum_{k,k' \in \ssupp[\x]} \sigma_{\x}^k \cdot \sigma_{\x}^{k'} \cdot g_{k,k'} \ ,
	\end{equation}
	where
	\begin{equation}
		\sigma_{\x}^{k} \coloneqq \sign(\sp{\avec_k}{\x}) \in \{-1,0,1\}, \quad k = 1, \dots, N,
	\end{equation}
	and $\anasign{\x} \coloneqq (\sigma_{\x}^{1}, \dots, \sigma_{\x}^{N}) \in \{-1,0,1\}^N$ is called the \define{analysis sign vector} of $\x$.
	Moreover, we introduce the terms
	\begin{align}
		\cospparam{\x} &\coloneqq \sum_{k,k' \in \ssuppc[\x]} \frac{g_{k,k'}^2}{\sqrt{g_{k,k} \cdot g_{k',k'}}} \ , \\
		\cospparamdg{\x} &\coloneqq \sum_{k \in \ssuppc[\x]} \sqrt{g_{k,k}} \ , \label{eq:results:generalsparsity:cospparam}
	\end{align}
	which are both referred to as the \define{generalized cosparsity} of $\x$.\footnote{The label ``$\vec{\smallsetminus}$'' indicates that $\cospparamdg{\x}$ only operates on the \emph{diagonal} entries of $\gram$.}
\end{definition}
Note that, for the sake of readability, we have omitted the dependence on $\aop$.
Considering the canonical choice of an orthonormal basis, it becomes actually clear why we speak of \emph{generalized} sparsity:
Since $\gram = \aop \aop^\T = \I{n}$ in this case, one obtains $\spparam{\x} = S = \lnorm{\aop\x}[0]$ and $\cospparam{\x} = \cospparamdg{\x} = L = n - S$.
Hence, the notions of Definition~\ref{def:results:generalsparsity} precisely coincide with their traditional counterparts.

In general, this correspondence is more complicated. The definition of the generalized sparsity $\spparam{\x}$ still operates on the analysis support of $\x$, but also involves a weighted sum over all Gram matrix entries associated with $\ssupp[\x]$.
The same holds true for the generalized cosparsity term $\cospparam{\x}$, respectively.
Such an incorporation of the off-diagonal entries of $\gram$ is in fact quite appealing because, to a certain extent, it respects the mutual coherence structure of the analysis vectors $\avec_1, \dots, \avec_N$.

Finally, let us state two basic observations, characterizing when the above sparsity parameters do not vanish (see also Lemma~\ref{lem:proofs:scbound:subd}):
\begin{align}
	\spparam{\x} = \lnorm{\aop^\T \anasign{\x}}^2 > 0 \quad &\text{if and only if} \quad \x \not\in \ker\aop, \quad \text{and} \\
	\cospparam{\x}, \cospparamdg{\x} > 0 \quad &\text{if and only if} \quad \ssuppc[\x] \neq \emptyset. \label{eq:results:parameters:nondegen}
\end{align}

\subsection{Sampling-Rate Function and Exact Recovery}
\label{subsec:results:samplecompl}

To highlight the key features of our approach, we restrict our analysis to the simplified case of noiseless Gaussian observations in this part, i.e., $\noiseparam = 0$ and $\a \distributed \Normdistr{\vnull}{\I{n}}$ in Model~\ref{model:results:setup:meas}.
The analysis basis pursuit then takes the form
\begin{equation}\label{eq:intro:anabpnoiseless}\tag{$\text{BP}_{\noiseparam=0}^{\aop}$}
	\min_{\x \in \R^n}\ananorm{\x} \quad \text{subject to \quad $\A \x = \y$,}
\end{equation}
\renewcommand{\refabpnoiseless}[2]{(\hyperref[eq:intro:anabpnoiseless]{$\text{BP}_{#1}^{#2}$})}%
and we can even hope for an \emph{exact} retrieval of $\grtr \in \R^n$.
Indeed, the recent work of \cite{amelunxen2014edge} has made the remarkable observation that a convex program of this type typically undergoes a sharp \emph{phase transition} as $m$ varies:
Recovery of $\grtr$ fails with overwhelmingly high probability if $m$ is below a certain threshold.
But once $m$ exceeds a small transition region, recovery succeeds with overwhelmingly high probability; see Figure~\ref{fig:intro:1d:pt_frame} for an example.
This minimal number of required measurements (also depending on the desired probability of success) is often referred to as the \emph{sample complexity} (or \emph{optimal sampling rate}) of an estimation problem.
However, computing such a quantity in an \emph{explicit} way is usually a challenging task.
\begin{figure}
\centering
\footnotesize
\includegraphics[width=0.5\textwidth]{Images/Phi/Phi.png} 
\caption[$\Phi$]{The graph of $\Phi$. This function is strictly monotonically decreasing (in particular, $\lim_{\rho \searrow 0}\Phi'(\rho) = -\infty$), and at its boundary points, we have $\lim_{\rho \searrow 0}\Phi(\rho) = 1$ and $\lim_{\rho \to \infty}\Phi(\rho) = 0$.}
\label{Fig:subsec:results:samplecompl:Phi}
\end{figure}

Our actual goal is therefore rather to come up with an upper bound on the sample complexity of \eqref{eq:intro:anabpnoiseless} that is tight in many situations.
For this purpose, let us introduce the following function, which essentially determines the sampling rate proposed by Theorem~\ref{thm:results:recovery} below:
\begin{definition}\label{def:results:samplingratefct}
	Let $\aop \in \R^{N \times n}$ be an analysis operator and let $\grtr \in \R^n$ with $\grtr \not\in \ker\aop$.
	Then, we define the \emph{sampling-rate function} of $\aop$ and $\grtr$ by
	\begin{equation}\label{eq:results:scfunc:samplecompl}
		\samplecompl{\aop, \grtr} \coloneqq \begin{cases} 
			\displaystyle n - \frac{(\cospparamdg{\grtr})^2}{\cospparam{\grtr}} \cdot \Phi\left( \frac{\spparam{\grtr}}{\cospparam{\grtr}} \right), & \text{if $\ssuppc[\grtr] \neq \emptyset$}, \\
			n\vphantom{\displaystyle\frac{1}{1}}, & \text{otherwise},
		\end{cases}
	\end{equation}
	where\footnote{Here, $\erf(\tau) := \tfrac{2}{\sqrt{\pi}} \int_{0}^\tau \exp(-x^2) \ dx $ denotes the \emph{error function}.}
	\begin{equation}\label{eq:results:scfunc:Phi}
		\Phi(\rho) \coloneqq \erf\left( \tfrac{h^{-1}(\rho)}{\sqrt{2}} \right) , \quad \rho > 0,
	\end{equation}
	with
	\begin{equation}\label{eq:results:scfunc:h}
		h\colon \intvop{0}{\infty} \to \intvop{0}{\infty}, \ \tau \mapsto \sqrt{\tfrac{2}{\pi}} \tfrac{e^{-\tau^2/2}}{\tau} + \erf(\tfrac{\tau}{\sqrt{2}}) - 1.
	\end{equation}
\end{definition}

It is not hard to verify that the univariate functions $h$, $h^{-1}$, and $\Phi$ ---each one mapping from $\intvop{0}{\infty}$ to $\intvop{0}{\infty}$ ---are well-defined; see first paragraph of Appendix~\ref{subsec:prelim:scbound:globalmin} for more details.
Together with \eqref{eq:results:parameters:nondegen}, this also implies the well-definedness of $\samplecompl{\aop, \grtr}$.
Figure~\ref{Fig:subsec:results:samplecompl:Phi} shows the graph of $\Phi$, visualizing how the ratio $\spparam{\grtr} / \cospparam{\grtr}$ affects the sampling-rate function.

With this notion at hand, we are now ready to state our first main result:
\begin{theorem}[Exact Recovery via \eqref{eq:intro:anabpnoiseless}]\label{thm:results:recovery}
	Assume that Model~\ref{model:results:setup:meas} is satisfied with $\a \distributed \Normdistr{\vnull}{\I{n}}$ and $\noiseparam = 0$. Let $\aop \in \R^{N \times n}$ be an analysis operator such that $\grtr \not\in \ker\aop$.
	Then, for every $\probsuccess > 0$, the following holds true with probability at least $1 - e^{-\probsuccess^2/2}$:
	If the number of measurements obeys
	\begin{equation}\label{eq:results:recovery:meas}
		m > \Big( \sqrt{\samplecompl{\aop, \grtr}} + \probsuccess \Big)^2 + 1,
	\end{equation}
	then \eqref{eq:intro:anabpnoiseless} recovers $\grtr$ exactly.
\end{theorem}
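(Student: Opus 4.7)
The plan is to combine a standard conic-geometric reduction with a sharp upper bound on the conic Gaussian mean width of the relevant descent cone, which the authors isolate as Theorem~\ref{thm:proofs:mainresults:scbound}.

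\textbf{Step 1: Geometric reduction.} In the noiseless Gaussian setting, exact recovery of $\grtr$ via \eqref{eq:intro:anabpnoiseless} is equivalent to $\ker\A \cap \desccone(f, \grtr) = \{\vnull\}$, where $f \coloneqq \ananorm{\cdot}$. By the sharp version of Gordon's escape-through-a-mesh theorem (as formulated by Amelunxen, Lotz, McCoy, and Tropp~\cite{amelunxen2014edge}), this intersection is trivial with probability at least $1 - e^{-\probsuccess^2/2}$ whenever $m > (\meanwidth{\desccone(f, \grtr) \cap S^{n-1}} + \probsuccess)^2 + 1$. Consequently, it suffices to establish the mean-width estimate
\begin{equation}
\effdim{\desccone(f, \grtr) \cap S^{n-1}} \leq \samplecompl{\aop, \grtr}.
\end{equation}

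\textbf{Step 2: Polar-dual formulation.} To bound the left-hand side, I use the standard estimate $\effdim{\desccone(f, \grtr) \cap S^{n-1}} \leq \inf_{t \geq 0} \mean{\distsq{\gaussian}{t \cdot \subdiff{f}{\grtr}}}$ with $\gaussian \distributed \Normdistr{\vnull}{\I{n}}$. The subdifferential admits the explicit description $\subdiff{f}{\grtr} = \{\aop^\T \v : v_k = \sigma_{\grtr}^{k} \text{ for } k \in \ssupp[\grtr], \, v_k \in \intvcl{-1}{1} \text{ for } k \in \ssuppc[\grtr]\}$, so for each fixed $t > 0$ the task reduces to minimizing $\mean{\lnorm{\gaussian - t\aop^\T \v}^2}$ over feasible $\v$, which is permitted to depend on $\gaussian$.

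\textbf{Step 3: A structured dual certificate.} Rather than compute this minimum exactly, I construct a tractable near-optimal feasible $\v$ by coordinatewise clipping: for $k \in \ssuppc[\grtr]$, set $v_k \coloneqq \clip{\sp{\avec_k}{\gaussian}/(\tau\sqrt{g_{k,k}})}{1}$ for a free parameter $\tau > 0$, while keeping $v_k = \sigma_{\grtr}^{k}$ on $\ssupp[\grtr]$. Expanding $\lnorm{\gaussian - t\aop^\T \v}^2 = \lnorm{\gaussian}^2 - 2t\sp{\aop\gaussian}{\v} + t^2 \v^\T \gram \v$ and taking expectations produces three kinds of contributions: the ambient dimension $n = \mean{\lnorm{\gaussian}^2}$; a quadratic-in-$t$ term arising from the forced signs via $\lnorm{\aop^\T\anasign{\grtr}}^2 = \spparam{\grtr}$; and one- and two-dimensional Gaussian integrals indexed by cosupport pairs. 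Using that $\mean{\sp{\avec_k}{\gaussian}\sp{\avec_{k'}}{\gaussian}} = g_{k,k'}$, the cosupport contribution organizes precisely into the combinations $\cospparamdg{\grtr}$ (from the linear-in-$\gaussian$ cross term, which involves $\sqrt{g_{k,k}}$ after normalization) and $\cospparam{\grtr}$ (from the quadratic term, where a Cauchy--Schwarz-type evaluation of pairwise clipped Gaussians yields the weights $g_{k,k'}^2/\sqrt{g_{k,k} g_{k',k'}}$).

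\textbf{Step 4: Optimization and identification of $\Phi$.} Finally, I minimize the resulting upper bound jointly in $t$ and $\tau$. The stationarity condition takes the form $h(\tau^\ast) = \spparam{\grtr}/\cospparam{\grtr}$, since $h$ in \eqref{eq:results:scfunc:h} encodes exactly the mass-and-variance balance of a clipped standard Gaussian. Hence $\tau^\ast = h^{-1}(\spparam{\grtr}/\cospparam{\grtr})$, and the surviving correction has the form $(\cospparamdg{\grtr})^2/\cospparam{\grtr}$ multiplied by $\erf(\tau^\ast/\sqrt{2}) = \Phi(\spparam{\grtr}/\cospparam{\grtr})$, matching $\samplecompl{\aop, \grtr}$ on the nose.

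\textbf{Main obstacle.} I expect Step~3 to be the delicate part. The Gram matrix couples the fixed-sign support block and the clipped cosupport block through the off-diagonal entries $g_{k,k'}$ with $k \in \ssupp[\grtr]$, $k' \in \ssuppc[\grtr]$; a naive estimate here produces an additional mixed term that is not captured by any of the three generalized sparsity parameters. Showing that this cross-block interaction vanishes (or can be absorbed cleanly into $\spparam{\grtr}$ and $\cospparam{\grtr}$) under the specific clipping ansatz is exactly what justifies the forms chosen in Definition~\ref{def:results:generalsparsity} and constitutes the core technical content of the mean-width bound cited from Section~\ref{sec:proofs}.
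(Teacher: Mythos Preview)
Your proposal is correct and follows essentially the same route as the paper: Gordon's escape (Corollary~\ref{cor:proofs:framework:guarantee-conic}) reduces to the conic mean-width bound (Theorem~\ref{thm:proofs:mainresults:scbound}), which is proved via the polar bound of Proposition~\ref{prop:proofs:scbound:mwpolar}, the clipped dual certificate of \eqref{eq:proofs:scbound:dvgeneral}, and the two-parameter optimization in Lemma~\ref{lem:proofs:scbound:globalmin}.

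One calibration remark: you locate the main obstacle in the support--cosupport cross block, but in the paper this term vanishes immediately because the clipped variables $w_k'$ are symmetric functions of $\sp{\avec_k}{\gaussian}$, hence $\mean[\dv'] = \vnull$ and $\mean[\anasign{}^\T \gram \dv'] = 0$ (see the display leading to \eqref{eq:proofs:scbound:bound:dvgeneral}). The genuine technical work sits in what you call the ``Cauchy--Schwarz-type evaluation'' of the cosupport--cosupport block $T_2$: bounding $\mean[\clip{\sp{\v_1}{\gaussian}}{\beta_1}\cdot\clip{\sp{\v_2}{\gaussian}}{\beta_2}]$ by $\abs{\sp{\v_1}{\v_2}}$ times the diagonal value is not Cauchy--Schwarz but a convexity argument in the correlation parameter, carried out via Price's theorem (Lemma~\ref{lem:proofs:scbound:clippedcovar} and Lemma~\ref{lem:prelim:clippedcovar:corrfct}). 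This is what produces the weights $g_{k,k'}^2/\sqrt{g_{k,k}g_{k',k'}}$ and hence the generalized cosparsity $\cospparam{\grtr}$.
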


Theorem~\ref{thm:results:recovery} gives a first answer to \ref{quest:intro:samplecompl}: Roughly speaking, reconstruction succeeds with high probability as $m$ slightly exceeds the sampling-rate function $\samplecompl{\aop, \grtr}$.
Since $\samplecompl{\aop, \grtr}$ is completely determined by our generalized sparsity parameters from Definition~\ref{def:results:generalsparsity}, we can even conclude that the bound \eqref{eq:results:recovery:meas} meets the desiderata of \ref{item:intro:desiderata:computable}, \ref{item:intro:desiderata:interpretable}, and \ref{item:intro:desiderata:generic} requested in Subsection~\ref{subsec:intro:contrib}.

Unlike many approaches from the literature (cf. \eqref{eq:intro:boundliterature}), the statement of Theorem~\ref{thm:results:recovery} is highly signal-dependent and non-uniform. Indeed, the condition of \eqref{eq:results:recovery:meas} does not just involve the analysis sparsity $S = \lnorm{\aop\grtr}[0]$, but explicitly depends on the support $\ssupp[\grtr]$ as well as on the sign vector $\anasign{\grtr} = \sign(\aop\grtr)$.
We will return to this point in the course of our experiments in Section~\ref{sec:appl}, demonstrating that the performance of \eqref{eq:intro:anabpnoiseless} is oftentimes not fully explainable by means of $S$, even if $\aop$ is fixed.

On the other hand, our wish for an almost tight bound on the sample complexity has not been clarified yet (see \ref{item:intro:desiderata:accurate} in Subsection~\ref{subsec:intro:contrib}).
This issue unfortunately turns out to be very challenging, and in general, we do not have a quantitative error estimate for our prediction. 
However, we will give at least numerical evidence in Section~\ref{sec:appl} and discuss some theoretical aspects of optimality in Subsection~\ref{subsec:proofs:scbound} (see Remark~\ref{rmk:proofs:scbound:kkt}).

\begin{remark}
\begin{rmklist}
\item
	The assumption of $\grtr \not\in \ker\aop$ in Theorem~\ref{thm:results:recovery} is not very restrictive and rather of technical nature.
	If $\grtr \in \ker\aop$, the analysis basis pursuit \eqref{eq:intro:anabpnoiseless} uniquely recovers $\grtr$ if, and only if,
	\begin{equation}
		\ker\aop \intersec (\grtr + \ker\A) = \{\grtr\}.
	\end{equation}
	In the setting of Gaussian measurements, where $\ker\A$ can be identified with a random subspace of dimension $n-m$, the latter condition is fulfilled almost surely if and only if $m \geq \dim(\ker\aop)$.
	Thus, the dimension of $\ker\aop$ precisely yields the sample complexity of \eqref{eq:intro:anabpnoiseless} in this case.
\item
	An important feature of Theorem~\ref{thm:results:recovery} is \emph{scaling invariance}: Replacing $\aop$ by a scaled version $\lambda \aop$ with $\lambda \neq 0$ does not affect the sampling-rate function, i.e., $\samplecompl{\lambda\aop, \grtr} = \samplecompl{\aop, \grtr}$.
	This is due to the fact that the parameters $\spparam{\grtr}$, $\cospparam{\grtr}$, and $\cospparamdg{\grtr}$ do only appear in terms of appropriate fractions.
\item
	Since $\samplecompl{\aop, \grtr} \leq n$, the condition of \eqref{eq:results:recovery:meas} does not lead to situations where $m$ needs to be much larger than $n$ in order to achieve successful recovery.
	Such a requirement would be overly restrictive because the equation system $\y = \A\grtr$ is almost surely uniquely solvable if $m \geq n$, no matter what operator $\aop$ is applied.
	In contrast, this simple observation is not always reflected by a naive bound of the form \eqref{eq:intro:boundliterature}, at least when the domain of analysis coefficients $\R^N$ is much higher dimensional, i.e., $N \gg n$.
	\qedhere
\end{rmklist}
\end{remark}

The function $\Phi$ from Definition~\ref{def:results:samplingratefct} is quite easy to understand from an analytical perspective, but it is however non-elementary.
For that reason, let us conclude with a simplification of the bound \eqref{eq:results:recovery:meas} in Theorem~\ref{thm:results:recovery}.
This obviously comes along with a certain loss of accuracy but sheds more light on the asymptotic behavior of our sampling-rate function $\samplecompl{\aop, \grtr}$.
\begin{proposition}\label{prop:results:scfuncsimplyfied}
	Let $\aop \in \R^{N \times n}$ be an analysis operator and let $\grtr \in \R^n$ with $\grtr \not\in \ker\aop$ and $\ssuppc[\grtr] \neq \emptyset$.
	Setting $S \coloneqq \spparam{\grtr}$, $L \coloneqq \cospparam{\grtr}$, and $\bar{L} \coloneqq \cospparamdg{\grtr}$, we have
	\begin{equation}\label{eq:results:scfuncsimplyfied:bound}
		\samplecompl{\aop, \grtr} \leq \min\Big\{ n - \tfrac{\bar{L}^2}{L} + (\tfrac{\bar{L}}{L})^2 \cdot [2 S \cdot \log(\tfrac{S + L}{S}) + S], n - \tfrac{2}{\pi} \cdot \tfrac{\bar{L}^2}{S + L}  \Big\}.
	\end{equation}
	In particular, if $\aop \in \R^{n \times n}$ forms an orthonormal basis, we have $S = \lnorm{\aop\grtr}[0]$ and $L = \bar{L} = n - S$, so that
	\begin{equation}\label{eq:results:scfuncsimplyfied:classicalcs}
		\samplecompl{\aop, \grtr} \leq 2 S \cdot \log(\tfrac{n}{S}) + 2 S \lesssim S \cdot \log(\tfrac{2n}{S}).
	\end{equation}
\end{proposition}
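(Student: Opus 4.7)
The plan is to establish the bound $\samplecompl{\aop,\grtr} \leq \min\{A,B\}$ in \eqref{eq:results:scfuncsimplyfied:bound} by verifying each of its two arguments separately, and then to specialize to the orthonormal-basis case. Writing $\rho := S/L$, which is positive and finite by $\grtr \notin \ker\aop$ and $\ssuppc[\grtr] \neq \emptyset$, a routine algebraic simplification in which the common factor $\bar L^2/L$ is cleared from both sides shows that each bound is equivalent to a univariate claim about $\Phi$:
\begin{align*}
(A)\quad & 1 - \Phi(\rho) \leq \rho\bigl[2\log(1 + 1/\rho) + 1\bigr], \\
(B)\quad & \Phi(\rho) \geq \tfrac{2}{\pi(1 + \rho)}.
\end{align*}
To prove both, I make the substitution $\tau := h^{-1}(\rho) \in \intvop{0}{\infty}$ and work in the $\tau$-variable. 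A direct calculation yields $h'(\tau) = -2\phi(\tau)/\tau^2 < 0$, so that $h$ is strictly decreasing (and $h^{-1}$ is well-defined), and the chain rule then furnishes the crucial identity $\Phi'(\rho) = -\tau^2$. Moreover, $1 - \Phi(\rho) = 2Q(\tau)$, where $\phi$ is the standard Gaussian density and $Q(\tau) := \int_\tau^\infty \phi(s)\,ds$ is the Gaussian tail.

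For bound $(B)$, the strategy is to show that $G(\rho) := \Phi(\rho)(1+\rho)$ is non-increasing on $\intvop{0}{\infty}$ with $\lim_{\rho\to\infty}G(\rho) = 2/\pi$, which immediately yields $G(\rho) \geq 2/\pi$ for all $\rho > 0$. Using $\Phi'(\rho) = -\tau^2$, the condition $G'(\rho) \leq 0$ rearranges into the inequality $H(\tau) := (\tau^2 - 1)(1 - 2Q(\tau)) + 2\tau\phi(\tau) \geq 0$, which follows cleanly from $H(0) = 0$ together with the identity $H'(\tau) = 2\tau(1 - 2Q(\tau)) \geq 0$ obtained after sizeable cancellations. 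The limit at infinity is read off from the expansions $\erf(\tau/\sqrt{2}) \sim \tau\sqrt{2/\pi}$ and $h(\tau) \sim \sqrt{2/\pi}/\tau$ as $\tau \searrow 0$.

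Bound $(A)$ is the technical core of the proof. In $\tau$-coordinates it reads $2Q(\tau) \leq h(\tau)\bigl[2\log(1 + 1/h(\tau)) + 1\bigr]$ for all $\tau > 0$. My plan is to exploit the integration-by-parts identity
\[
2Q(\tau) = \tau^2\, h(\tau) + 2\int_\tau^\infty s\, h(s)\,ds,
\]
which is verified by differentiating both sides and matching the decay at infinity. This reduces the claim to bounding the tail integral $\int_\tau^\infty s\, h(s)\,ds$ by $h(\tau)\log(1/h(\tau))$ up to absolute constants. The super-exponential decay $h(s) \leq \sqrt{2/\pi}\,e^{-s^2/2}/s$ together with the asymptotic relation $\log(1/h(\tau)) \sim \tau^2/2$ as $\tau \to \infty$ then produces the logarithmic bound in the familiar compressed-sensing style. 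The main obstacle will be the tightness of this bound: both sides behave like $\tau^2$ as $\tau \to \infty$, so crude Mills'-ratio estimates such as $Q(\tau) \leq \phi(\tau)/\tau$ are not sharp enough on their own, and a careful matching of leading orders will be required.

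Finally, the orthonormal-basis specialization is immediate: when $\gram = \I{n}$, Definition~\ref{def:results:generalsparsity} gives $S = \lnorm{\aop\grtr}[0]$ and $L = \bar L = n - S$, so $\bar L^2/L = n - S$ and $(\bar L/L)^2 = 1$. Substituting into the first argument of the minimum in \eqref{eq:results:scfuncsimplyfied:bound} yields $\samplecompl{\aop,\grtr} \leq S + [2S\log(n/S) + S] = 2S\log(n/S) + 2S$, and the concluding relation $2S\log(n/S) + 2S \lesssim S\log(2n/S)$ follows from elementary comparisons of logarithms on the two regimes $S \ll n$ and $S = \Theta(n)$.
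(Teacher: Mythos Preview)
Your treatment of bound $(B)$ is correct and essentially the paper's argument in different coordinates: the paper works directly in the $\tau$-variable with $G(\xi) \coloneqq \erf(\xi/\sqrt{2})\,(h(\xi)+1)$, which is your $G(\rho)$ under $\rho = h(\xi)$; ``increasing in $\xi$'' and ``non-increasing in $\rho$'' are the same statement since $h$ is strictly decreasing, and your $H$ is the paper's $\tilde G$. The orthonormal-basis specialization is also fine.

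The gap is in bound $(A)$. Your plan reduces to the tail estimate $2\int_\tau^\infty s\,h(s)\,ds \leq 2h(\tau)\log\!\bigl(1+1/h(\tau)\bigr) + h(\tau) - \tau^2 h(\tau)$, which you describe only as holding ``up to absolute constants'' with ``careful matching of leading orders'' still to be done; but \eqref{eq:results:scfuncsimplyfied:bound} is an exact inequality, and both sides agree to leading order $2\phi(\tau)/\tau$ as $\tau\to\infty$, so this step is genuinely delicate and is not actually carried out. The paper sidesteps this entirely by exploiting the \emph{variational origin} of $\samplecompl{\aop,\grtr}$: from Lemma~\ref{lem:proofs:scbound:globalmin} and \eqref{eq:proofs:scbound:bound:finaltarget} one has
\[
\samplecompl{\aop,\grtr} \;=\; F(\tau',\lambda') \;=\; \min_{\tau>0} F(\tau,\lambda'), \qquad \lambda' = \tfrac{\bar L}{L},
\]
so \emph{any} test value of $\tau$ gives an upper bound. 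Choosing $\tau'' = \sqrt{2\log((S+L)/S)}$ and relaxing via $h(\tau)\geq 0$ and $\erf(\tau/\sqrt{2}) \geq 1 - e^{-\tau^2/2}$ yields the first branch of \eqref{eq:results:scfuncsimplyfied:bound} in two lines. Your direct attack on $1 - \Phi(\rho) \leq \rho[2\log(1+1/\rho)+1]$ discards precisely the structure that makes the bound easy; it may be salvageable, but it is considerably harder than necessary.
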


The expression of \eqref{eq:results:scfuncsimplyfied:bound} is indeed much more explicit than $\samplecompl{\aop, \grtr}$.
The left branch particularly reminds us of a well-known property of traditional bounds, where the sampling rate is supposed to grow linearly with $S$ up to a logarithmic factor. This intuition is also confirmed by \eqref{eq:results:scfuncsimplyfied:classicalcs}, showing that our approach is consistent with the standard setup of compressed sensing.
On the contrary, the right branch of \eqref{eq:results:scfuncsimplyfied:bound} intends to mimic the asymptotics of $\samplecompl{\aop, \grtr}$ if the value of $\spparam{\grtr}$ is large compared to $\cospparam{\grtr}$, i.e., $\spparam{\grtr} / \cospparam{\grtr} \gg 1$.

\subsection{Stable and Robust Recovery}
\label{subsec:results:compressible}

In this part, we turn to our wish for stability as stated in \ref{quest:intro:compressibility}.
For illustration, let us recall the above situation of exact recovery and inspect the sampling-rate function $\grtr \mapsto \samplecompl{\aop, \grtr}$ more closely.
While this mapping actually insinuates a strong dependence on $\grtr$, a brief look at our generalized sparsity parameters reveals that it is already determined by the sign vector $\anasign{\grtr}$. In particular, the latter quantity does not take account of the magnitudes of the analysis coefficients. 
The size of $\samplecompl{\aop, \grtr}$ may therefore change dramatically (in a discontinuous manner) if $\grtr$ is slightly varied in such a way that the zero entries of $\aop\grtr$ become negligibly small but do not vanish anymore. 
For a fully populated coefficient vector $\aop\grtr$ with $\lnorm{\aop\grtr}[0] = N$, we would even have $\samplecompl{\aop, \grtr} = n$, which typically occurs in real-world examples, such as Figure~\ref{fig:results:compresible}.
In these situations, the ``binary'' statement proposed by Theorem~\ref{thm:results:recovery} is too pessimistic, meaning that one cannot expect exact recovery with $m \ll n$.
Instead of undergoing a sharp phase transition, the reconstruction error of \eqref{eq:intro:anabpnoiseless} will then decay rather smoothly as $m$ grows.

Our approach to this problem relies on a simple geometric idea: Approximate $\grtr \in \R^n$ by a vector $\grtrsparse \in \R^n$ whose analysis coefficients $\aop\grtrsparse$ are much sparser than $\aop\grtr$.
Such a ``surrogate'' of $\grtr$ is supposed to be of lower complexity in the sense that $\samplecompl{\aop, \grtrsparse}$ is small, and if $\grtrsparse$ and $\grtr$ are not too distant, we speak of \emph{analysis compressibility}.
In that case, one may hope that the actual performance of \eqref{eq:intro:anabp} is almost the same as if $\grtrsparse$ would be the ground truth signal, i.e., $\y = \A \grtrsparse$.
This desirable property of a recovery method is usually referred to as \emph{stability}.

\begin{figure}[!t]
	\centering
	\begin{subfigure}[t]{0.45\textwidth}
		\centering
		\includegraphics[width=\textwidth]{Images/Decay/cam.png}
		\caption{}
		\label{fig:results:compresible:cameraman}
	\end{subfigure}%
	\qquad
	\begin{subfigure}[t]{0.45\textwidth}
		\centering
		\includegraphics[width=\textwidth]{Images/Decay/decay.png}
		\caption{}
		\label{fig:results:compresible:decay}
	\end{subfigure}%
	\caption{\subref{fig:results:compresible:cameraman}~Visualization of the ``cameraman'' $\grtr \in \R^{256\times 256}$. \subref{fig:results:compresible:decay}~Sorted $\l{1}$-normalized magnitudes (in logarithmic scale) of the analysis coefficients $\aop\grtr$, where $\aop$ corresponds to a compactly supported shearlet frame with four scales, provided by \texttt{Shearlab} \cite{kutyniok2016shearlab}.}
	\label{fig:results:compresible}
\end{figure}

A very natural, albeit naive strategy to come up with a good low-complexity approximation of~$\grtr$ is to solve
\begin{equation}\label{eq:results:compressible:optimalSterm}
	\min_{\xsparse \in \R^n} \lnorm{\grtr - \xsparse} \quad \text{subject to \quad $\cardinality{\ssupp[\xsparse]} = \lnorm{\aop \xsparse}[0] \leq S$}
\end{equation}
for some appropriately chosen sparsity threshold $S$. Any minimizer is then called a \define{best $S$-analysis sparse approximation of $\grtr$} (with respect to $\aop$).
Since the constraint set of \eqref{eq:results:compressible:optimalSterm} is actually a union of linear subspaces, it is also convenient to consider the following equivalent reformulation:
\begin{equation}\label{eq:results:compressible:optimalSterm:subspace}
	\min_{\substack{\ssupp \subset [N] \\ \cardinality{\ssupp} \leq S}} \lnorm{\grtr - \proj{\projnsp[\ssupp]} \grtr},
\end{equation}
where $\projnsp[\ssupp] \coloneqq \ker\aop_{\ssuppc} \subset \R^n$.
Interestingly, the same optimization problem was recently studied in \cite{giryes2014} under the name of \emph{optimal projections}, forming a crucial step in greedy-like algorithms for combinatorial cosparse modeling.
Although \eqref{eq:results:compressible:optimalSterm:subspace} is tractable in some simple cases, it eventually turned out to be NP-hard in general \cite{tillmann2014}, due to its combinatorial nature.
However, such algorithmic issues are only of minor importance to us, since we merely aim for a theoretical justification of stability in $\l{1}$-analysis recovery.

For these reasons, we shall state our second main result for Euclidean projections onto an arbitrary subspace $\projnsp \subset \R^n$, leaving space for applying less challenging (near-optimal) approximation methods.
Note that, compared to Theorem~\ref{thm:results:recovery}, we now also allow for sub-Gaussian measurements as well as the presence of noise.
\begin{theorem}[Stable and Robust Recovery via \eqref{eq:intro:anabp}]\label{thm:results:robuststable}
	Suppose that Model~\ref{model:results:setup:meas} is satisfied and let $\aop \in \R^{N \times n}$ be an analysis operator with $\grtr \not\in \ker\aop$.
	Let $\projnsp \subset \R^n$ be a linear subspace and assume that $\proj{\projnsp} \grtr \not\in \ker\aop$.
	Then
	\begin{equation}\label{eq:results:robuststable:grtrsparse}
				\grtrsparse \coloneqq \frac{\lnorm{\aop\grtr}[1]}{\lnorm{\aop\proj{\projnsp}\grtr}[1]} \cdot \proj{\projnsp} \grtr \in \R^n
	\end{equation}
	is well-defined and there exists a numerical constant $C > 0$ such that, for every $R > 0$ and $\probsuccess > 0$, the following holds true with probability at least $1 - e^{-\probsuccess^2/2}$:
	If the number of measurements obeys\footnote{One may also set $R = \infty$ here, with the convention that $\tfrac{R+1}{R} \coloneqq 1$ and $R \cdot 0 \coloneqq 0$.}
	\begin{equation}\label{eq:results:robuststable:meas}
		m > m_0 \coloneqq C \cdot \subgparam^4 \cdot \Big( \tfrac{R+1}{R} \cdot \big[\sqrt{\samplecompl{\aop, \grtrsparse}} + 1\big]  + \probsuccess \Big)^2 + 1,
	\end{equation}
	then any minimizer $\solu$ of \eqref{eq:intro:anabp} satisfies
	\begin{equation}\label{eq:results:robuststable:bound}
		\lnorm{\solu - \grtr} \leq \max\Big\{ R \lnorm{\grtr - \grtrsparse}, \tfrac{2\noiseparam}{\sqrt{m-1} - \sqrt{m_0 - 1}} \Big\} \ .
	\end{equation}
	In the Gaussian case of $\a \distributed \Normdistr{\vnull}{\I{n}}$, we particularly have $C = \subgparam = 1$.
\end{theorem}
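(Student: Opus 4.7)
The plan is to adapt the descent-cone / matrix-deviation machinery for noisy convex recovery, but centered on the surrogate $\grtrsparse$ rather than on $\grtr$. First, observe that the scaling in \eqref{eq:results:robuststable:grtrsparse} is well defined by the hypothesis $\proj{\projnsp}\grtr \not\in \ker\aop$, and by construction $\ananorm{\grtrsparse} = \ananorm{\grtr}$. Since $\grtr$ is feasible for \eqref{eq:intro:anabp} (because $\lnorm{\A\grtr - \y} = \lnorm{\noise} \leq \noiseparam$), optimality of $\solu$ gives $\ananorm{\solu} \leq \ananorm{\grtr} = \ananorm{\grtrsparse}$, so the error vector $\h \coloneqq \solu - \grtrsparse$ lies in the descent cone $\desccone{\grtrsparse}$ of $\x \mapsto \ananorm{\x}$ at the surrogate, not at $\grtr$ itself. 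This algebraic trick, enabled by the $\ananorm{\cdot}$-preserving rescaling, is what makes the sample-complexity estimate from Theorem~\ref{thm:results:recovery} (driven by $\samplecompl{\aop, \grtrsparse}$) applicable here.

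Next, I would control the measurement image of $\h$ via the triangle inequality $\lnorm{\A\h} \leq \lnorm{\A\solu - \y} + \lnorm{\y - \A\grtr} + \lnorm{\A(\grtr - \grtrsparse)} \leq 2\noiseparam + \lnorm{\A(\grtr - \grtrsparse)}$, and then invoke two matched, high-probability estimates for the sub-Gaussian matrix $\A$: a Mendelson-type small-ball \emph{lower} bound $\lnorm{\A\h} \gtrsim \sqrt{m-1}\,\lnorm{\h}$ uniformly in $\h \in \desccone{\grtrsparse}$, driven by a bound on the conic Gaussian mean width $w^2(\desccone{\grtrsparse} \cap S^{n-1}) \lesssim \samplecompl{\aop, \grtrsparse}$; and a standard sub-Gaussian concentration \emph{upper} bound $\lnorm{\A(\grtr - \grtrsparse)} \lesssim \subgparam\sqrt{m}\,\lnorm{\grtr - \grtrsparse}$. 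Combining these and applying one further triangle inequality $\lnorm{\solu - \grtr} \leq \lnorm{\h} + \lnorm{\grtrsparse - \grtr}$ yields the desired estimate.

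The free parameter $R$, the factor $\tfrac{R+1}{R}$ inside $m_0$, and the $\max$-shape of \eqref{eq:results:robuststable:bound} emerge from a \emph{localization} of the above scheme, in the spirit of \cite{mendelson2007reconstruction}: one restricts the small-ball lower bound to the annular region $\{\h \in \desccone{\grtrsparse} : \lnorm{\h} \geq R\lnorm{\grtr - \grtrsparse}\}$. Inside this region, the mismatch term satisfies $\lnorm{\A(\grtr - \grtrsparse)} \lesssim \subgparam\sqrt{m}\,\lnorm{\grtr - \grtrsparse} \leq \tfrac{\subgparam\sqrt{m}}{R}\lnorm{\h}$, and is absorbed into the signal $\gtrsim \sqrt{m-1}\,\lnorm{\h}$ exactly when $m \geq m_0$ is calibrated through the factor $\tfrac{R+1}{R}$; what remains yields the noise branch $2\noiseparam/(\sqrt{m-1}-\sqrt{m_0-1})$. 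Outside the localization, $\lnorm{\h} \leq R\lnorm{\grtr-\grtrsparse}$ by construction, and triangle inequality delivers the other branch $R\lnorm{\grtr-\grtrsparse}$ of the $\max$ (up to tuning the constant in $m_0$).

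The main obstacle is precisely this mismatch term $\lnorm{\A(\grtr - \grtrsparse)}$: it couples the random matrix $\A$ to a deterministic direction that is generally \emph{not} contained in $\desccone{\grtrsparse}$, so a naive uniform deviation bound over the descent cone alone cannot control it. The localized conic Gaussian mean width addresses this by treating the descent cone and the approximation direction within a single chaining argument, producing matched constants on both the lower and upper bounds. The $\subgparam^4$ prefactor in \eqref{eq:results:robuststable:meas} then reflects the cost of replacing Gordon's comparison lemma (available only in the Gaussian case, where $\subgparam = 1$) by a genuinely sub-Gaussian matrix deviation inequality.
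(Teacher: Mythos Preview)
Your starting observation---that the rescaling forces $\ananorm{\grtrsparse}=\ananorm{\grtr}$ and hence $\solu-\grtrsparse\in\desccone{\lnorm{\aop(\cdot)}[1],\grtrsparse}$---is exactly the mechanism the paper exploits. From that point on, however, the routes diverge. You introduce a \emph{mismatch term} $\lnorm{\A(\grtr-\grtrsparse)}$ and propose to control it by a separate concentration bound on a fixed direction, then absorb it into the conic lower bound via the localization $\lnorm{\h}\geq R\lnorm{\grtr-\grtrsparse}$. The paper never produces such a term: it works throughout with the descent set at $\grtr$ (not $\grtrsparse$) and its \emph{local} mean width $\meanwidth[t]{\descset{\lnorm{\aop(\cdot)}[1],\grtr}}$, then invokes the purely geometric estimate
\[
\meanwidth[R\lnorm{\grtr-\grtrsparse}]{\descset{\lnorm{\aop(\cdot)}[1],\grtr}}\;\leq\;\tfrac{R+1}{R}\big(\meanwidth[\conic]{\descset{\lnorm{\aop(\cdot)}[1],\grtrsparse}}+1\big),
\]
which follows from the level-set identity $\descset{\objfunc,\grtr}+\grtr=\descset{\objfunc,\grtrsparse}+\grtrsparse$ (Lemma~\ref{lem:prelim:framework:loceffdimstable}). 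This reduction is done \emph{before} any randomness enters, so only a single Gordon/matrix-deviation event is needed.

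Your approach is qualitatively sound and would yield a theorem of the same flavor, but it cannot recover the precise statement: (i) the separate upper bound $\lnorm{\A(\grtr-\grtrsparse)}\lesssim\subgparam\sqrt{m}\lnorm{\grtr-\grtrsparse}$ requires a second high-probability event and its own constant, so the Gaussian claim $C=\subgparam=1$ is lost; (ii) absorbing $\tfrac{\subgparam\sqrt{m}}{R}\lnorm{\h}$ into $\sqrt{m-1}\lnorm{\h}$ does not algebraically produce the clean $\tfrac{R+1}{R}$ prefactor on $\sqrt{\samplecompl{\aop,\grtrsparse}}+1$; and (iii) your final triangle inequality $\lnorm{\solu-\grtr}\leq\lnorm{\h}+\lnorm{\grtrsparse-\grtr}$ gives $(R+1)\lnorm{\grtr-\grtrsparse}$ rather than $R\lnorm{\grtr-\grtrsparse}$ in the non-noise branch. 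The paper's local-mean-width route sidesteps all three issues at once.
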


The statement of Theorem~\ref{thm:results:robuststable} makes the above geometric principle precise:
Instead of evaluating the sampling-rate function $\samplecompl{\aop, \cdot}$ at the ground truth $\grtr$, we rather consider $\samplecompl{\aop, \grtrsparse}$ in \eqref{eq:results:robuststable:meas}, which could be significantly smaller. The price to pay is an additional error term in \eqref{eq:results:robuststable:bound}, governed by the distance between $\grtr$ and $\grtrsparse$.
Thus, one can formulate the following rule-of-thumb:
\begin{highlight}
	Recovery accuracy can be exchanged with signal complexity (taking fewer measurements), and vice versa.
\end{highlight}
The benignity of this trade-off clearly depends on the ansatz space $\projnsp$.
Above, we already pointed out that selecting $\projnsp$ is indeed very delicate and specific about the analysis operator $\aop$.
Moreover, even if one could identify an optimal support set $\ssupp$ according to \eqref{eq:results:compressible:optimalSterm:subspace}, it is still not clear whether setting $\projnsp = \projnsp[\ssupp] = \ker\aop_{\ssuppc}$ leads to the smallest sampling rate $\samplecompl{\aop, \grtrsparse}$ among all $S$-analysis-sparse approximations of $\grtr$.
For these reasons, we defer a general discussion of this issue to future works. 
A simple greedy strategy is however presented in Appendix~\ref{subsec:implementation:greedy}, producing satisfactory results for redundant wavelets frames (see Subsection~\ref{subsec:appl:compr}).

Compared to the setup of the previous subsection, the sub-Gaussianity of $\A$ comes along with an extra factor of $C \cdot \subgparam^4$ for the sampling rate, whereas the noise level $\noiseparam$ affects the actual error estimate. Roughly speaking, \eqref{eq:results:robuststable:bound} states that we can expect reliable outcomes as long as $\lnorm{\noise} / \sqrt{m} \leq \noiseparam / \sqrt{m} = \asympfaster{1}$, which is a well-known observation, e.g., see \cite[Cor.~3.5]{tropp2014convex}.

Finally, we would like to highlight that our approach is based on approximations in the signal domain $\R^n$.
This is fundamentally different from the predominant viewpoint in the literature which suggests working in the space of analysis coefficients $\R^N$ instead.
While those stability results often lead to convenient error bounds, based on the ordinary \emph{best $S$-term approximation error} of the coefficient vector $\aop\grtr$ (e.g., see \eqref{eq:appl:wavelets:err_kabanava2015}), they suffer from disregarding the coherence structure of $\aop$, which is particularly problematic in highly redundant scenarios; see Figure~\ref{fig:appl:wavelets:bestSterm} in Subsection~\ref{subsubsec:appl:wavelets:blocks} for more details.
On the other hand, Theorem~\ref{thm:results:robuststable} can be still connected to the coefficient domain, as the following bound for the approximation error in \eqref{eq:results:robuststable:bound} shows:
\begin{equation}\label{eq:results:robuststable:compressibility}
	\lnorm{\grtr - \grtrsparse} \leq \tfrac{\lnorm{\proj{\projnsp}\grtr}}{\lnorm{\aop\proj{\projnsp}\grtr}[1]} \cdot \lnorm{\aop\proj{\orthcompl{\projnsp}}\grtr}[1] + \lnorm{\proj{\orthcompl{\projnsp}}\grtr} \ ,
\end{equation}
see Proposition~\ref{prop:proofs:framework:compressibility:normbound} for a proof.
If $\aop$ is an orthonormal basis and $\projnsp$ corresponds to the coordinate space of the $S$-largest entries of $\aop\grtr$ in magnitude, it is not hard to see that this estimate is closely related to traditional bounds from compressibility theory (cf. \cite[Thm.~9.13]{foucart2013cs}).

\begin{remark}
\begin{rmklist}
\item
	The scaling factor in \eqref{eq:results:robuststable:grtrsparse} ensures that $\lnorm{\aop\grtr}[1] = \lnorm{\aop\grtrsparse}[1]$.
	Hence, $\grtrsparse$ is not precisely equal to the optimal approximation of $\grtr$ in $\projnsp$, unless $\lnorm{\aop\grtr}[1] = \lnorm{\aop\proj{\projnsp}\grtr}[1]$. This technicality is required for the proof of Theorem~\ref{thm:results:robuststable} (see Corollary~\ref{cor:proofs:framework:guarantee-stable}), 
	but we believe that it is not necessary in general.
\item
	The parameter $R$ in Theorem~\ref{thm:results:robuststable} allows us to fine-tune the trade-off described above: Enlarging $R$ causes a decrease in accuracy in \eqref{eq:results:robuststable:bound} but improves the sampling rate of \eqref{eq:results:robuststable:meas} at the same time.
	In this sense, the fraction $(R+1)/R$ can be also regarded as oversampling factor. 
\item 
	Albeit with a slightly different purpose, comparable trade-off principles concerning the computational and sample complexity have been recently studied in the literature: In \cite{chandrasekaran2013tradeoff}, a hierarchy of algorithms is defined, which is endowed with a statistical characterization and thereby allows for trading computational costs against statistical efficiency. This conception was further investigated in \cite{bruer2015tradeoff} where this trade-off is exploited by varying the degree of smoothing in the underlying optimization problem. Similarly, \cite{giryes2018tradeoff} compromises between the reconstruction error and convergence speed by modifying the iterations of the projected gradient descent algorithm. In this context, it turns out that a significant speed-up can be achieved by a complexity reduction of the underlying signal set; such a step is comparable to the approximation strategy proposed in this subsection. Finally, \cite{oymak2018sharp} establishes recovery guarantees for the (standard) projected gradient descent algorithm, elaborating a fundamental trade-off between accuracy, the number of iterations, and sample size.
		
	But despite various similarities, let us point out an important difference to the approach of Theorem~\ref{thm:results:robuststable}: The trade-offs in \cite{chandrasekaran2013tradeoff,bruer2015tradeoff,giryes2018tradeoff,oymak2018sharp} come along with a modification of the respective algorithmic methods, thereby balancing between accuracy, runtime, and sample size. In contrast, we consider the vanilla analysis basis pursuit and focus on a theoretical explanation of its stability, which can be viewed as a trade-off between the achievable accuracy and the available sample size.
	\qedhere
\end{rmklist}
\end{remark}

\section{Applications and Numerical Experiments}
\label{sec:appl}

The purpose of this section is to thoroughly evaluate the predictive quality of the theorems presented in Section~\ref{sec:results}. A series of experiments based on redundant wavelet frames (Subsection~\ref{subsec:appl:wavelets}), total variation minimization (Subsection~\ref{subsec:appl:tv}), and tight random frames in general position (Subsection~\ref{subsec:appl:onbrandom}) is conducted, which covers a wide range of analysis operators in one and two spatial dimensions (Subsection~\ref{subsec:appl:2d}). 
By creating phase transition plots, we visualize the sampling-rate bound of Theorem~\ref{thm:results:recovery} and investigate to what extent our wish for accuracy  is fulfilled.
Combined with a simple greedy approximation strategy for compressible signals, this also allows for a verification of our result on stable recovery (Theorem~\ref{thm:results:robuststable}).

While our theoretical framework includes a more general class of sub-Gaussian random matrices, we do only consider the benchmark case of noiseless Gaussian measurements, as it is often done in the compressed sensing literature. Let us point out again that the statement of Theorem~\ref{thm:results:recovery} is \emph{non-asymptotic} so that it can be directly compared with the true recovery performance of \eqref{eq:intro:anabpnoiseless}. 
In contrast, a visualization of asymptotic-order results, such as stated in \cite{candes2011csdict,needell2013}, is not compatible with this setup due to unspecified numerical constants.\footnote{These results are still non-asymptotic in the sense that they consider the case of finite $m$, but the sample complexity is only described by an asymptotic (upper) bound.}
The only other practically reasonable bound that we are aware of was established by Kabanava, Rauhut, and Zhang in \cite{rkz2015}, and will therefore often serve as an object of comparison.
See also Subsection~\ref{subsec:discussion:further} for a more formal presentation of the arguments presented in \cite{rkz2015}.

The notation of this part again follows Subsection~\ref{subsec:results:setup}, in particular Model~\ref{model:results:setup:meas}. 
As it is common in the literature (cf. \cite{amelunxen2014edge}), only the sampling-rate function $\grtr \mapsto \samplecompl{\aop,\grtr}$ is reported when analyzing the quality of Theorem~\ref{thm:results:recovery}, whereas the probability parameter $u$ in \eqref{eq:results:recovery:meas} is neglected.
Unless stated otherwise, we are solving the convex program \eqref{eq:intro:anabpnoiseless} using the \texttt{Matlab} software package \texttt{cvx}~\cite{cvx1,cvx2} with the default settings in place and the precision set to \texttt{best}. An outcome $\solu \in \R^n$ is considered to be an ``exact'' recovery of $\grtr \in \R^n$ if $\lnorm{\solu - \grtr}[2] < 10^{-5}$. Indeed, this threshold has proven to produce very stable solutions and seems to reflect the numerical accuracy of \texttt{cvx}.

\subsection{(Highly) Redundant Wavelet Frames}
\label{subsec:appl:wavelets}

A major difficulty in analysis modeling is to select an appropriate operator $\aop \in \R^{N \times n}$ that enables good reconstructions for a whole class of signals.
One of the most popular classes is formed by \emph{piecewise constant functions} (e.g., the \texttt{blocks} signal of~\cite{donoho94}), oftentimes regarded as a prototype system in the literature.
In this context, \emph{translation-invariant Haar wavelet frames} have turned out to be very useful for various signal- and image-processing tasks, most prominently for denoising \cite{coifman1995,donoho94}.
Depending on the number of scaling levels, such wavelet transforms are typically highly redundant, i.e., the total number of wavelet coefficients $N$ may be orders of magnitudes larger than the signal's dimension $n$.
However, despite their success and popularity, there exists to the best of our knowledge no sound numerical or theoretical analysis of this important use case in respect of $\l{1}$-analysis recovery. 

We emphasize that the situation of high redundancy is not just a flaw of this specific setup, but is ubiquitous in \emph{applied harmonic analysis}. For instance, when applying \emph{Gabor frames} in time-frequency analysis, or (directional) representation systems like \emph{curvelets} or \emph{shearlets} in imaging, one often has to face problems with $N/n \geq 50$ \cite{kutyniok2016shearlab}.
Similar observations were recently also made for approaches to analysis operator learning from training data: For example, \cite{hawe2013} reported that a redundancy factor of $N/n = 128$ yields the best results for many classical imaging tasks.
Apart from being considerably redundant, these systems usually share the following properties:
\begin{itemize}
\item
	\emph{Linear dependency.} There are (necessarily) linear dependencies among the analysis vectors of the frame. 
\item
	\emph{Translation-invariance.} The analysis vectors are obtained by translating a family of generators.
\item\label{prop:appl:loc}
	\emph{Intrinsic localization.} The frame is well-localized in the sense of~\cite{Groechenig2004,Fornasier2005}. Roughly speaking, this means that the off-diagonal decay of the Gram matrix $\gram = \aop \aop^\T$ is fast.  
\end{itemize}

\subsubsection{Setup and Notation of Wavelet Frames}

Inspired by these observations, we build the simulations of this subsection on a combination of piecewise constant signals and Haar wavelet systems, serving as benchmark example of (highly) redundant frames. 
Our specific implementation of wavelet transforms relies on the \texttt{Matlab} software package \texttt{spot}~\cite{spot}, which is in turn adapted from the \texttt{Rice Wavelet Toolbox}~\cite{rwt}. The \emph{redundant wavelet transform} $\texttt{rdwt}: \R^n \to \R^N$ of this package is based on the so-called \emph{algorithme \`{a} trous}. Here, the downsampling steps of the discrete wavelet transform are omitted, whereas instead the filter responses are upsampled by padding with zeros, which reminds us of creating holes (\emph{trous} in French) in the non-zero coefficients. 
In the following, let $\aopr \in \R^{N \times n}$ denote the (matricized) analysis operator associated with this frame.
The redundant forward transform of \cite{spot} comes along with an \emph{inverse wavelet transform} $\texttt{irdwt}:\R^N \to \R^n$ satisfying
\begin{equation}\label{eq:appl:wavelets:forwardinverse}
	\texttt{irdwt} \circ \texttt{rdwt} = \Id_{\R^n}.
\end{equation}
From the viewpoint of frame theory, the application of \texttt{irdwt} corresponds to the synthesis operation with a dual frame of $\aopr$.
More precisely, if $\aopi \in \R^{N \times n}$ denotes the analysis operator of this dual frame, the rule \eqref{eq:appl:wavelets:forwardinverse} translates into $\aopi^\T \cdot \aopr = \I{n}$.
It is worth mentioning that $\aopr$ and $\aopi$ are not just academic concepts that are inaccessible in practice. Indeed, the analysis and synthesis operations of both frames are computable via the algorithm \`{a} trous with a complexity of $\asympfaster{n\log (n)}$. 

Finally, if $n$ is a power of two, the \emph{non-redundant discrete wavelet transform} $\texttt{dwt}: \R^n \to \R^n$ is orthogonal. Its associated analysis operator is denoted by $\aopd \in \R^{n \times n}$, forming an orthonormal basis for $\R^n$. Unless stated otherwise, we will always choose a Haar wavelet filter and $6$ decomposition levels for $\aopr$, $\aopi$, and $\aopd$.
For a more detailed introduction to (redundant) wavelet transforms, the interested reader is referred to~\cite{mallat09,fowler05}.

\subsubsection{Recovery of \texttt{blocks} Signal}
\label{subsubsec:appl:wavelets:blocks}

Our first experiment revisits the simulation of Figure~\ref{fig:intro:1d} in Subsection~\ref{subsec:intro:issues}, where the reconstruction of the classical \texttt{blocks} signal \cite{donoho94} was investigated for different choices of Haar wavelet frames.  
Technically, we fix the ambient dimension $n=256$, initialize $\grtr \in \R^n$ as \texttt{blocks} (see Figure~\ref{fig:intro:1d:signal}), and run Experiment~\ref{exp:appl:wavelets:intro} below for $\aop \in \{\aopr,\allowbreak \aopi,\allowbreak \aopd\}$ and $M = \left\{ 1,\dots,n\right\}$. The plots of Figure~\ref{fig:intro:1d:pt_frame} and Figure~\ref{fig:intro:1d:pt_onb} show the empirical success rates\footnote{As pointed out above, ``success'' means that $\grtr$ is recovered up to an accuracy of $\lnorm{\grtr - \solu} < 10^{-5}$.} for $\aop \in \{\aopr, \aopi\}$ and $\aop = \aopd$, respectively.

\begin{experiment}[Recovery of a fixed signal]\leavevmode
\label{exp:appl:wavelets:intro}
\vspace{-.25\baselineskip}\hrule\vspace{.5\baselineskip}

\myhangindent{Input: \ }
\expkwd{Input:} Fixed ambient dimension $n$, signal vector $\grtr \in \R^n$, analysis operator $\aop \in \R^{N \times n}$, range of measurements $M \subset \{1,2,\dots,n\}$.

\vspace{.5\baselineskip}
\expkwd{Compute:} Repeat the following procedure $50$ times for every $m \in M$:
\begin{expstep}
	 \item 
		Draw a standard i.i.d.\ Gaussian random matrix $\A \in \R^{m\times n}$ and determine the measurement vector $\y = \A\grtr$.
	\item 
		Solve the analysis basis pursuit \eqref{eq:intro:anabpnoiseless} to obtain an estimator $\solu \in \R^n$. 
	\item
		Compute and store the recovery error $\lnorm{\grtr - \solu}$.
\end{expstep}
\end{experiment}

Figure~\ref{fig:intro:1d:pt_frame} reveals that a perfect reconstruction of $\grtr$ with $\aopr$ requires almost $m\approx n$ measurements. 
This observation is remarkable, since using $\aopr$ for denoising via soft thresholding has proven to be very effective \cite{coifman1995}, whereas it seems to be quite inappropriate for the $\l{1}$-analysis formulation.
In contrast, the choice of $\aopi$ performs significantly better, succeeding from $m\approx 105$ onward.
Interestingly, the same superiority remains true for analysis basis pursuit denoising (i.e., applying \eqref{eq:intro:anabp} with $\A = \I{n}$), which indicates that the analysis approach is fundamentally different from classical denoising via soft thresholding.

The subsampled orthonormal basis $\aopd$ behaves slightly worse than $\aopi$ and allows for retrieval with $m\approx 125$ onward (see Figure~\ref{fig:intro:1d:pt_onb}).
This gives further evidence that the redundancy of a sparsifying transform can help to improve the recovery capability of analysis-based priors.

A closer look at the structure of $\aopr$ and $\aopi$ reveals that they actually correspond to the same frame up to a reweighting of their analysis vectors. In fact, we have
\begin{equation}\label{eq:appl:wavelets:reweighting}
\avec_{\texttt{rdwt},k} = \frac{\avec_{\texttt{irdwt},k}}{\lnorm{\avec_{\texttt{irdwt},k}}[2]} \in \R^n \quad \text{for all $k=1,\dots,N$,}
\end{equation}
which particularly implies that the analysis supports $\supp(\aopr\grtr)$ and $\supp(\aopi\grtr)$ coincide. Following traditional approaches based on analysis sparsity \cite{kabanava2015,candes2011csdict} or cosparse modeling \cite{nam2013}, one would therefore conclude that the number of samples required for exact recovery should be equal as well. 

Since this is far from being true, one might be tempted to take the perspective of analysis compressibility, instead of insisting upon perfect sparsity.
However, the sharp phase transition in Figure~\ref{fig:intro:1d:pt_frame} as well as the decay of the analysis coefficients in Figure~\ref{fig:intro:1d:decay} show that this principle does not provide a valid explanation. 
Indeed, even the most accurate and explicit bound known from the literature \cite[Thm.~2]{kabanava2015} predicts an error rate of 
\begin{equation}\label{eq:appl:wavelets:err_kabanava2015}
	\lnorm{\grtr - \solu}[2] \leq \frac{2 \sigma_S(\aop \grtr)_1}{\sqrt{\fbl \cdot S}},
\end{equation}
supposed that, roughly speaking, $m \geq 10 \cdot (\fbu / \fbl) \cdot S \cdot\log(e \cdot N/S)$. Here, $\fbl$ and $\fbu$ are lower and upper frame bounds of $\aop$, respectively, and $\sigma_S(\aop \grtr)_1$ denotes the best $S$-term approximation error of $\aop\grtr$ (cf. Subsection~\ref{subsec:intro:notation}\ref{item:intro:notation:lp}).
The visualization of \eqref{eq:appl:wavelets:err_kabanava2015} in Figure~\ref{fig:appl:wavelets:bestSterm} indicates that a reliable reconstruction can be only expected if the sparsity $S$ exceeds the ambient dimension $n$ by a factor of $3$, which in turn requires a sampling rate $m \gg n$.
Moreover, these plots do even suggest a superiority of $\aopr$ over $\aopi$, whereas Figure~\ref{fig:intro:1d:pt_frame} shows that the opposite is true.
We finally would like to point out that an argumentation based on the matrix condition is also misleading in this setup because both matrices $\aopr$ and $\aopi$ actually have the same condition number (which equals the square root of the frame bound ratio $\fbu / \fbl$); see Table~\ref{tab:appl:wavelets}.

\begin{figure}[!t]
	\centering
	\includegraphics[width=0.5\textwidth]{Images/IntroFig/bestSterm.png}	
	\caption{Predicted error decay of \eqref{eq:appl:wavelets:err_kabanava2015} proposed by \cite[Thm.~2]{kabanava2015}.}
	\label{fig:appl:wavelets:bestSterm}
\end{figure}

Turning towards the predictive quality of Theorem~\ref{thm:results:recovery}, we observe in Figure~\ref{fig:intro:1d:pt_frame} that our bound on the sampling rate very precisely captures the location of the phase transitions for both $\aopr$ and $\aopi$.
While $\samplecompl{\aopr, \grtr}$ almost perfectly hits the center of the transition curve ($50\%$ success rate), $\samplecompl{\aopi, \grtr}$ is slightly more pessimistic, located at a rate of $75\%$ success. There seems to be a common pattern in all of our examples: The higher the sample complexity\footnote{The true sample complexity is numerically computed by means of the statistical dimension \cite{amelunxen2014edge}, see Appendix~\ref{subsec:implementation:ed} for details.} $\ed{\aop,\grtr}$, the more accurately it is approximated by $\samplecompl{\aop,\grtr}$.
On the other hand, the prediction for $\aopd$ in Figure~\ref{fig:intro:1d:pt_onb} is perfect, which is due to the fact that our bound is provably tight in the orthonormal case (cf. Proposition~\ref{prop:results:scfuncsimplyfied}).
Tabular~\ref{tab:appl:wavelets} lists a number of key quantities for all previously discussed choices of $\aop$. For the sake of completeness, in the last line, we have also documented the same parameters for a version of $\aopi$ with $3$ scales instead of $6$.

In a nutshell, we can conclude our discussion as follows:
\begin{highlight}
	The capacity of the analysis basis pursuit using highly redundant frames is not solely captured by \mbox{(co-)sparsity}. 
	In contrast, the sampling-rate bound proposed by Theorem~\ref{thm:results:recovery} reliably states when recovery is possible and when not. 
\end{highlight}

\begin{remark}\label{rmk:appl:wavelets:blocks:reweighting}
The main reason for choosing $\aopi$ and $\aopr$ as a basis of comparison in this part is twofold: firstly, both analysis operators are used in practice and allow for efficient computations, and secondly, their atoms are scalewise reweighted versions of each other; cf. \eqref{eq:appl:wavelets:reweighting}. However, one might be tempted to regard both instances as the same analysis operator and simply trace back the observed performance discrepancy to a differently weighted $\l{1}$-norm in the analysis basis pursuit; note that $\aopi$ and $\aopr$ actually have the same mutual coherence (see Table~\ref{tab:appl:wavelets}). But in fact, this gap goes beyond a weighting of the $\l{1}$-norm. To this end, let us consider an exemplary analysis operator $\aopart = [\avecart_{1} \dots \avecart_{N}]^\T$, which is (adaptively) constructed from $\aopi$ and $\grtr$ in following way: If $k \in \supp(\aopi\grtr)$, then the associated analysis vector $\avecart_{k}$ is generated as a random linear combination from elements in the set $\{ \avec_{\texttt{irdwt},k'} \suchthat k' \in \supp(\aopi\grtr)\}$; and similarly for $k \not\in \supp(\aopi\grtr)$. It is then clear that $\aopart$ and $\aopi$ are structurally very different, and in particular, their mutual coherence parameters deviate strongly from each other (see Table~\ref{tab:appl:wavelets}). By construction, however, it still holds that $\supp(\aopi\grtr) = \supp(\aopart\grtr)$ almost surely. But despite a perfectly matching analysis (co-)support, the parameters in Table~\ref{tab:appl:wavelets} reveal that also in this case $\ed{\aopart,\grtr}$ differs considerably from $\ed{\aopi,\grtr}$, while our sampling-rate bound still reflects this discrepancy.
\end{remark}

\begin{table}
\smaller{
\begin{center}
\begin{tabular}{c|cccccccccccc|c}
$\aop$ & Scales & $\fbl$ & $\fbu$ & $\nicefrac{\fbu}{\fbl}$ & $N$  & $S$ & $L$  & $\spparam{\grtr}$ & $\cospparam{\grtr}$ & $\cospparamdg{\grtr}$ & $\mu(\aop)$ & $\samplecompl{\aop, \grtr}$ & $\ed{\aop,\grtr}$\\
\hline\hline
$\aopd$ & $6$ & $1$ & $1$ & $1$ & $256$ & $41$ & $215$ & $41$ & $215$ & $215$ & $0$ & $114$ & $114$ \\
$\aopr$ & $6$ & $2$ & $64$ & $32$ & $1792$ & $906$ & $886$ & $28462$ & $5180$ & $886$ & $0.98$ & $241$ & $240$ \\
 $\aopi$ & $6$ & $\nicefrac{1}{64}$ &  $\nicefrac{1}{2}$ & $32$ & $1792$ & $906$ & $886$ & $33$ & $197$ & $ 212$ & $0.98$ & $ 100$ & $84$ \\
 $\aopi$ & $3$ & $\nicefrac{1}{8}$ & $\nicefrac{1}{2}$ & $4$ & $1024$ & $306$ & $718$ & $39$ & $194$ & $209$ & $0.88$ &  $109$ & $95$ \\
 $\aopart$ & - & $0.05$ & $31.3$ & $591$ & $1792$ & $906$ & $886$ & $1546$ & $8509$ & $276$ & $0.66$ &  $250$ & $235$ \\
\end{tabular}
\end{center}}
\caption{Characteristic parameters of the simulation in Figure~\ref{fig:intro:1d}. 
Here, $\fbl$ and $\fbu$ denote the lower and upper frame bound of $\aop$, respectively. The \mbox{(co-)sparsity} of the analysis coefficients $\aop\grtr$ are $S = \cardinality{\ssupp[\grtr]}$ and $L = \cardinality{\ssuppc[\grtr]} = N - S$, whereas $\spparam{\grtr}$, $\cospparam{\grtr}$, and $\cospparamdg{\grtr}$ correspond to the generalized sparsity terms from Definition~\ref{def:results:generalsparsity}. The \emph{normalized mutual coherence} of $\aop$ is computed by $\mu(\aop) = \max_{k \neq k'} {\abs{\sp{\avec_k}{\avec_{k'}}}/(\lnorm{\avec_k}\cdot\lnorm{\avec_{k'}})}$. The sampling-rate function $\samplecompl{\aop, \grtr}$ is defined according to Definition~\ref{def:results:samplingratefct}. The true sample complexity is denoted by $\ed{\aop,\grtr}$, see Appendix~\ref{subsec:implementation:ed}.
Note that all values are rounded.}
\label{tab:appl:wavelets}
\end{table}

\subsubsection{Asymptotic Behavior of the Sampling Rate in $n$}
\label{subsubsec:appl:wavelets:asymp}

In the setup of the previous subsection, we have fixed both the ambient dimension $n$ and the ground truth $\grtr \in \R^n$. 
Compared to that, changing $n$ does also require an adaptation of the analysis operator $\aop \in \R^{N \times n}$, so that the value of $\samplecompl{\aop,\grtr}$ is affected in a highly non-trivial way. 
For these reasons, we next investigate the accuracy of $\samplecompl{\aop,\grtr}$ if $n$ is varied, while $\grtr$ and $\aop$ are just adapted in terms of different resolution levels.
Since the theoretical guarantee of \cite[Thm.~3.3]{rkz2015} is of a similar type, it is also reported in our simulation below.
%

\begin{figure}[!t]
	\centering
	\begin{subfigure}[t]{0.45\textwidth}
		\centering
		\includegraphics[width=\textwidth]{Images/Asymptotics/bounds_irdwt.png}
		\caption{}
		\label{fig:appl:wavelets:scaling:bounds_i}
	\end{subfigure}%
	\qquad
	\begin{subfigure}[t]{0.44\textwidth}
		\centering
		\includegraphics[width=\textwidth]{Images/Asymptotics/bounds_rdwt.png}
		\caption{}
		\label{fig:appl:wavelets:scaling:bounds_r}
	\end{subfigure}%
		\qquad
	\begin{subfigure}[t]{0.45\textwidth}
		\centering
		\includegraphics[width=\textwidth]{Images/Asymptotics/RelE_p.png}
		\caption{}
		\label{fig:appl:wavelets:scaling:RelE1}
	\end{subfigure}%
			\qquad
	\begin{subfigure}[t]{0.45\textwidth}
		\centering
		\includegraphics[width=\textwidth]{Images/Asymptotics/RelE_Rauhut.png}
		\caption{}
		\label{fig:appl:wavelets:scaling:RelE2}
	\end{subfigure}%
 	\caption{Subplots~\subref{fig:appl:wavelets:scaling:bounds_i} and \subref{fig:appl:wavelets:scaling:bounds_r} visualize how the predicted sampling-rate bounds scale with the ambient dimension $n$, where $\grtr \in \R^n$ is the \texttt{blocks} signal. Subplots~\subref{fig:appl:wavelets:scaling:RelE1}  and \subref{fig:appl:wavelets:scaling:RelE2} report the relative errors with respect to $n$.}
	\label{fig:appl:wavelets:scaling}
\end{figure}

In order to create the plots of Figure~\ref{fig:appl:wavelets:scaling}, we again consider a combination of $\grtr \in \R^n$ as the \texttt{blocks} signal \cite{donoho94} and the analysis operators $\aopi, \aopr \in \R^{N \times n}$, and $\aopd \in \R^{n \times n}$, where $n \in \{2^9, 2^{10},\dots, 2^{14}\}$.
Figures~\ref{fig:appl:wavelets:scaling:bounds_i} and \ref{fig:appl:wavelets:scaling:bounds_r} reveal that the true sample complexity $\ed{\cdot,\grtr}$ and our prediction by $\samplecompl{\cdot,\grtr}$ both scale linearly with the ambient dimension $n$, with almost the same slope. We have omitted the corresponding plot for $\aopd$ here (which would also exhibit a linear growth with $n$), since $\samplecompl{\aopd,\grtr}$ yields a perfect prediction of $\ed{\aopd,\grtr}$; cf. Proposition~\ref{prop:results:scfuncsimplyfied}. While the upper bound of \cite[Thm.~3.3]{rkz2015} also scales linearly with $n$, its slope differs significantly from that of the true sample complexity, especially for $\aopi$.

To assess the mismatch of our bound with the true sample complexity, Figure~\ref{fig:appl:wavelets:scaling:RelE1} visualizes the relative error 
\begin{equation}
	\frac{\samplecompl{\cdot,\grtr} - \ed{\cdot,\grtr}}{n},
\end{equation}
showing that the error curve drops to $\sim 4\%-8\%$ for increasingly higher dimensions. The relative errors of \cite[Thm.~3.3]{rkz2015}, shown in Figure~\ref{fig:appl:wavelets:scaling:RelE2}, are substantially larger, in particular also for the orthonormal basis $\aopd$.

We emphasize that the setup of Figure~\ref{fig:appl:wavelets:scaling} considers an \emph{asymptotic} model:
While the ``resolution'' of $\aop$ gets increasingly finer as $n \to \infty$, the signal $\grtr \in \R^n$ is always generated by discretizing a piecewise constant function consisting of exactly $11$ discontinuities.
In a certain sense, this experiment pushes our theoretical approach to its limits because it is rather designed for a non-asymptotic setting, where $n$ remains fixed.
Let us briefly summarize our findings:
\begin{highlight}
The proposed prediction of Theorem~\ref{thm:results:recovery} as well as the true sample complexity both scale linearly with the ambient dimension. 
Hence, the relative deviation from the true sample complexity remains constant for larger values of $n$.
\end{highlight}

\subsubsection{Stable Recovery of Compressible Signals}
\label{subsec:appl:compr}

As discussed in the course of Subsection~\ref{subsec:results:compressible}, the significance of Theorem~\ref{thm:results:recovery} relies on the assumption that sufficiently many analysis coefficients are zero. However, perfect sparsity is rarely satisfied in practice, since real-world signals are usually only compressible (cf. Figure~\ref{fig:results:compresible}). Although one cannot expect exact recovery in these cases, it still makes sense to ask for an approximate reconstruction of $\grtr \in \R^n$ if $m$ is reasonably large, corresponding to a smooth error decay. In this part, we intend to verify this heuristic numerically and demonstrate how the stability result of Theorem~\ref{thm:results:robuststable} can be applied to such situations.

For this purpose, we continue to use $\aopi \in \R^{N \times n}$ as analysis operator, whereas the \texttt{blocks} signal clearly needs to be modified in order to achieve a compressible coefficient vector with respect to $\aopi$.
Our adapted choice of the source signal, denoted by $\compr \in \R^n$, is displayed in Figure~\ref{fig:appl:wavelets:compr:evolution} for $n=256$. More precisely, it is constructed by replacing one of the piecewise constant segments of \texttt{blocks} by a smooth curve, such that the number of non-zero coefficients increases from $906$ to $1140$, i.e., $\lnorm{\aopi\compr}[0] = 1140$. Figure~\ref{fig:appl:wavelets:compr:decay} shows that this modification particularly results in a larger region of slowly decaying analysis coefficients (between $800$ and $1140$). For a more appropriate comparison, we do not consider the original \texttt{blocks} signal as ``sparse'' counterpart, but define a simplified version $\sparse \in \R^n$ which does not have any discontinuities within the curvy segment of $\compr$, see Figure~\ref{fig:appl:wavelets:compr:evolution} for $\sparse$ and Figure~\ref{fig:intro:1d:signal} for \texttt{blocks}.

\begin{figure}[!t]
	\centering
	\begin{subfigure}[t]{0.45\textwidth}
		\centering
		\includegraphics[width=\textwidth]{Images/Compr/Evolution.png}
		\caption{}
		\label{fig:appl:wavelets:compr:evolution}
	\end{subfigure}%
	\qquad
	\begin{subfigure}[t]{0.45\textwidth}
		\centering
		\includegraphics[width=\textwidth]{Images/Compr/decay.png}
		\caption{}
		\label{fig:appl:wavelets:compr:decay}
	\end{subfigure}%
	\qquad
	\begin{subfigure}[t]{0.45\textwidth}
		\centering
		\includegraphics[width=\textwidth]{Images/Compr/Err1.png}
		\caption{}
		\label{fig:appl:wavelets:compr:err1}
	\end{subfigure}%
	\qquad
	\begin{subfigure}[t]{0.45\textwidth}
		\centering
		\includegraphics[width=\textwidth]{Images/Compr/Err2.png}
		\caption{}
		\label{fig:appl:wavelets:compr:err2}
	\end{subfigure}%
	\qquad
	\begin{subfigure}[t]{0.45\textwidth}
		\centering
		\includegraphics[width=\textwidth]{Images/Compr/Rec100.png}
		\caption{}
		\label{fig:appl:wavelets:compr:rec110}
	\end{subfigure}%
	\qquad
	\begin{subfigure}[t]{0.45\textwidth}
		\centering
		\includegraphics[width=\textwidth]{Images/Compr/Rec125.png}
		\caption{}
		\label{fig:appl:wavelets:compr:rec125}
	\end{subfigure}%
	
	\caption{\subref{fig:appl:wavelets:compr:evolution}~Compressible signal $\compr$ and its piecewise constant counterpart $\sparse$. The projection $\proj{\projnsp[S]}\compr$ yields a low-complexity approximation of $\compr$. \subref{fig:appl:wavelets:compr:decay}~Sorted $\l{1}$-normalized magnitudes of the analysis coefficients $\aopi\compr$ and $\aopi\sparse$. \subref{fig:appl:wavelets:compr:err1}~Averaged recovery error via \protect{\refabpnoiseless{\noiseparam=0}{\aopi}}. \subref{fig:appl:wavelets:compr:err2}~Relative recovery error predicted by Theorem~\ref{thm:results:robuststable} (see Experiment~\ref{exp:appl:wavelets:compr}). \subref{fig:appl:wavelets:compr:rec110} + \subref{fig:appl:wavelets:compr:rec125}~Minimizers $\solu$ of \refabpnoiseless{\noiseparam=0}{\aopi} with input $\y = \A \compr \in \R^m$.}
	\label{fig:appl:wavelets:compr}
\end{figure}

The performance of the analysis basis pursuit \refabpnoiseless{\noiseparam=0}{\aopi} is visualized in Figure~\ref{fig:appl:wavelets:compr:err1}: Here, we have applied the instructions of Experiment~\ref{exp:appl:wavelets:intro} to both $\compr$ and $\sparse$ with $M = \left\{80,81,\dots,160\right\}$ and report the averaged reconstruction errors. 
Similar to Figure~\ref{fig:intro:1d:pt_frame}, the estimation error of $\sparse$ drops to zero at $m \approx 95$. 
In contrast, $\compr$ is not exactly recovered before $m \approx 150$, but starting from $m \approx 100$, the error curve however smoothly tends to zero. This underpins our intuition that one cannot expect a perfect, but still an accurate, outcome for compressible signals. For illustration, we have also plotted two exemplary reconstructions of $\compr$ via \refabpnoiseless{\noiseparam=0}{\aopi} in Figure~\ref{fig:appl:wavelets:compr:rec110} and Figure~\ref{fig:appl:wavelets:compr:rec125} for $m=100$ and $m=125$, respectively. 
	
To invoke our theoretical framework from Subsection~\ref{subsec:results:compressible}, we first need to come up with a meaningful approximation strategy for $\compr$. More precisely, our task is to identify a subspace $\projnsp[S] \subset \R^n$, such that the rescaled projection (cf. \eqref{eq:results:robuststable:grtrsparse})
\begin{equation}
	\comprsparse \coloneqq \frac{\lnorm{\aopi\compr}[1]}{\lnorm{\aopi\proj{\projnsp[S]}\compr}[1]} \cdot \proj{\projnsp[S]} \compr \in \R^n
\end{equation}
is both of lower complexity and the resulting approximation error $\lnorm{\compr - \comprsparse}$ is small. If $\projnsp[S]$ is appropriately chosen, Theorem~\ref{thm:results:robuststable} certifies accurate recovery via \refabpnoiseless{\noiseparam=0}{\aopi}, while the involved sampling-rate function $\samplecompl{\aopi, \comprsparse}$ is of moderate size.
As already pointed out in the course of \eqref{eq:results:compressible:optimalSterm:subspace}, finding a best $S$-analysis-sparse approximation of $\compr$ with respect to $\aopi$ is not straightforward and may lead to an NP-hard problem. 
Hence, we rather propose a greedy method in Appendix~\ref{subsec:implementation:greedy}: Roughly speaking, Algorithm~\ref{exp:impl:algo} provides a subspace $\projnsp[S] \subset \R^n$ for a given $S \in [N]$ that tries to meet the above two criteria, even though this choice might be suboptimal. 
Figure~\ref{fig:appl:wavelets:compr:evolution} shows the ``evolution'' of our approach by plotting an intermediate approximation $\proj{\projnsp[S]} \compr$ with $S = \lnorm{\aopi\proj{\projnsp[S]} \compr}[0] = 900$.
The algorithm indeed first targets the area of highest curvature, approximating $\compr$ by piecewise constant segments, so that the analysis coefficients of $\comprsparse$ become sparser. However, we suspect that an optimal projection would rather replace the smooth segment in a more ``zig-zag'' like fashion.

To verify the statement of Theorem~\ref{thm:results:robuststable}, we apply the steps of Experiment~\ref{exp:appl:wavelets:compr} below.
Note that this template permits any subspace $\projnsp[S]$ with $\lnorm{\aopi\proj{\projnsp[S]} \compr}[0] \leq S$.

\begin{experiment}[Approximation of a compressible signal]\leavevmode
\label{exp:appl:wavelets:compr}
\vspace{-.25\baselineskip}\hrule\vspace{.5\baselineskip}

\myhangindent{Input: \ }
\expkwd{Input:} Signal vector $\compr \in \R^{256}$, analysis operator $\aopi \in \R^{1792 \times 256}$, approximation subspace $\projnsp[S] \subset \R^n$ for every $S \in [N]$.

\vspace{.5\baselineskip}
\expkwd{Compute:} Repeat the following procedure $50$ times for every $m \in \{ 80,81,\dots,160\}$:
\begin{expstep}
	 \item 
		Starting with $S = 1140$, decrease $S$ until $\samplecompl{\aopi,\proj{\projnsp[S]}\compr} \leq m$ is satisfied.
	\item 
		Determine $\comprsparse \coloneqq \frac{\lnorm{\aopi\compr}[1]}{\lnorm{\aopi\proj{\projnsp[S]}\compr}[1]} \cdot \proj{\projnsp[S]} \compr$ according to \eqref{eq:results:robuststable:grtrsparse}.
	\item
		Compute and store the approximation error $\lnorm{\compr - \comprsparse}[2]$. 
\end{expstep}
\end{experiment}

The results of this simulation are presented in Figure~\ref{fig:appl:wavelets:compr:err2}, plotting the error curve over the selected range for $m$.
To illustrate the benefit of our greedy method, Algorithm~\ref{exp:impl:algo}, we have repeated the instructions of Experiment~\ref{exp:appl:wavelets:compr} using a standard approximation strategy, i.e., one sets $\projnsp[S] \coloneqq \ker[(\aopi)_{\ssuppc}]$ where $\ssupp$ just corresponds to the $S$ \emph{largest coefficients} of $\aopi \compr$ in magnitude.
Note that the plots of Figure~\ref{fig:appl:wavelets:compr:err2} are only suited for a \emph{relative} comparison between both approaches. In fact, we have disregarded the impact of the tuning parameter $R$ in Theorem~\ref{thm:results:robuststable}, so that the respective curves do rather describe the qualitative behavior of the error bound \eqref{eq:results:robuststable:bound}. This particularly explains why the labels on the vertical axis are omitted in Figure~\ref{fig:appl:wavelets:compr:err2}.

For $m \geq 100$ measurements, the predicted curve based on the greedy choice of $\projnsp[S]$ strongly resembles the true recovery error in Figure~\ref{fig:appl:wavelets:compr:err1}. This stands in contrast to the naive approach (``largest coeff.''), which is apparently not able to exploit the compressibility of $\compr$.
The unfavorable ``jump'' at $m \approx 138$ is due to the ignorance of linear dependencies within $\aopi$:
For example, when enforcing that 
\begin{equation}
	\sp{\proj{\projnsp[S]}\compr}{\avec_k} = 0
\end{equation}
for a high-scale wavelet frame vector $\avec_k \in \R^n$ of $\aopi$, the same orthogonality relation does automatically hold true for many more coefficients at lower scales.
Such a clustering of coefficients eventually leads to poor $\l{2}$-approximations of $\compr$. 
Our greedy selection procedure avoids this drawback by implicitly respecting the multilevel structure of wavelets; see also \cite{dragotti2003}.
Let us again draw a brief conclusion:
\begin{highlight}
	The recovery error of analysis compressible signals typically tends smoothly to zero as $m$ grows. 
	The stability result of Theorem~\ref{thm:results:robuststable}, combined with an appropriate greedy approximation scheme, adequately reflects this decay behavior. 
\end{highlight}

\subsubsection{Phase Transition for Piecewise Constant Signals}
\label{subsubsec:appl:wavelets:pt}

Up to now, we have only analyzed a specific signal (\texttt{blocks}) with respect to certain Haar wavelet frames.
Although the resolution of $\grtr \in \R^n$ was adapted to different dimensions $n$, its geometric features (the number of piecewise constant segments) remained untouched.
On the other hand, it is also of interest to assess how well our main result, Theorem~\ref{thm:results:recovery}, predicts the phase transition of \refabpnoiseless{\noiseparam=0}{\aop} if the ``complexity'' of $\grtr$ is changed. 
In the classical situation of an orthonormal analysis operator (e.g., $\aopd$), the success of recovery is completely determined by the sparsity $S = \lnorm{\aop\grtr}[0]$, implying that it is quite natural to create phase transition plots over $S$, see \cite{amelunxen2014edge}.
But such a simple indicator for complexity does often not exist for (highly) redundant systems, which is in fact one of the key findings of this work.
However, recalling that wavelets were specifically developed for the detection of singularities, one may argue that the number of jumps characterizes a signal much better when working, for example, with $\aopi$ . 
The following experiments show that this heuristic of \emph{total variation sparsity} (\emph{TV-sparsity})\footnote{The TV-sparsity (or gradient sparsity) of $\grtr \in \R^n$ is given by $S_{\text{TV}} \coloneqq \lnorm{\aoptv \grtr}[0]$, where $\aoptv \in \R^{(n-1)\times n}$ is a \emph{finite difference operator in 1D}. For a precise definition, see Subsection~\ref{subsec:appl:tv}.} indeed serves as an appropriate surrogate of analysis sparsity in the context of piecewise constant functions.

Our first simulation is generated according to the following template with $n = 128$, $\aop = \aopi$, and $\mathscr{S} = \left\{0,1,\dots,127\right\}$:
\begin{experiment}[Phase transition for piecewise constant signals]\leavevmode
\label{exp:appl:wavelets:pt}
\vspace{-.25\baselineskip}\hrule\vspace{.5\baselineskip}

\myhangindent{Input: \ }
\expkwd{Input:} Fixed ambient dimension $n$, analysis operator $\aop \in \R^{N \times n}$, range of TV-sparsity\newline$\mathscr{S} \subset \{0,1,\dots,n-1\}$.

\vspace{.5\baselineskip}
\expkwd{Compute:} Repeat the following procedure $50$ times for every $S_{\text{TV}} \in \mathscr{S}$:
\begin{expstep}
	\item 
	Select a random set $\ssupp[\text{TV}] \subset [n-1]$ with $\cardinality{\ssupp[\text{TV}]} = S_{\text{TV}}$ and determine an orthonormal basis $\basis$ of $\ker[(\aoptv)_{\ssuppc[\text{TV}]}]$. Then draw a standard Gaussian random vector $\vec{c}$ and set $\grtr \coloneqq \basis\vec{c}$. 
	\item 
	Repeat the following steps $10$ times for each $m \in \{1,2,\dots,n\}$:
	\begin{expsubstep}
		\item 
		Draw a standard i.i.d. Gaussian random matrix $\A \in \R^{m\times n}$ and determine the measurement vector $\y = \A\grtr$.
		\item 
		Solve the analysis basis pursuit \eqref{eq:intro:anabpnoiseless} to obtain an estimator $\solu \in \R^n$. 
		\item
		Declare success if $\lnorm{\grtr - \solu}[2] < 10^{-5}$.	
	\end{expsubstep}
\end{expstep}
\end{experiment}
Two examples of the signal-generation step in Experiment~\ref{exp:appl:wavelets:pt} are shown in Figure~\ref{fig:appl:wavelets:signal}. The actual phase transition plot of Figure~\ref{fig:intro:pt:128} is then created by computing the empirical mean of the success rates from Experiment~\ref{exp:appl:wavelets:pt}.
The annotated curves visualize the sampling-rate bounds proposed by Theorem~\ref{thm:results:recovery} and \cite[Thm.~3.3]{rkz2015}, respectively.
For the second run in Figure~\ref{fig:intro:pt:512}, we have invoked Experiment~\ref{exp:appl:wavelets:pt} with higher dimension $n = 512$ and $\aop = \aopi$, but using a slightly coarser grid $\mathscr{S}=\{1,5,9,\dots,509\}$ to reduce the computational burden. 

\begin{figure}[!t]
	\centering
	\begin{subfigure}[t]{0.45\textwidth}
		\centering
		\includegraphics[width=\textwidth]{Images/PT-Wave/signal1.png}
		\caption{}
		\label{fig:appl:wavelets:pt:signal1}
	\end{subfigure}%
	\qquad
	\begin{subfigure}[t]{0.45\textwidth}
		\centering
		\includegraphics[width=\textwidth]{Images/PT-Wave/signal2.png}
		\caption{}
		\label{fig:appl:wavelets:pt:signal2}
	\end{subfigure}%
	\caption{Two signal vectors $\grtr \in \R^n$ generated by the first step of Experiment~\ref{exp:appl:wavelets:pt} with different choices of TV-sparsity.}
	\label{fig:appl:wavelets:signal}
\end{figure}

\begin{figure}[!t]
	\centering
	\begin{subfigure}[t]{0.45\textwidth}
		\centering
		\includegraphics[width=\textwidth]{Images/PT-Wave/PT_Wave128.png}
		\caption{}
		\label{fig:intro:pt:128}
	\end{subfigure}%
	\qquad
	\begin{subfigure}[t]{0.45\textwidth}
		\centering
		\includegraphics[width=\textwidth]{Images/PT-Wave/PT_Wave512.png}
		\caption{}
		\label{fig:intro:pt:512}
\end{subfigure}%
	
	\caption{Phase transition plots for piecewise constant signals analyzed with $\aopi$ ($S_{\text{TV}}$ is the number of discontinuities of $\grtr \in \R^n$).
	The blue curve is obtained by computing the empirical mean of the sampling-rate function $\samplecompl{\aopi,\grtr}$ within each iteration step $S_{\text{TV}} \in \mathscr{S}$ of Experiment~\ref{exp:appl:wavelets:pt}. The same procedure was performed for the orange curve corresponding to \cite[Thm.~3.3]{rkz2015}.}
	\label{fig:intro:pt}
\end{figure}

Interestingly, the results of Figure~\ref{fig:intro:pt} do strongly resemble classical phase transitions, e.g., as reported in \cite{amelunxen2014edge}.
This observation is somewhat surprising because the (averaged) coefficient sparsity $S = \lnorm{\aopi\grtr}[0]$, displayed on the top of the plots in Figure~\ref{fig:intro:pt}, appears detached in our setting.
Regarding accuracy, we can conclude that $\samplecompl{\aopi,\grtr}$ captures the location of the phase transition fairly well, whereas \cite[Thm.~3.3]{rkz2015} provides a much worse prediction. 

\begin{figure}[!t]
	\centering
	\begin{subfigure}[t]{0.45\textwidth}
		\centering
		\includegraphics[width=\textwidth]{Images/PT-Wave/PT_Wave128_Haar_2Level.png}
		\caption{}
		\label{fig:intro:pt2:128}
	\end{subfigure}%
	\qquad
	\begin{subfigure}[t]{0.45\textwidth}
		\centering
		\includegraphics[width=\textwidth]{Images/PT-Wave/PT_Wave512_Haar_2Level.png}
		\caption{}
		\label{fig:intro:pt2:512}
	\end{subfigure}%
	
	\caption{Phase transition plots for piecewise constant signals similar to Figure~\ref{fig:intro:pt}, but this time $\aop = \aopi$ has only $2$ decomposition levels instead of $6$.}
	\label{fig:intro:pt2}
\end{figure}

In order to complement the previous experiment, we have repeated the same simulations, but this time $\aop = \aopi$ does only have $2$ decomposition levels instead of $6$. Note that the resulting analysis operators are now less redundant and have a different coherence structure as well as different characteristic parameters; cf.~Table \ref{tab:appl:wavelets}. The resulting phase transitions are shown in Figure~\ref{fig:intro:pt2}, and as before, we observe that the sampling-rate function $\samplecompl{\aopi,\grtr}$ captures their location almost perfectly. Moreover, as one might expect, the reconstruction capacity decreases significantly by computing fewer decomposition levels, i.e., distinguishing fewer scales and having less redundancy. Let us emphasize that the coefficient sparsity $S = \lnorm{\aopi\grtr}[0]$ does not reflect this behavior; indeed, the sparsity displayed on top of Figure~\ref{fig:intro:pt2} is considerably smaller than the corresponding values in Figure~\ref{fig:intro:pt}.

\enlargethispage{1.5\baselineskip}
In a nutshell, the key message of this subsection reads as follows:
\begin{highlight}
	The number of discontinuities governs the complexity of piecewise constant signals, enabling us to create appropriate phase transition plots for redundant Haar wavelet frames.
	The resulting transition curves are very accurately described by our sampling-rate bound in Theorem~\ref{thm:results:recovery}.
\end{highlight}

\begin{remark}
	At this point, it is also worth revisiting Subsection~\ref{subsubsec:appl:wavelets:asymp}: The simulation of Figure~\ref{fig:appl:wavelets:scaling} investigates the behavior of the sample complexity when the TV-sparsity of $\grtr \in \R^n$ remains constant while $n$ grows.
	Conceptually, this corresponds to invoking Experiment~\ref{exp:appl:wavelets:pt} for very large values of $n$ and studying the left end of the resulting phase transition plot. The relative error in Figure~\ref{fig:appl:wavelets:scaling:RelE1} therefore particularly reflects the small deviation of our prediction from the truth in Figure~\ref{fig:intro:pt} if $S_{\text{TV}}$ is small.
	In practice, however, the choice of $\aop$ is usually adapted to the signal's resolution level, implying that one is rather interested in those vertical cross sections of transition plots for which the ratio $S_{\text{TV}} / n$ is of moderate size.
\end{remark}

\subsubsection{Phase Transition for Continuous, Piecewise Linear Signals}
\label{subsubsec:appl:wavelets:pt2}

In all previous experiments of this section, we have studied a combination of piecewise constant signals and different types of Haar wavelet operators. 
The purpose of this subsection is to demonstrate that the sampling-rate bound proposed by Theorem~\ref{thm:results:recovery} is also capable of accurate predictions for more sophisticated wavelet operators and different signal classes. To this end, let us consider the same experimental setup as in Subsection~\ref{subsubsec:appl:wavelets:pt}, but now $\aop = \aopi$ is  associated with a \emph{Daubechies wavelet} with two vanishing moments and $6$ decomposition levels. Furthermore, we adapt the signal regularity to the increased number of vanishing moments by analyzing continuous, piecewise linear signals instead of piecewise constant ones. More specifically, we follow the instructions of Experiment~\ref{exp:appl:wavelets:pt} where the first step is replaced by
\begin{expstep}
\item 
	Select a random set $\ssupp[\text{kink}] \subset [n-1]$ with $\cardinality{\ssupp[\text{kink}]} = S_{\text{TV}}$ and determine an orthonormal basis $\basis$ of $\ker[(\aoptv)_{\ssuppc[\text{kink}]}]$. Then draw a standard Gaussian random vector $\vec{c}$ and set $\grtr \coloneqq \texttt{cumsum}(\basis\vec{c})$, i.e., $x_j^\ast = \sum_{l = 1}^j (\basis\vec{c})_l$ for $j = 1, \dots, n$.
\end{expstep}
In other words, a ground truth signal $\grtr$ is generated by first drawing a random piecewise constant signal and afterwards computing the cumulative sum; note that $S_{\text{TV}}$ then determines the number of ``kinks'' but not the number of jumps.
The resulting phase transitions for ambient dimensions $n=128$ and $n=512$ are shown in Figure~\ref{fig:intro:pt_daub:128} and Figure~\ref{fig:intro:pt_daub:512}, respectively.  
We observe that the number of kinks indeed governs the sample complexity of the considered signal class, enabling us to create phase transition plots for redundant Daubechies wavelets with more than one vanishing moments. More importantly, it turns out that $\samplecompl{\aopi,\grtr}$ again captures the locations of transition fairly well, whereas \cite[Thm.~3.3]{rkz2015} provides a much more pessimistic prediction.  

Let us conclude our discussion by pointing out that the recovery of more regular signals requires more vanishing moments. In fact, when just using the Haar wavelet operator from the previous subsection, perfect reconstruction would only be possible with nearly $m\approx n$ measurements. On the other hand, since the support size of the wavelet atoms grows with more vanishing moments, the latter quantity should not be chosen too large. We refer to \cite{mallat09} for a related discussion of this trade-off in a general context.

\begin{figure}[!t]
	\centering
	\begin{subfigure}[t]{0.45\textwidth}
		\centering
		\includegraphics[width=\textwidth]{Images/PT-Wave/PT_Wave128_Daub_6Level_integrate.png}
		\caption{}
		\label{fig:intro:pt_daub:128}
	\end{subfigure}%
	\qquad
	\begin{subfigure}[t]{0.45\textwidth}
		\centering
		\includegraphics[width=\textwidth]{Images/PT-Wave/PT_Wave512_Daub_6Level_integrate.png}
		\caption{}
		\label{fig:intro:pt_daub:512}
	\end{subfigure}%
	
	\caption{Phase transition plots for continuous, piecewise linear signals analyzed with $\aopi$, where $\aopi$ is associated with a Daubechies wavelet with two vanishing moments ($S_{\text{TV}}$ here denotes the number of ``kinks'' of $\grtr \in \R^n$). The blue curve is obtained by computing the empirical mean of the sampling-rate function $\samplecompl{\aopi,\grtr}$ within each iteration step $S_{\text{TV}} \in \mathscr{S}$ of Experiment~\ref{exp:appl:wavelets:pt}. The same procedure was performed for the orange curve corresponding to \cite[Thm.~3.3]{rkz2015}.}
	\label{fig:intro:pt_daub}
\end{figure}

\subsection{Total Variation}
\label{subsec:appl:tv}

In this section, we consider a fundamentally different, yet classical example of an analysis operator, namely \emph{total variation in 1D (TV-1)}. It originates from the seminal work of \cite{rudin1992}, investigating the problem of signal denoising. Although conceptually quite simple, total variation has proven to be a highly effective prior in regularizing inverse problems and therefore became one of the most popular analysis operators used in practice. 

Perhaps, the most striking structural difference to the wavelet-based approach of the previous subsection is that the \emph{finite difference operator}
\begin{align}
	 \aoptv  \coloneqq \matr{
	  -1 & 1 & 0 & \dots & 0 \\
	  0 & -1 & 1 & 0 & 0\\
	  \vdots & & \ddots &\ddots & \vdots \\
	  0 & \dots & 0 & -1 & 1} \in \R^{(n-1) \times n}
\end{align}
does not constitute a frame for $\R^n$. 
Note that there exist numerous variants of total variation imposing different boundary conditions. The above choice uses forward differences with von Neumann boundary conditions and is common in the field of compressed sensing.

To assess the quality of Theorem~\ref{thm:results:recovery}, it makes sense to follow the same strategy as in Subsection~\ref{subsubsec:appl:wavelets:pt}.
Indeed, the TV-based basis pursuit \refabpnoiseless{\noiseparam=0}{\aoptv} promotes piecewise constant output signals, which indicates that TV-sparsity is again the correct quantity to study.
Our first phase transition plot in Figure~\ref{fig:appl:tv:random} is generated according to Experiment~\ref{exp:appl:wavelets:pt}, with $n=300$, $\aop = \aoptv$ and $\mathscr{S} = \left\{1,4,\dots,298 \right\}$.
The second simulation in Figure~\ref{fig:appl:tv:dense} repeats the same procedure, but the first step of Experiment~\ref{exp:appl:wavelets:pt} is now replaced by
\begin{expstep}
\item 
	Generate $\grtr \in \R^n$ by setting $x_j^\ast = (-1)^{j-1}$ for $j=1,2,\dots,S_{\text{TV}}$ and $x_j^\ast = 0$ otherwise.  
\end{expstep}
Even though the respective TV-sparsity coincides in both signal-generation steps, the two transition curves of Figure~\ref{fig:appl:tv} look somewhat different.
This observation has a remarkable implication: While TV-sparsity is usually considered to be a heuristic measure of complexity for piecewise constant functions, it does not fully characterize the capacity of \refabpnoiseless{\noiseparam=0}{\aoptv}.
Similar to the example of wavelets in Subsection~\ref{subsec:appl:wavelets}, we therefore suspect that a sound analysis of the finite difference operator must be signal-dependent to a certain extent, depending on the specific structure of $\grtr$.
In fact, the sampling-rate function $\grtr \mapsto \samplecompl{\aoptv,\grtr}$ seems to meet this desire for signal-dependence, since the shape of the blue curves in Figure~\ref{fig:appl:tv} adapts to each of the cases. But we have to clearly confess that our predictions are only reliable in the regimes of mid- and high-level sparsity.
In contrast, the bound of \cite[Thm.~3.1]{rkz2015} is less significant and particularly remains unchanged for both signal classes.
Let us conclude our discussion as follows:
\begin{highlight}
	Total variation minimization exhibits a signal-dependent recovery behavior, which is also captured by the sampling-rate bound of Theorem~\ref{thm:results:recovery}.
	However, if the TV-sparsity is very small, our prediction is much less accurate than for redundant wavelet frames.
\end{highlight}

\begin{figure}[!t]
	\centering
	\begin{subfigure}[t]{0.45\textwidth}
		\centering
		\includegraphics[width=\textwidth]{Images/PT-TV/PT_Random.png}
		\caption{}
		\label{fig:appl:tv:random}
	\end{subfigure}%
	\qquad
	\begin{subfigure}[t]{0.45\textwidth}
		\centering
		\includegraphics[width=\textwidth]{Images/PT-TV/PT_DenseJumps.png}
		\caption{}
		\label{fig:appl:tv:dense}
	\end{subfigure}%
	\caption{Phase transition plots for \refabpnoiseless{\noiseparam=0}{\aoptv} using~\subref{fig:appl:tv:random} signals built of random piecewise constant segments and~\subref{fig:appl:tv:dense} signals built of ``dense jumps.'' The blue curve is obtained by computing the empirical mean of $\samplecompl{\aoptv,\grtr}$ for every iteration step $S_{\text{TV}} \in \mathscr{S}$ of Experiment~\ref{exp:appl:wavelets:pt}, and the orange curve corresponds to \cite[Thm.~3.1]{rkz2015}.}
	\label{fig:appl:tv}
\end{figure}

\begin{remark}[Related Literature]
	Compared to redundant wavelet frames, the $\l{1}$-analysis formulation of total variation is actually much better theoretically understood.
	For example, the work of Cai and Xu \cite{cai2015} proves that the optimal sampling rate of \refabpnoiseless{\noiseparam=0}{\aoptv} is given by $m = \asympeq{\sqrt{S_{\text{TV}} \cdot n}}$ for Gaussian measurements, where $S_{\text{TV}} = \lnorm{\aoptv\grtr}[0]$ and log-terms are ignored.
	The corresponding proofs are also based on estimating the Gaussian mean width, although using very different techniques that are specifically tailored to $\aoptv$.
	The guarantees of \cite{cai2015} underpin once again the fundamental role of TV-sparsity, but as pointed out above, a refined \emph{non-asymptotic} analysis would eventually rely on additional geometric properties of the ground truth signal.
	For a more extensive discussion of total variation in compressed sensing, see \cite{krahmer2017} and the references therein.
\end{remark}

\subsection{Tight Random Frames}
\label{subsec:appl:onbrandom}

Proposition~\ref{prop:results:scfuncsimplyfied} confirms the optimality of our sampling-rate bound for orthonormal bases, and indeed, $\grtr \mapsto \samplecompl{\aop,\grtr}$ precisely describes the phase transition curve of \eqref{eq:intro:anabpnoiseless} in this case, see \cite[Prop.~4.5]{amelunxen2014edge}.
However, the situation already becomes more complicated for a closely related class of analysis operators, namely those generated from standard Gaussian matrices. These constructions almost surely yield (tight) frames in general position\footnote{A matrix $\aop \in \R^{N \times n}$ with $N \geq n$ is in general position if every subset of $n$ rows is linearly independent.} and have been widely studied in the literature as a benchmark example of the analysis formulation \cite{nam2013,kabanava2015}. 	

In this part, we roughly follow a construction of tight random frames from \cite{nam2013}: First, draw an $N\times n$ Gaussian random matrix and compute its singular value decomposition $\vec{U}\vec{\Sigma}\vec{V}^\T$. If $N \geq n$, we replace $\vec{\Sigma}$ by the matrix $[ \I{n}, \vnull]^\T \in \R^{N\times n}$, which yields a tight frame
\begin{equation}
	\aoprand \coloneqq \vec{U} ([\I{n}, \vnull]^\T) \vec{V}^\T \in \R^{N \times n}.
\end{equation}
If $N<n$, we replace $\vec{\Sigma}$ by $[ \I{N}, \vec{0}] \in \R^{N\times n}$ and set $\aoprand \coloneqq \vec{U} [\I{N}, \vnull] \vec{V}^\T \in \R^{N \times n}$.
But note that this operator does not form a frame, and in particular, $\aoprand^\T \aoprand \neq \I{n}$.  

Our first phase transition plot is created according to Experiment~\ref{exp:appl:onb:pt} below with $n = 300$, $N = 350$, and
\begin{equation}
	\mathscr{S} = \{ (51, N - 51), (54, N - 54), (57, N - 57),\dots, (297,N - 297) \}.
\end{equation}
The set of tuples is chosen such that for every pair $(S, L) \in \mathscr{S}$, it holds $S + L = N = 350$. Moreover, we do only consider $S > 50$, since otherwise $\ker[(\aoprand)_{\ssuppc}] = \{\vnull\}$ due to the general position property of $\aoprand$. 
The experimental result in Figure~\ref{fig:appl:onbrandom:parse} shows that the sampling-rate bound of Theorem~\ref{thm:results:recovery} almost perfectly hits the transition curve. 
\begin{experiment}[Phase transition for random analysis operators]\leavevmode
\label{exp:appl:onb:pt}
\vspace{-.25\baselineskip}\hrule\vspace{.5\baselineskip}

\myhangindent{Input: \ }
\expkwd{Input:} Fixed ambient dimension $n$, range of sparsity-cosparsity tuples $\mathscr{S} \subset \N_0 \times \N_0$.

\vspace{.5\baselineskip}
\expkwd{Compute:} Repeat the following procedure $5$ times for each pair $(S,L) \in \mathscr{S}$:
\begin{expstep}
	 \item 
		Set $N = S + L$ and construct a random operator $\aoprand \in \R^{N \times n}$ as described above. 
		Select a random set $\ssupp \subset [N]$ with $\cardinality{\ssupp} = S$ and determine an orthonormal basis $\basis$ of $\ker[(\aoprand)_{\ssuppc}]$.
		Then draw a standard Gaussian random vector $\vec{c}$ and set $\grtr \coloneqq \basis\vec{c}$.
	\item 
		Repeat the following steps $10$ times for each $m \in \{1,2,\dots,n\}$:
	\begin{expsubstep}
	\item 
		Draw a standard i.i.d.\ Gaussian random matrix $\A \in \R^{m\times n}$ and determine the measurement vector $\y = \A\grtr$.
	\item 
		Solve the analysis basis pursuit \refabpnoiseless{\noiseparam=0}{\aoprand} to obtain an estimator $\solu \in \R^n$. 
	\item
		Declare success if $\lnorm{\grtr - \solu}[2] < 10^{-5}$.	
	\end{expsubstep}
\end{expstep}
\end{experiment}

\begin{figure}[!t]
	\centering
		\includegraphics[width=0.6\textwidth]{Images/PT-Rand/PT_Random_Parse.png}
		\caption{Phase transition of \refabpnoiseless{\noiseparam=0}{\aoprand} for tight random frames $\aoprand \in \R^{N \times n}$. The blue curve is obtained by computing the empirical mean of $\samplecompl{\aoprand,\grtr}$ for every iteration step $(S,L) \in \mathscr{S}$ of Experiment~\ref{exp:appl:onb:pt}. The labels of the horizontal axis do only display the first component of $\mathscr{S}$.}
		\label{fig:appl:onbrandom:parse}
\end{figure}

\begin{figure}[!t]
	\centering
	\begin{subfigure}[t]{0.45\textwidth}
		\centering
		\includegraphics[width=\textwidth]{Images/PT-Rand/PT_RandomS.png}
		\caption{}
		\label{fig:appl:onbrandom:S}
	\end{subfigure}%
	\qquad
	\begin{subfigure}[t]{0.45\textwidth}
		\centering
		\includegraphics[width=\textwidth]{Images/PT-Rand/PT_RandomL.png}
		\caption{}
		\label{fig:appl:onbrandom:L}
	\end{subfigure}%
	\caption{Phase transition of \refabpnoiseless{\noiseparam=0}{\aoprand} for random frames $\aoprand \in \R^{N \times n}$ with either $L$ or $S$ fixed. 
	\subref{fig:appl:onbrandom:S}~$L = 250$ is fixed and the labels of the horizontal axis do only display the sparsity of $\mathscr{S}$. \subref{fig:appl:onbrandom:L}~$S = 50$ is fixed and the labels of the horizontal axis do only display the cosparsity of $\mathscr{S}$.}
	\label{fig:appl:onbrandom:varySL}
\end{figure}

In Figure~\ref{fig:appl:onbrandom:parse}, we have selected the pairs $(S, L)$ such that both $n$ and $N$ remain fixed. 
The purpose of the second and third run of Experiment~\ref{exp:appl:onb:pt} is to highlight the non-trivial impact of both parameters by varying either one of them (and adapting $N$).
Figure~\ref{fig:appl:onbrandom:S} is obtained by fixing $L = 250$ and applying Experiment~\ref{exp:appl:onb:pt} with $n = 300$
and
\begin{equation}
	\mathscr{S} = \{ (1, L), (4, L), (7, L),\dots, (100,L) \}.
\end{equation}
In Figure~\ref{fig:appl:onbrandom:L}, we have fixed $S = 50$ and invoked Experiment~\ref{exp:appl:onb:pt} with $n = 300$ and
\begin{equation}
	\mathscr{S} = \{ (S, 200), (S, 203), (S, 206),\dots, (S,299) \}.
\end{equation}

The transition curves of Figure~\ref{fig:appl:onbrandom:S} and Figure~\ref{fig:appl:onbrandom:L} are obviously not constant in $S$ and $L$, respectively, which verifies once again that neither sparsity nor cosparsity can thoroughly quantify the recovery performance of $\l{1}$-analysis minimization.  
Note that the plots of Figure~\ref{fig:appl:onbrandom:varySL} do actually consist of two regions for which the shapes of the curves are somewhat different. This is due to the case distinction between $N \geq n$ and $N < n$, i.e., whether $\aoprand \in \R^{N \times n}$ forms a frame or not.
Interestingly, Figure~\ref{fig:appl:onbrandom:L} also reveals another drawback of the traditional bound in \eqref{eq:intro:boundliterature}: 
If $S$ remains fixed, the analysis dimension $N = S + L$ increases with $L$, so that one would expect a logarithmic growth of the number of required measurements. 
But Figure~\ref{fig:appl:onbrandom:L} shows that the true sample complexity even decreases for larger values of $L$. 
Regarding Theorem~\ref{thm:results:recovery}, we can again conclude that the sampling-rate function $\samplecompl{\aoprand,\grtr}$ captures the phase transition almost perfectly in both scenarios. The prediction of \cite[Thm.~3.3]{rkz2015} also reflects the shapes of the transition curves, but is still worse than ours.\footnote{Note that, although \cite[Thm.~3.3]{rkz2015} is only stated for frames, it literally holds true for any choice of $\aop \in \R^{N \times n}$ where the upper frame bound just needs to be replaced by $\lnorm{\aop}[2\to 2]^2$.}

\subsection{Analysis Operators for 2D Signals}
\label{subsec:appl:2d}

We have only considered one-dimensional signals up to now, in particular, the class of piecewise constant functions in Subsection~\ref{subsec:appl:wavelets} and Subsection~\ref{subsec:appl:tv}.
However, signals in higher dimensions typically exhibit a richer geometric structure, such as anisotropic features. 
For example, it has been observed in the literature that the sample complexity of total variation minimization in 2D scales very differently compared to its counterpart in 1D \cite{needell2013,cai2015, krahmer2017}.
Let us therefore also conduct a simple experiment in 2D, using finite differences, redundant Haar wavelets as well as the 2D discrete cosine transform (2D-DCT) as analysis operators.
In order to reduce the immense computational burden of creating a phase transition plot, we just restrict ourselves to a specific image signal here.
Figure~\ref{fig:appl:2D:X} illustrates our choice of $\grtr$, which is a realistic brain phantom for magnetic resonance imaging~\cite{GLPU} at a resolution of $100 \times 100$.

Recovery based on total variation in 2D aims at retrieving images from compressed measurements by promoting sparse gradients, i.e., piecewise constant signals. 
As in 1D, there exist multiple variants of total variation minimization, depending on different boundary conditions and on how the $\l{1}$-norm of the discrete gradient is calculated. To keep the exposition as brief as possible, we focus on the simple case where an anisotropic finite difference operator, based on periodic forward differences, is used and $\grtr$ corresponds to an $\tilde{n} \times \tilde{n}$ grayscale image. Thus, up to modifications of the boundary values, the entries of the discrete (forward) gradient in 2D are defined by
\begin{equation}
(\gradient \grtr)_{j,j'} \coloneqq \matr{ (\gradient_1 \grtr)_{j,j'} \\ (\gradient_2 \grtr)_{j,j'}} \coloneqq \matr{ x_{j+1,j'}^\ast - x_{j,j'}^\ast \\ x_{j,j'+1}^\ast - x_{j,j'}^\ast}.
\end{equation}
For the sake of convenience, we will now identify the signal domain $\R^{\tilde{n} \times \tilde{n}}$ with its canonical columnwise vectorization $\R^{\tilde{n}^2}$, i.e., the ambient space is of dimension $n \coloneqq \tilde{n}^2$ and $\grtr \in \R^{n}$. 
The associated 2D finite difference operator then takes the following form:
\begin{align}
 \aoptvD \coloneqq [ \aoptv \otimes \I{\tilde{n}}, \I{\tilde{n}} \otimes \aoptv ]^\T \in \R^{2(\tilde{n}-1)\tilde{n} \times n}.
\end{align}
To define a redundant Haar wavelet frame in 2D, we again rely on the (two-dimensional) inverse wavelet transform with $2$ scales provided by the software package \texttt{spot}~\cite{spot}, and denote it by $\aoprD \in \R^{7n \times n}$.
Finally, we use the standard \texttt{Matlab} implementation to compute the 2D-DCT transform on $4\times 4$ patches, omitting the DC coefficient. The resulting analysis operator $\aopdct \in \R^{15n \times n}$ is then applied to the entire image via a common sliding-window technique for obtaining the coefficients on all $n$ patches. Note that such a convolutional model for computing analysis operators is frequently used in the literature, in particular also for learned analysis operators, e.g., see \cite{hawe2013}.  Although the DCT filters are not a perfect match for the piecewise constant image signal $\grtr$, we consider them as a classical representative that is well suited to demonstrate the predictive power of our framework.

We assess the recovery capability very similarly to Figure~\ref{fig:appl:wavelets:compr:err1}:
Experiment~\ref{exp:appl:wavelets:intro} is invoked for $\aop = \aoptvD$, $\aop = \aoprD$, and $\aop = \aopdct$, where $n = \tilde{n}^2 = 10000$ and $M = \{1,26,51,\dots\}$;
note that only $12$ repetitions were performed here instead of $50$.
The plots of Figure~\ref{fig:appl:2D:TV}, \ref{fig:appl:2D:wave}, and~\ref{fig:appl:2D:dct} visualize the averaged error curves for \refabpnoiseless{\noiseparam=0}{\aoptvD}, \refabpnoiseless{\noiseparam=0}{\aoprD}, and \refabpnoiseless{\noiseparam=0}{\aopdct}, respectively. Unfortunately, solving these minimization problems via \texttt{cvx} with $n=10000$ is not feasible anymore, so that we have used an implementation based on the \emph{alternating direction method of multipliers} instead; see \cite{boyd2011} and references therein. 

\begin{figure}[!t]
	\centering
	\begin{subfigure}[t]{0.45\textwidth}
		\centering
		\includegraphics[height=0.65\textwidth]{Images/2D/X.png}
		\caption{}
		\label{fig:appl:2D:X}
	\end{subfigure}%
	\begin{subfigure}[t]{0.45\textwidth}
		\centering
		\includegraphics[height=0.63\textwidth]{Images/2D/PT_TV2.png}
		\caption{}
		\label{fig:appl:2D:TV}
	\end{subfigure}%

	\begin{subfigure}[t]{0.45\textwidth}
		\centering
		\includegraphics[height=0.63\textwidth]{Images/2D/PT_Wave2.png}
		\caption{}
		\label{fig:appl:2D:wave}
	\end{subfigure}%
    \begin{subfigure}[t]{0.46\textwidth}
		\centering
		\includegraphics[height=0.62\textwidth]{Images/2D/PT_DCT2.png}
		\caption{}
		\label{fig:appl:2D:dct}
	\end{subfigure}%
	\caption{\subref{fig:appl:2D:X}~Visualization of the ground truth image signal $\grtr \in \R^{100^2} \cong \R^{100\times 100}$ (``brain phantom'' \cite{GLPU}).
	\subref{fig:appl:2D:TV}+\subref{fig:appl:2D:wave}+\subref{fig:appl:2D:dct}~Average recovery error achieved by \refabpnoiseless{\noiseparam=0}{\aoptvD},~\refabpnoiseless{\noiseparam=0}{\aoprD}, and \refabpnoiseless{\noiseparam=0}{\aopdct}, respectively.}
	\label{fig:appl:2D}
\end{figure}

Similar to the 1D case, it can be observed that the coefficient sparsity $\lnorm{\aoprD \grtr}[0] = 25897$ is substantially larger than the ambient dimension $n = 10000$. For the $15$-times redundant operator $\aopdct$ this observation is even more striking with  $\lnorm{\aopdct \grtr}[0] = 65056$.
Nevertheless, there appears to be a sharp phase transition at $m\approx 5000$ measurements in Figure~\ref{fig:appl:2D:wave} and a transition at $m \approx 7100$ in Figure~\ref{fig:appl:2D:dct}. The estimate of Theorem~\ref{thm:results:recovery} perfectly captures the phase transition for the 2D-DCT operator. In the wavelet case, the prediction $m\approx 5500$  is still quite accurate, corresponding to a relative error of about $5\%$. In contrast, our bound is slightly more pessimistic for 2D total variation (relative error about $10\%$) but still quite reliable. 
This is obviously not the case for \cite{rkz2015}, whose predictions deviate much more strongly from the true sample complexity. 
Finally, it is worth mentioning that the above experimental setup may not be suited for a comparison between these operators in respect of analysis modeling. 
We suspect for instance that redefining $\aoprD$ with more scales and a more sophisticated weighting would considerably improve the outcomes of \refabpnoiseless{\noiseparam=0}{\aoprD}.


\section{Related Literature}
\label{sec:discussion}

In this part, we survey existing theoretical approaches to the analysis formulation and put them into context with our findings. 
Our discussion starts in Subsection~\ref{subsec:discussion:anavssyn} with a comparison to the \emph{synthesis formulation}, which is also widely used in compressed sensing and promotes a somewhat different viewpoint on sparse representations.
In Subsection~\ref{subsec:discussion:sparse}, we return to our initial concern from Subsection~\ref{subsec:intro:issues}, presenting several results relying on traditional analysis sparsity (cf. \eqref{eq:intro:boundliterature}).
Subsection~\ref{subsec:discussion:cosparse} then points out the importance of cosparse modeling.
Finally, more details on the recent work of Kabanava, Rauhut, and Zhang \cite{rkz2015} are provided in Subsection~\ref{subsec:discussion:further}.

\subsection{Analysis versus Synthesis}
\label{subsec:discussion:anavssyn}

One of the cornerstones in the literature on sparsity priors is \cite{Elad2006}, which was the first work systematically studying the relationship between the predominant synthesis formulation and the analysis formulation. Instead of solving~\eqref{eq:intro:anabp}, the \emph{synthesis formulation} rather considers the convex program
\begin{equation}
 \label{eq:disc:synthesis}\tag{$\text{BPS}_{\noiseparam}^{\vec{D}}$}
 \vec{D} \cdot \left(\argmin_{\vec{z} \in \R^N} \lnorm{\vec{z}}[1] \quad \text{subject to \quad $\lnorm{\A \vec{D}\vec{z} -\y}[2] \leq \noiseparam$} \right),
\end{equation}
where $\vec{D} \in \R^{n \times N}$ is a (possibly redundant) \emph{dictionary} in $\R^n$. The rationale behind this approach is that the signal vector $\grtr \in \R^n$ possesses a \emph{sparse representation} by $\vec{D}$, i.e., $\grtr = \vec{D} \vec{z}$ for a coefficient vector $\vec{z} \in \R^N$ with $\lnorm{\vec{z}}[0] \ll n$. 

By investigating the respective polytopal geometry of \eqref{eq:intro:anabp} and \eqref{eq:disc:synthesis}, the authors of~\cite{Elad2006} develop a theoretical model that describes the differences between both methods. In particular, they point out the following fundamental issue: While the synthesis approach \eqref{eq:disc:synthesis} seems to benefit from a higher redundancy of the dictionary $\vec{D}$, it is not clear how \eqref{eq:intro:anabp} is influenced by the redundancy degree of $\aop$. Theorem~\ref{thm:results:recovery} shows that this structural property indeed has a wide impact, which is however highly non-trivial. 
We hope that our results could give rise to further progress in this matter.

The numerical simulations of \cite{Elad2006} certify a considerable gap between both strategies and the observed recovery performances indicate that the analysis formulation outperforms its synthe\-sis-based counterpart in many situations of interest. Although never stated as a general conclusion, the superiority of analysis-based priors is often confirmed in the literature, e.g., see \cite{Figueiredo2009}. Moreover, depending on the redundancy of $\aop$ and $\vec{D}$, one may also argue that solving \eqref{eq:intro:anabp} is computationally more appealing, since the actual optimization takes place in $\R^n$. In contrast, \eqref{eq:disc:synthesis} operates on the possibly much higher dimensional coefficient space of $\R^N$. We refer the interested reader to \cite{rauhut2008} for more information on synthesis recovery. 

\subsection{Approaches Based on Analysis Sparsity}
\label{subsec:discussion:sparse}

The first compressed-sensing-based approach to the analysis formulation was undertaken by the seminal work of Candès, Eldar, Needell, and Randall \cite{candes2011csdict}.
While previous results for redundant dictionaries did rather study coefficient recovery in the synthesis formulation, e.g., based on incoherent frame atoms~\cite{rauhut2008}, a major breakthrough of \cite{candes2011csdict} was that such assumptions can be avoided by switching over to the analysis perspective.
The theoretical analysis of \cite{candes2011csdict} relies on the so-called $\vec{D}$-RIP, which is an adaptation of the classical \emph{restricted isometry property} to sparse representations in dictionaries:
\begin{definition}[$\vec{D}$-RIP, \cite{candes2011csdict}]
Let $\vec{D} \in \R^{n\times N}$ be a tight frame\footnote{In contrast to our convention in Subsection~\ref{subsec:intro:notation}\ref{item:intro:notation:frames}, the \emph{columns} of $\vec{D}$ form a frame for $\R^n$ here, meaning that $\vec{D}$ is actually the associated synthesis operator.} for $\R^n$ and $\A \in \R^{m \times n}$. Then, $\A$ satisfies the \emph{$\vec{D}$-RIP} with parameters $\delta > 0$ and $S \in [N]$ if
\begin{equation}
 (1-\delta) \lnorm{\vec{Dz}}[2]^2 \leq \lnorm{\A \vec{D} \vec{z}}[2]^2 \leq (1 + \delta) \lnorm{\vec{Dz}}[2]^2
\end{equation}
holds true for all $S$-sparse vectors $\vec{z} \in \R^N$.  
\end{definition}
Supposed that $\A$ fulfills the $\vec{D}$-RIP with constant $\delta < 0.08$, it was shown in~\cite{candes2011csdict} that any minimizer $\solu \in \R^n$ of \refabpnoiseless{\noiseparam}{\aop} with $\aop \coloneqq \vec{D}^\T$ obeys
\begin{equation} \label{eq:disc:error}
	\lnorm{\solu - \grtr}[2] \leq C_0 \cdot \noiseparam + C_1 \cdot \frac{\sigma_S(\aop \grtr)_1}{\sqrt{S}},
\end{equation}
where $C_0,C_1 > 0$ are numerical constants and $\sigma_S(\aop \grtr)_1$ denotes the best $S$-term approximation error of the coefficient vector $\aop\grtr$.
This result was the starting point of several generalizations and refinements: For instance, \cite{foucart2014} has extended the above setup to arbitrary frames and Weibull matrices, incorporating an adapted robust null space property. The work \cite{liu2012} modifies the concept of $\vec{D}$-RIP in such a way that \eqref{eq:intro:anabp} can be applied to arbitrary dual operators of frames. 
Several achievements towards practical applicability were also made by \cite{krahmer2015}, allowing for structured measurements if a certain incoherence condition on the dictionary $\vec{D}$ and the subsampled sensing matrix is satisfied.
Guarantees of a similar flavor were unified in a general framework~\cite{Lee2018Ana} and even proven in an infinite-dimensional setting~\cite{poon2017}, both using different proof techniques. Finally, results involving the $\vec{D}$-RIP were also obtained for various modifications of the analysis basis pursuit \eqref{eq:intro:anabp}, e.g., see \cite{Lin2014ana,Shen2015ana,Tan2014ana}.

Regarding the measurement model of this work (Model~\ref{model:results:setup:meas}), it has turned out that a standard Gaussian matrix $\A \in \R^{m \times n}$  fulfills the $\vec{D}$-RIP with high probability provided that the number of observations obeys $m \gtrsim S \cdot \log (e \cdot N/S)$.
In this case, we again obtain a uniform error bound of the form \eqref{eq:disc:error}. 
A similar sampling rate for Gaussian measurements was also recently achieved by Kabanava and Rauhut in~\cite{kabanava2015}.
Their proof is however based on yet another statistical tool, namely a modified version of \emph{Gordon's Escape Through a Mesh}---note that we also make use of this fundamental principle in Theorem~\ref{thm:prelim:framework:minsingval}. In a nutshell, the first (non-uniform) guarantee of \cite{kabanava2015} reads as follows: Given $\grtr \in \R^n$ with $\lnorm{\aop\grtr}[0] \leq S$, a Gaussian sensing matrix $\A \in \R^{m \times n}$ satisfying
\begin{equation} \label{eq:disc:kabanava}
 \frac{m^2}{m + 1} \geq \frac{2 \fbu}{\fbl} \cdot S \cdot \bigg(\sqrt{\log \left( \tfrac{e \cdot N}{S} \right)} + \sqrt{\tfrac{\fbl \cdot \log (\varepsilon^{-1}) }{\fbu \cdot S}} \bigg)^2
\end{equation}
enables recovery of $\grtr$ via~\eqref{eq:intro:anabpnoiseless} with probability at least $1-\varepsilon$.
It is worth mentioning that, compared to \eqref{eq:disc:error}, this bound does not involve any unspecified numerical constants.
Moreover, by introducing a generalized null space property, the statement of \eqref{eq:disc:kabanava} can be even extended to uniform recovery \cite[Thm.~9]{kabanava2015}. 

Compared to our sampling-rate bound in Theorem~\ref{thm:results:recovery}, a condition of the form $m \gtrsim S \cdot \log (e \cdot N/S)$ is quite attractive and intuitive from an aesthetic viewpoint, since it mimics traditional results in compressed sensing. 
This resemblance is not very astonishing, since most proofs in the literature do actually ``operate'' on the space of analysis coefficients $\R^N$ and then just pull back the corresponding estimates to the signal domain $\R^n$.
For various examples of less redundant analysis operators, e.g., tight random frames, such a strategy indeed seems to provide accurate predictions of the required number of measurements. 

However, as already pointed out in the course of Subsection~\ref{subsec:intro:issues}, the situation becomes much more delicate for highly redundant and coherent systems, in which the above approaches do by far not explain the capacity of~\eqref{eq:intro:anabp}. 
The theoretical framework of Section~\ref{sec:results} and the preceding experiments in Section~\ref{sec:appl} demonstrate, for instance, that the mutual coherence structure of an operator $\aop$ deserves special attention as well. 
Apart from that, our results do not assume that $\aop$ forms a spanning system for $\R^n$, which in turn was an important hypothesis in \cite{candes2011csdict,kabanava2015}.
This observation particularly gives rise to doubts that conditioning-related quantities, such as the ratio $\fbu/\fbl$ in \eqref{eq:disc:kabanava}, need to  appear explicitly in a sampling-rate bound.

Let us finally turn to the issue of stability: The error estimate of \eqref{eq:disc:error} relies on the best $S$-term approximation error. Since this parameter is completely determined by the analysis coefficient sequence $\aop\grtr$, the linear dependencies within $\aop$ are again completely disregarded.
The compressibility experiment of Figure~\ref{fig:appl:wavelets:compr} shows that this methodology can be problematic for redundant systems.
On the other hand, our approach in Subsection~\ref{subsec:results:compressible} is based on approximations in the signal space $\R^n$, and when combined with an appropriate greedy scheme (e.g., Algorithm~\ref{exp:impl:algo}), it yields much more reliable outcomes.

\subsection{Approaches Based on Cosparse Modeling}
\label{subsec:discussion:cosparse}

A very influential contribution to analysis-based priors in inverse problem theory is the work of Nam, Davies, Elad, and Gribonval~\cite{nam2013}, which has coined the term \emph{cosparse analysis model}. Perhaps, this recent branch of research has grown out of the same observation as we have made above:
\begin{quote}
``The fact that this representation may contain many non-zeroes (and especially so when $N \gg n$) should be of no consequence to the efficiency of the analysis model.'' \cite[p. 35]{nam2013}
\end{quote}
In fact, a key insight of this paper was that the viewpoints of analysis and synthesis sparsity are actually based on completely different signal models and therefore also draw their strength from different features. While the synthesis approach puts its emphasis on the number of non-zero coefficients, \cite{nam2013} argues that the analysis model is rather specified by the number of vanishing coefficients, that is, the \emph{cosparsity}. Indeed, the class of $L$-cosparse signals can be simply written as a union of subspaces
\begin{equation} \label{eq:disc:union}
 \bigcup_{\substack{\lam \subset [N], \\ \cardinality{\lam} = L}} W_{\lam},
\end{equation}
where $W_{\lam} \coloneqq \ker\aop_{\lam}$.  
A careful comparison of both setups reveals that one should not---as suggested by the approaches presented in Subsection~\ref{subsec:discussion:sparse}---regard the analysis formulation from a synthesis perspective, thereby treating ``the analysis operator as a \emph{poor man's} sparse synthesis representation''~\cite[p. 35]{nam2013}. 
For example, the subspace collection of \eqref{eq:disc:union} may become extremely complicated if $\aop$ corresponds to a redundant frame with strong linear dependencies, and it is quite evident that its geometric arrangement is then not just reflected by the value of $S$ (or~$L$).

Towards a theoretical analysis, \cite[Thm.~7]{nam2013} establishes the following equivalence: Every $\grtr \in \R^n$ with cosupport $\lam = \ssuppc[\grtr] \subset [N]$ is a unique minimizer of~\refabpnoiseless{\noiseparam=0}{\aop} if and only if
\begin{equation}\label{eq:disc:cospequiv}
	\sup_{\substack{\x \in \R^n \\ \aop_{\lam} \x =  \vnull}} \abs{\sp{\aop_{\setcompl{\lam}} \vec{z}}{\sign (\aop_{\setcompl{\lam}} \x )} } < \lnorm{\aop_{\lam} \vec{z}}[1] \quad \text{for all $\vec{z} \in \ker(\A)\setminus\{ \vnull \}$.}
\end{equation}
We note that similar conditions were also studied for stable recovery in~\cite{vaiter2013}. 
Moreover, it is worth mentioning that the vector $\sign(\aop_{\setcompl{\lam}} \x)$ plays a central role in \eqref{eq:disc:cospequiv}, which reminds us of our generalized sparsity term $\spparam{\x}$ from Definition~\ref{def:results:generalsparsity}.
In contrast, such an expression is missing in the concept of $\vec{D}$-RIP.

Regarding sample complexity, it has turned out that the quantity
\begin{equation}
	\kappa_{\aop}(L) \coloneqq \max_{\cardinality{\lam} \geq L} \dim W_{\lam}
\end{equation}
essentially determines the (optimal) number of measurements required for noiseless recovery via a combinatorial search over all $L$-cosparse signals \cite[Sec. 3]{nam2013}. This finding is particularly consistent with the philosophy of compressed sensing after which an estimation succeeds as long as the sampling rate slightly exceeds the signal's manifold dimension. 
Somewhat surprisingly, the recent work of Giryes, Plan, and Vershynin \cite{giryes2015} has shown that the situation is unfortunately not that simple in $\l{1}$-analysis minimization: When asking for robustness or a tractable algorithm instead of combinatorial searching, the sample complexity is not appropriately captured by $\kappa_{\aop}(L)$ anymore. 
For that reason, we hope that our theoretical framework sheds more light on this issue, featuring the complex relationship between sparsity and cosparsity in the study of \eqref{eq:intro:anabp}. 

%

\subsection{Non-Asymptotic Sampling-Rate Bounds}
\label{subsec:discussion:further}

To the best of our knowledge, the only work that allows for a direct comparison to our results is by Kabanava, Rauhut, and Zhang \cite{rkz2015}, hence often serving as a ``competitor'' in Section~\ref{sec:appl}.
In fact, it is also motivated by a similar observation as we have made in this paper: Bounds on the number of needed measurements for frame-based analysis operators (cf. \eqref{eq:disc:kabanava}) and total variation \cite{cai2015} may render vacuous statements if the sparsity level is too high. 

Focusing on the case of Gaussian measurements, Theorem~3.3 in \cite{rkz2015} provides a non-asymptotic sampling-rate bound when $\aop \in \R^{N \times n}$ is a frame: Any minimizer $\solu \in \R^n$ of \eqref{eq:intro:anabp} satisfies $\lnorm{\solu - \grtr}[2] \leq 2\noiseparam / \theta$ with probability at least $1-\varepsilon$, provided that
\begin{equation} \label{eq:disc:rkz}
	\frac{m^2}{m+1} \geq \bigg(\sqrt{\vphantom{\bigg(} n - \tfrac{2(\sum_{i \in \ssuppc[\grtr]} \lnorm{\avec_i}[2])^2}{\pi \cdot \fbu \cdot N}} + \sqrt{2 \log (\varepsilon^{-1})} + \theta \bigg)^2,
\end{equation}
where $\fbu$ is an upper frame bound of $\aop$.
An analogous statement holds true if $\aop$ is the finite difference operator in 1D (cf. Subsection~\ref{subsec:appl:tv}), where \eqref{eq:disc:rkz} is replaced by
\begin{equation}
 \frac{m^2}{m+1} \geq \bigg(\sqrt{n \cdot \left(1 - \tfrac{1}{\pi} \left(1-\tfrac{S}{n} \right)^2 \right)} + \sqrt{2 \log (\varepsilon^{-1})} + \theta \bigg)^2,
\end{equation}
see Theorem~3.1 in \cite{rkz2015} for details. Ignoring terms of lower order, both bounds involve the dimension $n$ of the ambient space as an additive term. 
This resemblance to the sampling-rate function $\samplecompl{\aop,\grtr}$ from Definition~\ref{def:results:samplingratefct} is actually not very surprising, since the proof strategy of \cite{rkz2015} builds upon the same principle as in this work, namely estimating the conic mean width. 

However, there are several substantial differences between the approach of \cite{rkz2015} and ours:
\begin{itemize}
\item 
	The results of~\cite{rkz2015} are not consistent with classical compressed sensing theory: If $\aop \in \R^{n \times n}$ is an orthonormal basis, \eqref{eq:disc:rkz} basically degenerates into
	\begin{equation}
		m \geq \Big( 1 - \tfrac{2}{\pi}\Big) \cdot n + \tfrac{4S}{\pi} - \tfrac{2S^2}{\pi \cdot n} \ ,
	\end{equation}
	which is clearly suboptimal for small values of $S$; see also Figure~\ref{fig:appl:wavelets:scaling:RelE2}.
\item 
	The bounds on the required number of measurements in \cite{rkz2015} are specific to Gaussian measurements and do not include stable recovery that would allow for coping with compressible signals.
\item 
	In all cases examined in Section~\ref{sec:appl}, our bounds are vastly superior in terms of predicting the phase transition behavior of \refabpnoiseless{\noiseparam=0}{\aop}; see Figure~\ref{fig:intro:pt}, Figure~\ref{fig:appl:tv}, and Figure~\ref{fig:appl:onbrandom:varySL}.
\end{itemize}

Comparing the sampling-rate bound \eqref{eq:results:recovery:meas} of Theorem~\ref{thm:results:recovery} with \eqref{eq:disc:rkz}, we observe that the latter does neither incorporate the cross-correlations between the individual analysis vectors nor the sign vector $\anasign{\grtr} = \sign(\aop\grtr)$. 
On the other hand, both features seem to be essential for achieving sound recovery results. 
A careful study of the corresponding proofs reveals the underlying reason for the superiority of our approach:
While Step 1 in the proof of Theorem~\ref{thm:proofs:mainresults:scbound} in Subsection~\ref{subsec:proofs:scbound} is rather standard (cf. \cite{amelunxen2014edge}), our argumentation in Step 2 is quite different. Indeed, the dual vector $\dv'$ in \cite{rkz2015} is chosen such that it just maximizes the term $T_1$ in \eqref{eq:proofs:scbound:bound:dvgeneral}. Our choice of $\dv'$ in \eqref{eq:proofs:scbound:dvgeneral} is more sophisticated, particularly taking the variable $\tau$ into account. 
This appears to be a minor issue at first sight but actually affects the remainder of the proofs drastically. For example, we had to come up with a highly non-trivial estimate for $T_2$ in Lemma~\ref{lem:proofs:scbound:clippedcovar}, eventually leading to the generalized cosparsity term $\cospparam{\grtr}$ (see Definition~\ref{def:results:samplingratefct}). Furthermore, the optimization over $\tau$ is very challenging in our case, whereas it is just a simple quadratic problem in \cite{rkz2015}.

\section{Discussion and Outlook}
\label{sec:concl}

One of the key findings of this work was that the widely-used concept of analysis sparsity is oftentimes not capable of explaining the success (and failure) of $\l{1}$-analysis recovery. 
This observation defies conventional wisdom that would justify the superiority of an operator $\aop \in \R^{N \times n}$ only by means of sparsity, i.e., smaller values of $S = \lnorm{\aop\grtr}[0]$ lead to better outcomes.
In contrast, our three generalized sparsity parameters from Definition~\ref{def:results:generalsparsity} are more carefully designed and particularly involve the Gram matrix $\gram = \aop \aop^\T \in \R^{N \times N}$.
In that way, the mutual coherence structure of~$\aop$ is also taken into account, which is in turn a missing feature of traditional approaches.
This important refinement enabled us to prove much more reliable bounds on the sample complexity of the analysis basis pursuit \eqref{eq:intro:anabp}, even when a coefficient sequence $\aop\grtr \in \R^N$ is just compressible (see Subsection~\ref{subsec:results:compressible}).
Hence, we can conclude that our main results, Theorem~\ref{thm:results:recovery} and Theorem~\ref{thm:results:robuststable}, provide fairly general answers to the initial challenges of \ref{quest:intro:samplecompl} and \ref{quest:intro:compressibility}, as stated in Subsection~\ref{subsec:intro:issues}.

A more technical yet crucial role in these statements is played by the sampling-rate function $\samplecompl{\aop,\grtr}$, which essentially determines the minimal number of measurements needed to invoke the actual error estimates.
Our numerical simulations in Section~\ref{sec:appl} have demonstrated that this quantity indeed very accurately localizes the phase transition of many different types of analysis operators, such as redundant wavelet systems, total variation, or random frames.
This variety certifies once again that the proposed bounds meet the desiderata of \ref{item:intro:desiderata:accurate}--\ref{item:intro:desiderata:generic} from Subsection~\ref{subsec:intro:contrib} in many situations of interest.
Thus, let us make the following closing remark:
\begin{highlight}
	The functional $\grtr \mapsto \samplecompl{\aop,\grtr}$ can be regarded as a surrogate measure of complexity, quantifying how well $\aop$ captures the low-dimensional structure of a signal $\grtr$.
	Compared to the number of non-zero analysis coefficients, its value has turned out to be an appropriate indicator for the success or failure of $\l{1}$-analysis minimization.
\end{highlight}

\subsection{Practical Scope and Guidelines}
\label{subsec:concl:practical}

The previous sections have rather focused on the predictive power of our theoretical framework, whereas its practical implications were only marginally addressed so far.
Therefore, we shall now investigate the issues of \ref{quest:intro:interpretability} in greater detail.
Taking the perspective of a practitioner, these problem statements are typically associated with the following tasks: \emph{How to come up with a good analysis operator $\aop$ for my specific application? What properties and design rules are of particular relevance?}
Unfortunately, we are not able to answer these questions in full generality here, but our theoretical approach could at least give rise to several novel solution strategies.
The above discussion in fact suggests using $\samplecompl{\aop,\grtr}$ as a measure of quality that allows for an assessment and comparison of different choices of $\aop$.
Let us a recall a simple example from Subsection~\ref{subsubsec:appl:wavelets:blocks}: In the course of relation \eqref{eq:appl:wavelets:reweighting}, it has turned out that, although $\aopr$ and $\aopi$ correspond to the same frame except from a scale-wise reweighting of their analysis vectors, the predicted sampling rates differ dramatically, as illustrated in Figure~\ref{fig:intro:1d:pt_frame}.
This observation indicates that the recovery performance might be greatly improved by \emph{weighting}, which is a technique closely related to \emph{weighted sparsity} in compressed
sensing theory, e.g., see \cite{khajehnejad2009weighted,friedlander2012recovering,oymak2012recovery}.
The numerical results of Section~\ref{sec:appl} even show that such a rule-of-thumb could be applied to various other structural properties of analysis operators, for example:
\begin{itemize}
\item
	\emph{Intrinsic localization of the Gram matrix.} The off-diagonal decay of $\gram = \aop\aop^\T$ may strongly affect the recovery capability of $\aop$ (cf. Remark~\ref{rmk:proofs:scbound:kkt}). 
\item
	\emph{Conditioning.} The impact of the linear dependency structure and frame bounds of $\aop$ is expected to be highly non-trivial.
\item
	\emph{Redundancy.} Highly redundant frames are sometimes superior over orthonormal bases (see Figure~\ref{fig:intro:1d}).
\end{itemize}
While analyzing $\samplecompl{\aop,\grtr}$ with respect to certain characteristic quantities yields a first approach to \ref{quest:intro:interpretability}, a systematic methodology for designing good operators is still missing.

A promising alternative to tackle this challenge is based on the viewpoint of \emph{statistical learning theory}.
Indeed, recent works on \emph{analysis operator learning} have shown that it is possible to infer effective analysis operators from a collection of training samples, e.g., see \cite{rubinstein2013ksvd,hawe2013,chen2014learning,ravishankar2013,yaghoobi2013learning}. In \cite{bian2016learning}, an interesting connection between the sparse null space problem and analysis dictionary learning has been established, which also sheds more light on the underlying complexity of the analysis model. All these methods have in common that the proposed learning procedure minimizes a cost function based on the $\l{1}$- or $\l{0}$-norm of the analysis coefficients. In Appendix~\ref{sec:learning}, we outline a mathematical learning formalism that suggests minimizing $\samplecompl{\cdot,\cdot}$ instead. A detailed elaboration of this topic is however deferred to future research.

\subsection{Open Problems}

Beside the emerging challenge of the previous subsection, we would like to state some further open issues which might be studied in future works as well:

\begin{itemize}
\item
	\emph{Sharper bounds and optimality.}
	Our analysis of the finite difference operator in Subsection~\ref{subsec:appl:tv} has demonstrated that the sampling-rate bound of Theorem~\ref{thm:results:recovery} could be somewhat inaccurate in situations of low sparsity.
	While there are indeed several non-sharp estimates in the proof of Theorem~\ref{thm:proofs:mainresults:scbound} (e.g., see Proposition~\ref{prop:proofs:scbound:mwpolar} and Lemma~\ref{lem:proofs:scbound:clippedcovar}), the most heuristic step is actually our choice of the dual vector $\dv'$ in \eqref{eq:proofs:scbound:dvgeneral}, see also Remark~\ref{rmk:proofs:scbound:kkt}.
	This leaves certain space for refinements, but note that, no matter how $\dv'$ is chosen, one eventually has to face a non-trivial minimization problem over $\tau$ in \eqref{eq:proofs:scbound:bound:polar}.
	Apart from that, it would be interesting to specify conditions on $\aop$ and $\grtr$ under which our bounds are already optimal, at least in an asymptotic sense.
\item
	\emph{Algorithms.} The analysis basis pursuit is just a very popular example of methods that try to exploit analysis sparsity in signal estimation. In fact, there exist numerous alternatives to solving \eqref{eq:intro:anabp}, for instance, the \emph{greedy analysis pursuit} \cite{nam2013} or those of \cite{giryes2014}.
	We believe that our proof techniques can be adapted to many of these methods, ultimately leading to similar recovery guarantees than we have obtained for the analysis basis pursuit.
	
	At this point, it is worth mentioning that a key ingredient of our statistical analysis is a sophisticated upper bound for the conic Gaussian mean width of the $\l{1}$-analysis (semi-)norm; see Theorem~\ref{thm:proofs:mainresults:scbound}.
	Hence, our findings are also compatible with those sampling-rate bounds in the literature which employ the conic Gaussian mean width as measure of complexity.
	For example, one may easily carry over our recovery guarantees to the projected gradient descent method studied by Oymak et al.\ in \cite{oymak2018sharp}.
\item
	\emph{Structured measurements.} Our argumentation in Section~\ref{sec:proofs} does strongly rely on the row-independence of the sensing matrix $\A \in \R^{m \times n}$.
	This assumption is unfortunately not satisfied for many types of structured measurement schemes, so that our statistical tools from Appendix~\ref{subsec:prelim:framework} do not apply anymore.
	In particular, the concept of Gaussian mean width might become inappropriate, implying that most parts of our proofs break down.
	For such situations, one probably would have to come up with a completely new proof strategy.
\item
	\emph{Analysis versus synthesis.} We have already discussed the crucial difference between analysis- and synthesis-based priors in Subsection~\ref{subsec:discussion:anavssyn}, but a fundamental question remains widely open: Which of the two formulations is more appropriate for a given signal class and measurement ensemble?
\end{itemize}

\section{Proofs of Main Results}
\label{sec:proofs}

Before presenting all proofs in detail, let us briefly sketch the roadmap of this section:
In Subsection~\ref{subsec:proofs:framework}, we first provide an abstract framework, which even applies to a generalized version of the basis pursuit.
The associated recovery results rely on variants of the \emph{Gaussian mean width}, which forms a key concept in analyzing the sample complexity of many convex (signal) estimation problems, e.g., see \cite{rudelson2008sparse,chandrasekaran2012geometry,chandrasekaran2013tradeoff,plan2013robust,amelunxen2014edge,tropp2014convex,oymak2016sharpmse,vershynin2018hdp}.
These quantities are however very implicit and hard to compute in general. For that reason, we show in Subsection~\ref{subsec:proofs:mainresults} (Theorem~\ref{thm:proofs:mainresults:scbound}) that the square of the Gaussian mean width of the descent cone of $\x \mapsto \lnorm{\aop \x}[1]$ at $\grtr$ is at least upper bounded by our sampling-rate function $\samplecompl{\aop, \grtr}$, introduced in Definition~\ref{def:results:samplingratefct}.
Combining this key step with the guarantees from Subsection~\ref{subsec:proofs:framework}, we will immediately obtain the statements of Theorem~\ref{thm:results:recovery} and Theorem~\ref{thm:results:robuststable}.
It is worth emphasizing that the actual challenge of our approach is the proof of Theorem~\ref{thm:proofs:mainresults:scbound} in Subsection~\ref{subsec:proofs:scbound}. 
Apart from various technicalities, a particular endeavor is to find the right balance between accurate and explicit estimates of the Gaussian mean width.



\subsection{Structured Signal Recovery and Gaussian Mean Width}
\label{subsec:proofs:framework}

In this part, we analyze the \define{generalized basis pursuit}
\begin{equation}\label{eq:proofs:framework:bpgeneral}\tag{$\text{BP}_{\noiseparam}^{\objfunc}$}
	\min_{\x \in \R^n} \objfunc(\x) \quad \text{subject to \quad $\lnorm{\A \x - \y} \leq \noiseparam$,}
\end{equation}
\newcommand{\refbpgeneralnoiseless}[2]{(\hyperref[eq:proofs:framework:bpgeneral]{$\text{BP}_{#1}^{#2}$})}%
where $\objfunc \colon \R^n \to \R$ is a convex function. Note that for our purposes, it would be sufficient to consider the $\l{1}$-analysis functional $\objfunc(\x) = \lnorm{\aop\x}[1]$, but in fact, each of the following arguments does literally hold true for the general case.
Conceptually, the optimization problem of \eqref{eq:proofs:framework:bpgeneral} seeks for the most structured signal $\x \in \R^n$ that is consistent with the measured data $\y = \A \grtr + \noise$. 
The optimal sampling rate of this approach heavily relies on how well the objective function $\objfunc$ captures the complexity of the ground truth signal $\grtr \in \R^n$.
In order to formalize this intuition, let us first introduce the notion of (Gaussian) mean width:
\begin{definition} \label{def:proofs:framework:meanwidth}
Let $\sset \subset \R^n$ be a set.
\begin{deflist}
\item
	The \define{(global) mean width} of $\sset$ is defined as
	\begin{equation}\label{eq:proofs:framework:meanwidth:global}
		\meanwidth{\sset} \coloneqq \mean[\sup_{\h \in \sset} \sp{\gaussian}{\h}],
	\end{equation}
	where $\gaussian \distributed \Normdistr{\vnull}{\I{n}}$ is a standard Gaussian random vector.
\item
	The \define{conic mean width} of $\sset$ is given by
	\begin{equation}\label{eq:proofs:framework:meanwidth:conic}
		\meanwidth[\conic]{\sset} \coloneqq \meanwidth{\cone{\sset} \intersec \S^{n-1}}.
	\end{equation}
\item
	The \define{local mean width} of $\sset$ at scale $t > 0$ is defined as
	\begin{equation}\label{eq:proofs:framework:meanwidth:local}
		\meanwidth[t]{\sset} \coloneqq \meanwidth{\tfrac{1}{t}\sset \intersec \S^{n-1}}.
	\end{equation}
\end{deflist}
\end{definition}

\begin{remark}\label{rmk:proofs:framework:meanwidth}
	If $\sset$ is convex and $\vnull \in \sset$, the mapping $t \mapsto \meanwidth[t]{\sset}$ is non-increasing, and we also have $\meanwidth[t]{\sset} \leq \meanwidth[\conic]{\sset}$, since $\tfrac{1}{t}\sset \subset \cone{\sset}$ for all $t > 0$.
	Moreover, it holds
	\begin{equation}
		\meanwidth[t]{\cone{\sset}} = \meanwidth[\conic]{\cone{\sset}} = \meanwidth[\conic]{\sset}.
	\end{equation}
\end{remark}

A particularly interesting choice of $\sset \subset \R^n$ is the set of all descent directions of $\objfunc$ at a certain ``anchor point'':
\begin{definition}\label{def:proofs:framework:descentset}
	Let $\objfunc \colon \R^n \to \R$ be a convex function and let $\grtr \in \R^n$.
	The \define{descent set} of $\objfunc$ at $\grtr$ is given by
	\begin{equation}
		\descset{\objfunc, \grtr} \coloneqq \{ \h \in \R^n \suchthat \objfunc(\grtr + \h) \leq \objfunc(\grtr) \},
	\end{equation}
	and its corresponding \define{descent cone} is denoted by $\desccone{\objfunc, \grtr} \coloneqq \cone{\descset{\objfunc, \grtr}}$.
	In this context, we refer to $\meanwidth[\conic]{\descset{\objfunc, \grtr}}$ as the \emph{conic mean width of $\objfunc$ at $\grtr$}.
\end{definition}

Note that, since $\objfunc$ is assumed to be convex, both $\descset{\objfunc, \grtr}$ and $\desccone{\objfunc, \grtr}$ are convex sets.
The geometric idea behind these definitions is quite simple: Every minimizer of \eqref{eq:proofs:framework:bpgeneral} is close to~$\grtr$ if, and only if, the intersection between the constraint set $\{\x \in \R^n \suchthat \lnorm{\A \x - \y} \leq \noiseparam \}$ and $\grtr + \descset{\objfunc, \grtr}$ has a small diameter.
It was therefore a remarkable observation of \cite{chandrasekaran2012geometry} that, for Gaussian measurements, the occurrence of this event is completely characterized by a single parameter, namely the conic mean width of $\objfunc$ at $\grtr$.
For this reason, $\meanwidth[\conic]{\descset{\objfunc, \grtr}}$ may be also regarded as a measure of complexity for a signal $\grtr$ with respect to a given structure-imposing functional $\objfunc$.

Our first theoretical guarantee actually extends this fundamental principle by incorporating the notion of \emph{local mean width}. 
We will see below that this refinement indeed allows us to establish stable recovery, which is an essential feature of Theorem~\ref{thm:results:robuststable}.
\begin{theorem}\label{thm:proofs:framework:guarantee}
	Let Model~\ref{model:results:setup:meas} be satisfied and let $\objfunc \colon \R^n \to \R$ be a convex function.
	There exists a numerical constant $C > 0$ such that, for every fixed $t > 0$ and $\probsuccess > 0$, the following holds true with probability at least $1 - e^{-\probsuccess^2/2}$:
	Assumed that 
	\begin{equation}\label{eq:proofs:framework:guarantee:cond}
		t \geq \frac{2\noiseparam}{\pospart{\sqrt{m-1} - C \cdot \subgparam^2 \cdot \big(\meanwidth[t]{\descset{\objfunc, \grtr}} + \probsuccess\big)}} \ ,
	\end{equation}
	any minimizer $\solu$ of \eqref{eq:proofs:framework:bpgeneral} satisfies $\lnorm{\solu - \grtr} \leq t$.
	In the Gaussian case, $\a \distributed \Normdistr{\vnull}{\I{n}}$, we have $C = \subgparam = 1$.
\end{theorem}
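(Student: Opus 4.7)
The plan is to follow the standard convex-geometric recovery template (\textit{à la} \cite{chandrasekaran2012geometry,tropp2014convex}), but to work with the \emph{local} mean width $\meanwidth[t]{\descset{\objfunc,\grtr}}$ instead of the conic mean width, which is what permits a diameter bound rather than merely a zero-error statement. The first step is to observe that $\grtr$ is feasible for \eqref{eq:proofs:framework:bpgeneral}, since $\lnorm{\A\grtr-\y}=\lnorm{\noise}\leq\noiseparam$. Hence every minimizer $\solu$ satisfies $\objfunc(\solu)\leq\objfunc(\grtr)$, i.e., $\h\coloneqq\solu-\grtr\in\descset{\objfunc,\grtr}$. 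Next, convexity of $\objfunc$ implies that the descent set is \emph{star-shaped at} $\vnull$: for every $\lambda\in[0,1]$ one has $\objfunc(\grtr+\lambda\h)\leq\lambda\objfunc(\grtr+\h)+(1-\lambda)\objfunc(\grtr)\leq\objfunc(\grtr)$, so $\lambda\h\in\descset{\objfunc,\grtr}$.

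I then argue by contradiction. Suppose $\lnorm{\h}>t$. By star-shapedness, $t\h/\lnorm{\h}\in\descset{\objfunc,\grtr}$, so
\begin{equation}
\h/\lnorm{\h}\in E\coloneqq\tfrac{1}{t}\descset{\objfunc,\grtr}\intersec S^{n-1}.
\end{equation}
The triangle inequality provides the ``upper'' estimate $\lnorm{\A\h}\leq\lnorm{\A\solu-\y}+\lnorm{\y-\A\grtr}\leq 2\noiseparam$, i.e., $\lnorm{\A(\h/\lnorm{\h})}\leq 2\noiseparam/\lnorm{\h}$. The matching ``lower'' estimate is the technical heart of the proof and is exactly the content of the minimum-singular-value bound of Theorem~\ref{thm:prelim:framework:minsingval}: with probability at least $1-e^{-\probsuccess^2/2}$,
\begin{equation}
\inf_{\h'\in E}\lnorm{\A\h'}\ \geq\ \sqrt{m-1}-C\subgparam^2\bigl(\meanwidth{E}+\probsuccess\bigr),
\end{equation}
and by Definition~\ref{def:proofs:framework:meanwidth}\,(iii), $\meanwidth{E}=\meanwidth[t]{\descset{\objfunc,\grtr}}$.

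Combining the two estimates yields
\begin{equation}
\lnorm{\h}\ \leq\ \frac{2\noiseparam}{\pospart{\sqrt{m-1}-C\subgparam^2\bigl(\meanwidth[t]{\descset{\objfunc,\grtr}}+\probsuccess\bigr)}}\ \leq\ t,
\end{equation}
where the last inequality is the standing hypothesis \eqref{eq:proofs:framework:guarantee:cond}. This contradicts $\lnorm{\h}>t$, so $\lnorm{\h}\leq t$ as claimed. The main obstacle is clearly the deviation inequality in Step 3: in the Gaussian setting it reduces to Gordon's escape-through-a-mesh theorem and is sharp with $C=\subgparam=1$, but for general sub-Gaussian rows $\a$ one has to invoke a more delicate tool, essentially Mendelson's small-ball method (as implemented in \cite{tropp2014convex}), to get a uniform lower bound with the correct dependence on the \emph{local} mean width. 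The remainder of the argument is purely geometric and essentially costless once that ingredient is in place.
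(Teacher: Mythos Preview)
Your proof is correct and follows essentially the same approach as the paper: feasibility of $\grtr$ plus star-shapedness of the descent set reduces the problem to a uniform lower bound on $\lnorm{\A\h'}$ over $E=\tfrac{1}{t}\descset{\objfunc,\grtr}\cap S^{n-1}$, which is supplied by Theorem~\ref{thm:prelim:framework:minsingval}, and then the two-sided estimate on $\lnorm{\A(\h/\lnorm{\h})}$ yields the contradiction. Your write-up is in fact slightly more streamlined than the paper's, which splits into the case $\lnorm{\h}=t$ and $\lnorm{\h}>t$ before invoking convexity; one small remark is that the sub-Gaussian version of Theorem~\ref{thm:prelim:framework:minsingval} in the paper is taken from \cite{liaw2016randommat} rather than from a small-ball argument, but this does not affect the logic of your proof.
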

A proof of Theorem~\ref{thm:proofs:framework:guarantee} and all its corollaries below can be found in Appendix~\ref{subsec:prelim:framework}.
Roughly speaking, the error bound of Theorem~\ref{thm:proofs:framework:guarantee} states the following: To achieve a desired reconstruction accuracy $t > 0$ via \eqref{eq:proofs:framework:bpgeneral}, increase $m$ until \eqref{eq:proofs:framework:guarantee:cond} is fulfilled.
While this procedure always succeeds for sufficiently large values of $m$, the condition of \eqref{eq:proofs:framework:guarantee:cond} is still quite implicit, since both sides depend on $t$.
Fortunately, due to Remark~\ref{rmk:proofs:framework:meanwidth}, one may just estimate the local mean width by its conic counterpart.
This leads to our next result, which is well-known from the literature, e.g., see \cite[Cor.~3.5]{tropp2014convex}.
\begin{corollary}\label{cor:proofs:framework:guarantee-conic}
	Let Model~\ref{model:results:setup:meas} be satisfied and let $\objfunc \colon \R^n \to \R$ be convex.
	There exists a numerical constant $C > 0$ (the same as in Theorem~\ref{thm:proofs:framework:guarantee}) such that, for every $\probsuccess > 0$, the following holds true with probability at least $1 - e^{-\probsuccess^2/2}$:
	Any minimizer $\solu$ of \eqref{eq:proofs:framework:bpgeneral} satisfies
	\begin{equation}\label{eq:proofs:framework:guarantee-conic:bound}
		\lnorm{\solu - \grtr} \leq \frac{2\noiseparam}{\pospart{\sqrt{m-1} - C \cdot \subgparam^2 \cdot \big(\meanwidth[\conic]{\descset{\objfunc, \grtr}} + \probsuccess\big)}} \ .
	\end{equation}
	In particular, if $\noiseparam = 0$ and
	\begin{equation}\label{eq:proofs:framework:guarantee-conic:meas}
		m > C^2 \cdot \subgparam^4 \cdot \big(\meanwidth[\conic]{\descset{\objfunc, \grtr}} + \probsuccess\big)^2 + 1,
	\end{equation}
	one has $\solu = \grtr$.
\end{corollary}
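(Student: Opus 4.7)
The plan is to derive the corollary as a direct consequence of Theorem~\ref{thm:proofs:framework:guarantee} by bootstrapping from the scale-dependent local mean width to its scale-free conic upper bound. Since $\objfunc$ is convex, the descent set $\descset{\objfunc,\grtr} = \{\h \suchthat \objfunc(\grtr+\h) \leq \objfunc(\grtr)\}$ is a convex subset of $\R^n$ containing the origin. Remark~\ref{rmk:proofs:framework:meanwidth} therefore yields
\begin{equation}
    \meanwidth[t]{\descset{\objfunc,\grtr}} \leq \meanwidth[\conic]{\descset{\objfunc,\grtr}} \qquad \text{for every } t > 0.
\end{equation}

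Now fix $\probsuccess>0$ and let $E$ denote the event (of probability at least $1-e^{-\probsuccess^2/2}$) on which the conclusion of Theorem~\ref{thm:proofs:framework:guarantee} holds for the specific deterministic choice
\begin{equation}
    t^* \coloneqq \frac{2\noiseparam}{\pospart{\sqrt{m-1} - C\cdot\subgparam^2\cdot\big(\meanwidth[\conic]{\descset{\objfunc,\grtr}}+\probsuccess\big)}},
\end{equation}
which is precisely the right-hand side of \eqref{eq:proofs:framework:guarantee-conic:bound} (with the convention $t^*=\infty$ if the positive part vanishes, in which case there is nothing to prove). The hypothesis \eqref{eq:proofs:framework:guarantee:cond} is then verified at scale $t=t^*$ by a monotonicity argument: using $\meanwidth[t^*]{\descset{\objfunc,\grtr}} \leq \meanwidth[\conic]{\descset{\objfunc,\grtr}}$, the denominator with local mean width dominates the one with conic mean width, so
\begin{equation}
    \frac{2\noiseparam}{\pospart{\sqrt{m-1} - C\cdot\subgparam^2\cdot\big(\meanwidth[t^*]{\descset{\objfunc,\grtr}}+\probsuccess\big)}} \ \leq\ t^*.
\end{equation}
Theorem~\ref{thm:proofs:framework:guarantee} then gives $\lnorm{\solu-\grtr}\leq t^*$ on $E$, which is the claimed estimate \eqref{eq:proofs:framework:guarantee-conic:bound}.

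For the exact recovery statement, assume $\noiseparam=0$ and that $m$ satisfies \eqref{eq:proofs:framework:guarantee-conic:meas}. The latter is exactly the condition that $\sqrt{m-1} > C\cdot\subgparam^2\cdot(\meanwidth[\conic]{\descset{\objfunc,\grtr}}+\probsuccess)$, so the denominator above is strictly positive; combined with $\noiseparam=0$ this forces $t^*=0$, hence $\solu=\grtr$ on $E$. (Equivalently, for any $t>0$ the right-hand side of \eqref{eq:proofs:framework:guarantee:cond} is zero, so Theorem~\ref{thm:proofs:framework:guarantee} yields $\lnorm{\solu-\grtr}\leq t$ and letting $t\searrow 0$ concludes.) The only genuinely substantive ingredient is the comparison $\meanwidth[t]{\sset}\leq\meanwidth[\conic]{\sset}$ for convex $\sset\ni\vnull$; all remaining steps are bookkeeping around the positive-part denominator.
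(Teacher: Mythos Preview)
Your argument for $\noiseparam>0$ is exactly the paper's: set $t^*$ equal to the right-hand side of \eqref{eq:proofs:framework:guarantee-conic:bound}, use $\meanwidth[t^*]{\descset{\objfunc,\grtr}}\le\meanwidth[\conic]{\descset{\objfunc,\grtr}}$ from Remark~\ref{rmk:proofs:framework:meanwidth} to verify \eqref{eq:proofs:framework:guarantee:cond}, and invoke Theorem~\ref{thm:proofs:framework:guarantee}.

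For $\noiseparam=0$ your two routes diverge from the paper. The first (``$t^*=0$, hence $\solu=\grtr$ on $E$'') is not quite legitimate as written: Theorem~\ref{thm:proofs:framework:guarantee} is stated for fixed $t>0$, so the event $E$ you introduced is not available at $t=t^*=0$. Your parenthetical alternative does work, but it deserves one more sentence of justification: for each fixed $t>0$ Theorem~\ref{thm:proofs:framework:guarantee} supplies a \emph{$t$-dependent} good event on which $\lnorm{\solu-\grtr}\le t$, so $\prob[\lnorm{\solu-\grtr}\le t]\ge 1-e^{-\probsuccess^2/2}$ for every $t>0$, and continuity of probability along the nested events $\{\lnorm{\solu-\grtr}\le t\}$ as $t\searrow 0$ then gives $\prob[\solu=\grtr]\ge 1-e^{-\probsuccess^2/2}$. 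The paper instead bypasses Theorem~\ref{thm:proofs:framework:guarantee} here and applies Theorem~\ref{thm:prelim:framework:minsingval} directly with $\sset=\desccone{\objfunc,\grtr}\cap\S^{n-1}$: on that single event, condition \eqref{eq:proofs:framework:guarantee-conic:meas} forces $\lnorm{\A\h}>0$ for every unit vector $\h$ in the descent cone, so $\grtr$ is the only feasible point. This yields an explicit good event rather than a limit, but both arguments deliver the same probability bound.
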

As already briefly discussed at the beginning of Subsection~\ref{subsec:results:samplecompl}, the condition of \eqref{eq:proofs:framework:guarantee-conic:meas} is not only sufficient but essentially also necessary to achieve exact recovery in the Gaussian case \cite{amelunxen2014edge}.\footnote{The authors of \cite{amelunxen2014edge} are focusing on the so-called \emph{statistical dimension} rather than on the Gaussian mean width, but these notions are actually equivalent (see \cite[Prop.~10.2]{amelunxen2014edge}).}
Hence, the conic mean width is the crucial quantity to study when determining the sample complexity of \refbpgeneralnoiseless{\noiseparam=0}{\objfunc}.

While Corollary~\ref{cor:proofs:framework:guarantee-conic} already includes robustness against noise, the wish for stable reconstructions via \eqref{eq:proofs:framework:bpgeneral}, as requested in Subsection~\ref{subsec:results:compressible}, remains unfulfilled.
A major difficulty is that the mapping $\grtr \mapsto \meanwidth[\conic]{\descset{\objfunc, \grtr}}$ is typically discontinuous, since it involves forming the conic hull of the descent set $\descset{\objfunc, \grtr}$.
If $\meanwidth[\conic]{\descset{\objfunc, \grtr}} \approx \sqrt{n}$, the statement of Corollary~\ref{cor:proofs:framework:guarantee-conic} looses its significance, even when there exists a vector close to $\grtr$ whose conic mean width is much smaller.
For example, such a situation might occur if $\grtr$ just possesses a compressible coefficient sequence which is not too sparse; see also Subsection~\ref{subsec:results:compressible} as well as \cite[Sec.~III.D]{genzel2017highdim} for more details.
Our second corollary of Theorem~\ref{thm:proofs:framework:guarantee} resolves this issue by analyzing the local mean width at a specific scale $t > 0$.
It shows that one may investigate the complexity of $\objfunc$ with respect to a \emph{surrogate vector} $\grtrsparse \in \R^n$ by sacrificing estimation accuracy in the order of the approximation error $\lnorm{\grtr - \grtrsparse}$.
\begin{corollary}\label{cor:proofs:framework:guarantee-stable}
	Let Model~\ref{model:results:setup:meas} be satisfied and let $\objfunc \colon \R^n \to \R$ be convex. Moreover, fix a vector $\grtrsparse \in \R^n$ with $\objfunc(\grtrsparse) = \objfunc(\grtr)$.
	There exists a numerical constant $C > 0$ (the same as in Theorem~\ref{thm:proofs:framework:guarantee}) such that, for every $R > 0$ and $\probsuccess > 0$, the following holds true with probability at least $1 - e^{-\probsuccess^2/2}$:
	Any minimizer $\solu$ of \eqref{eq:proofs:framework:bpgeneral} satisfies
	\begin{equation}\label{eq:proofs:framework:guarantee-stable:bound}
		\lnorm{\solu - \grtr} \leq \max\Big\{ R \lnorm{\grtr - \grtrsparse}, \tfrac{2\noiseparam}{\pospart{\sqrt{m-1} - \sqrt{\tilde{m}_0-1}}} \Big\},
	\end{equation}
	where
	\begin{equation}\label{eq:proofs:framework:guarantee-stable:meas}
		\tilde{m}_0 \coloneqq C^2 \cdot \subgparam^4 \cdot \Big(\tfrac{R + 1}{R} \cdot \big[\meanwidth[\conic]{\descset{\objfunc, \grtrsparse}} + 1\big] + \probsuccess \Big)^2 + 1.
	\end{equation}
	In particular, if $\noiseparam = 0$, we have $\lnorm{\solu - \grtr} \leq R \lnorm{\grtr - \grtrsparse}$ provided that $m > \tilde{m}_0$.
\end{corollary}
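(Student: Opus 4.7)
The plan is to reduce Corollary~\ref{cor:proofs:framework:guarantee-stable} to Theorem~\ref{thm:proofs:framework:guarantee} applied at the carefully chosen scale
\[
t := \max\Big\{ R \lnorm{\grtr - \grtrsparse},\ \tfrac{2\noiseparam}{\pospart{\sqrt{m-1} - \sqrt{\tilde m_0 - 1}}} \Big\} ,
\]
with the conventions that the second argument equals $0$ when $\noiseparam = 0$ and $+\infty$ when its denominator is non-positive. The only structural ingredient available, namely $\objfunc(\grtrsparse) = \objfunc(\grtr)$, will be exploited via the translation identity
\[
\descset{\objfunc, \grtr} = \descset{\objfunc, \grtrsparse} + (\grtrsparse - \grtr),
\]
which is immediate from the equivalence $\objfunc(\grtr + \h) \leq \objfunc(\grtr) \Leftrightarrow \objfunc(\grtrsparse + \h + \grtr - \grtrsparse) \leq \objfunc(\grtrsparse)$.

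The central task is then to control $\meanwidth[t]{\descset{\objfunc, \grtr}}$ by $\meanwidth[\conic]{\descset{\objfunc, \grtrsparse}}$. For any $\h \in \descset{\objfunc, \grtr}$ with $\lnorm{\h} = t$, the shift supplies $\h' := \h + (\grtr - \grtrsparse) \in \descset{\objfunc, \grtrsparse}$ with $\lnorm{\h'} \leq t + \lnorm{\grtr - \grtrsparse} \leq \tfrac{R+1}{R} t$, where the last inequality uses $t \geq R \lnorm{\grtr - \grtrsparse}$. Writing $\sp{\gaussian}{\h} = \sp{\gaussian}{\h'} - \sp{\gaussian}{\grtr - \grtrsparse}$, observing that the second term is a centered Gaussian scalar independent of $\h$, and using the conic homogeneity of $\desccone{\objfunc, \grtrsparse}$, one arrives at
\[
t \cdot \meanwidth[t]{\descset{\objfunc, \grtr}} \leq \tfrac{R+1}{R}\, t \cdot \mean\Big[\sup_{\z \in \desccone{\objfunc, \grtrsparse} \cap \ball[2][n]} \sp{\gaussian}{\z}\Big] .
\]
Since $\vnull$ is a feasible point on the right, the inner supremum equals $\max\{W, 0\}$ with $W := \sup_{\z \in \desccone{\objfunc, \grtrsparse} \cap \S^{n-1}} \sp{\gaussian}{\z}$. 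Jensen gives $\mean W \geq 0$, and Gaussian Poincaré applied to the $1$-Lipschitz functional $W$ yields $\var W \leq 1$, so that the decomposition $\max\{W,0\} = W + \max\{-W,0\}$ combined with Cauchy–Schwarz bounds $\mean \max\{W, 0\} - \mean W \leq \tfrac{1}{2}$. Together this produces
\[
\meanwidth[t]{\descset{\objfunc, \grtr}} \leq \tfrac{R+1}{R}\big(\meanwidth[\conic]{\descset{\objfunc, \grtrsparse}} + 1\big) .
\]

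Plugging this estimate into the hypothesis \eqref{eq:proofs:framework:guarantee:cond} and recalling the definition of $\tilde m_0$ in \eqref{eq:proofs:framework:guarantee-stable:meas} gives $C\subgparam^2 (\meanwidth[t]{\descset{\objfunc, \grtr}} + \probsuccess) \leq \sqrt{\tilde m_0 - 1}$. Under $m > \tilde m_0$, the hypothesis of Theorem~\ref{thm:proofs:framework:guarantee} therefore collapses to $t \cdot (\sqrt{m-1} - \sqrt{\tilde m_0 - 1}) \geq 2\noiseparam$, which holds by the very choice of $t$. Theorem~\ref{thm:proofs:framework:guarantee} then delivers $\lnorm{\solu - \grtr} \leq t$, which is precisely \eqref{eq:proofs:framework:guarantee-stable:bound}.

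The main obstacle is the mean-width comparison above. A naive bound $\meanwidth[t]{\descset{\objfunc, \grtr}} \leq \meanwidth[\conic]{\descset{\objfunc, \grtr}}$ would be hopeless, since the latter is typically comparable to $\sqrt{n}$ whenever $\grtr$ is merely compressible rather than exactly sparse. The whole point of the anchor shift is to replace it by the genuinely low-complexity quantity $\meanwidth[\conic]{\descset{\objfunc, \grtrsparse}}$; the price to be paid is the multiplicative oversampling factor $(R+1)/R$, arising from the diameter inflation $t \mapsto t + \lnorm{\grtr - \grtrsparse}$, and the additive slack $+1$, absorbing the asymmetry between $\mean W$ and $\mean \max\{W, 0\}$ caused by the one-sided truncation at zero.
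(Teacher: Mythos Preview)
Your proof is correct and follows essentially the same route as the paper: choose the scale $t$ as the stated maximum, use the translation identity $\descset{\objfunc,\grtr} = \descset{\objfunc,\grtrsparse} + (\grtrsparse - \grtr)$, and establish the key inequality $\meanwidth[t]{\descset{\objfunc,\grtr}} \leq \tfrac{R+1}{R}\big(\meanwidth[\conic]{\descset{\objfunc,\grtrsparse}} + 1\big)$ before feeding everything into Theorem~\ref{thm:proofs:framework:guarantee}. The paper packages the mean-width comparison into a separate lemma (Lemma~\ref{lem:prelim:framework:loceffdimstable}) but the containment argument is identical to yours.

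The only genuine difference is how the additive $+1$ is obtained when passing from $\mean\big[\sup_{\z \in \desccone{\objfunc,\grtrsparse}\cap \ball[2][n]}\sp{\gaussian}{\z}\big]$ to $\meanwidth[\conic]{\descset{\objfunc,\grtrsparse}}$. The paper applies Cauchy--Schwarz followed by the statistical-dimension identity $\mean[(\sup\ldots)^2] \leq \effdim[\conic]{\cdot} + 1$ from \cite[Prop.~3.1(5), Prop.~10.2]{amelunxen2014edge}. You instead decompose $\max\{W,0\} = W + (-W)_+$ and bound $\mean[(-W)_+] \leq \tfrac{1}{2}\mean|W - \mean W| \leq \tfrac{1}{2}$ via Gaussian Poincar\'e (using that $W$ is $1$-Lipschitz and $\mean W \geq 0$). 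Your argument is self-contained and even yields the slightly sharper constant $+\tfrac{1}{2}$; the paper's is shorter but relies on an external reference. One cosmetic remark: the fact $\mean W \geq 0$ is not really Jensen's inequality but simply $W \geq \sp{\gaussian}{\z_0}$ for any fixed $\z_0$ in the (nonempty) index set.
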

In a nutshell, Corollary~\ref{cor:proofs:framework:guarantee-stable} certifies a good performance of \eqref{eq:proofs:framework:bpgeneral} if both $\grtrsparse$ approximates $\grtr$ sufficiently well and $\meanwidth[\conic]{\descset{\objfunc, \grtrsparse}}$ is small.
For a more detailed discussion of this surrogate principle, we again refer to Subsection~\ref{subsec:results:compressible}, which considers the special case of $\objfunc(\cdot) = \lnorm{\aop(\cdot)}[1]$.
In this course, we do also demonstrate a general selection strategy for $\grtrsparse$: If there exists a subspace $\projnsp \subset \R^n$ which consists of low-complexity vectors according to $\x \mapsto \meanwidth[\conic]{\descset{\objfunc, \x}}$, it is natural to set $\grtrsparse$ as the renormalized orthogonal projection of $\grtr$ onto $\projnsp$, such that $\objfunc(\grtrsparse) = \objfunc(\grtr)$.
If $\objfunc$ is a semi-norm, one may even invoke the following upper bound on the approximation error, leading to a more explicit expression in \eqref{eq:proofs:framework:guarantee-stable:bound}:
\begin{proposition}\label{prop:proofs:framework:compressibility:normbound}
	Assume that $\objfunc = \norm{\cdot}$ is a semi-norm on $\R^n$.
	Let $\grtr \in \R^n$ and let $\projnsp \subset \R^n$ be a linear subspace such that $\norm{\proj{\projnsp}\grtr} \neq 0$.
	Setting $\grtrsparse \coloneqq \tfrac{\norm{\grtr}}{\norm{\proj{\projnsp}\grtr}} \cdot \proj{\projnsp} \grtr$, we have $\norm{\grtr} = \norm{\grtrsparse}$ and
	\begin{align}
		\lnorm{\grtr - \grtrsparse} 
		&\leq \tfrac{\lnorm{\proj{\projnsp}\grtr}}{\norm{\proj{\projnsp}\grtr}} \cdot \norm{\proj{\orthcompl{\projnsp}}\grtr} + \lnorm{\proj{\orthcompl{\projnsp}}\grtr}.
	\end{align}
\end{proposition}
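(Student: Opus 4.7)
The plan is to derive both assertions by elementary triangle-inequality manipulations, first for the semi-norm $\norm{\cdot}$ and then for the Euclidean norm $\lnorm{\cdot}$. To fix notation, I abbreviate $\lambda \coloneqq \norm{\grtr}/\norm{\proj{\projnsp}\grtr} \geq 0$, so that by construction $\grtrsparse = \lambda \cdot \proj{\projnsp}\grtr$. The identity $\norm{\grtr} = \norm{\grtrsparse}$ is then immediate from the (absolute) homogeneity of the semi-norm, since $\norm{\grtrsparse} = \lambda \cdot \norm{\proj{\projnsp}\grtr} = \norm{\grtr}$.

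For the error estimate, I would first decompose $\grtr = \proj{\projnsp}\grtr + \proj{\orthcompl{\projnsp}}\grtr$ and insert the explicit expression for $\grtrsparse$ to obtain
\begin{equation}
	\grtr - \grtrsparse = (1 - \lambda)\proj{\projnsp}\grtr + \proj{\orthcompl{\projnsp}}\grtr.
\end{equation}
The triangle inequality for $\lnorm{\cdot}$ then yields
\begin{equation}
	\lnorm{\grtr - \grtrsparse} \leq |1 - \lambda| \cdot \lnorm{\proj{\projnsp}\grtr} + \lnorm{\proj{\orthcompl{\projnsp}}\grtr}.
\end{equation}

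It remains to control the scalar factor $|1 - \lambda|$ by the semi-norm ratio that appears in the target bound. Substituting the definition of $\lambda$ gives
\begin{equation}
	|1 - \lambda| = \frac{\big|\norm{\proj{\projnsp}\grtr} - \norm{\grtr}\big|}{\norm{\proj{\projnsp}\grtr}},
\end{equation}
and from the decomposition $\grtr = \proj{\projnsp}\grtr + \proj{\orthcompl{\projnsp}}\grtr$ together with the reverse triangle inequality, which holds for any semi-norm, one gets $\big|\norm{\proj{\projnsp}\grtr} - \norm{\grtr}\big| \leq \norm{\proj{\orthcompl{\projnsp}}\grtr}$. Plugging this back into the previous display and combining with the triangle bound above yields precisely the inequality claimed in the proposition.

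There is no real obstacle here: the argument is a short, self-contained calculation. The only minor point worth flagging is that the reverse triangle inequality must be applied to $\norm{\cdot}$ (not to $\lnorm{\cdot}$), and one should verify that this step does not require positive-definiteness of the semi-norm — it does not, as it follows directly from sub-additivity.
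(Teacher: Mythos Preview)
Your proof is correct and follows essentially the same route as the paper: decompose $\grtr - \grtrsparse = (1-\lambda)\proj{\projnsp}\grtr + \proj{\orthcompl{\projnsp}}\grtr$ and control $|1-\lambda|$ via the reverse triangle inequality for the semi-norm $\norm{\cdot}$. The only cosmetic difference is that the paper first applies the Pythagorean identity to get $\lnorm{\grtr - \grtrsparse}^2 = |1-\lambda|^2 \lnorm{\proj{\projnsp}\grtr}^2 + \lnorm{\proj{\orthcompl{\projnsp}}\grtr}^2$ exactly and then relaxes via $a^2+b^2 \leq (a+b)^2$, whereas you use the triangle inequality for $\lnorm{\cdot}$ directly; your version is marginally more streamlined and does not even require orthogonality of the two pieces.
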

\begin{proof}
	Since $\grtr = \proj{\projnsp}\grtr + \proj{\orthcompl{\projnsp}}\grtr$, we have
	\begin{align}
		\lnorm{\grtr - \grtrsparse}^2 &= \lnorm[\Big]{\Big( 1 - \tfrac{\norm{\grtr}}{\norm{\proj{\projnsp}\grtr}} \Big) \proj{\projnsp}\grtr + \proj{\orthcompl{\projnsp}}\grtr}^2 \\
		&= \abs[\Big]{1 - \tfrac{\norm{\grtr}}{\norm{\proj{\projnsp}\grtr}}}^2 \cdot \lnorm{ \proj{\projnsp}\grtr }^2 + \lnorm{\proj{\orthcompl{\projnsp}}\grtr}^2 \\
		&= \abs[\Big]{\tfrac{\norm{\proj{\projnsp}\grtr} - \norm{ \proj{\projnsp}\grtr + \proj{\orthcompl{\projnsp}}\grtr}}{\norm{\proj{\projnsp}\grtr}}}^2 \cdot \lnorm{ \proj{\projnsp}\grtr }^2 + \lnorm{\proj{\orthcompl{\projnsp}}\grtr}^2 \\
		&\leq \abs[\Big]{\tfrac{\norm{\proj{\orthcompl{\projnsp}}\grtr}}{\norm{\proj{\projnsp}\grtr}}}^2 \cdot \lnorm{ \proj{\projnsp}\grtr }^2 + \lnorm{\proj{\orthcompl{\projnsp}}\grtr}^2 \\
		&\leq \Big( \tfrac{\norm{\proj{\orthcompl{\projnsp}}\grtr}}{\norm{\proj{\projnsp}\grtr}} \cdot \lnorm{ \proj{\projnsp}\grtr } + \lnorm{\proj{\orthcompl{\projnsp}}\grtr} \Big)^2.\qedhere
	\end{align}
\end{proof}

\subsection{Proofs of Theorem~\ref{thm:results:recovery} and Theorem~\ref{thm:results:robuststable}}
\label{subsec:proofs:mainresults}

Let us now focus on the $\l{1}$-analysis \mbox{(semi-)norm} $\objfunc(\cdot) = \lnorm{\aop(\cdot)}[1]$.
The abstract results of the previous part indicate that the conic mean width $\meanwidth[\conic]{\descset{\lnorm{\aop(\cdot)}[1], \grtr}}$ plays a key role in analyzing \eqref{eq:intro:anabp}. While this quantity may indeed yield an accurate bound on the sampling rate, our desiderata \ref{item:intro:desiderata:computable}--\ref{item:intro:desiderata:generic} from Subsection~\ref{subsec:intro:contrib} are still hardly met, since the definition of the mean width is quite abstract and non-informative.
Therefore, we need to come up with a more explicit expression or at least a meaningful upper bound on $\meanwidth[\conic]{\descset{\lnorm{\aop(\cdot)}[1], \grtr}}$.
This is precisely the purpose of the following theorem, which forms the main technical ingredient of our approach. Its proof is postponed to Subsection~\ref{subsec:proofs:scbound}.
\begin{theorem}\label{thm:proofs:mainresults:scbound}
	Let $\aop \in \R^{N \times n}$ be an analysis operator and let $\grtr \in \R^n$ with $\grtr \not\in \ker\aop$.
	Then
	\begin{equation}
		\effdim[\conic]{\descset{\lnorm{\aop(\cdot)}[1], \grtr}} \leq \samplecompl{\aop, \grtr}.
	\end{equation}
\end{theorem}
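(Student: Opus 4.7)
The strategy follows the Amelunxen--Lotz--McCoy--Tropp ``recipe'' \cite{amelunxen2014edge}: for $\objfunc(\x) \coloneqq \ananorm{\x}$ and $\gaussian \distributed \Normdistr{\vnull}{\I{n}}$, for every $\tau > 0$ one has the polarity bound
\begin{equation}
\effdim[\conic]{\descset{\objfunc, \grtr}} \leq \mean[\distsq{\gaussian}{\tau \cdot \subdiff{\objfunc}{\grtr}}].
\end{equation}
The problem therefore reduces to (i) describing $\subdiff{\objfunc}{\grtr}$ explicitly, (ii) producing a feasible element inside it, (iii) computing the resulting Gaussian expectation, and (iv) minimizing the resulting upper bound over $\tau > 0$ so as to recover exactly $\samplecompl{\aop, \grtr}$.

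By the chain rule, any element of $\subdiff{\objfunc}{\grtr}$ has the form $\aop^\T \dv$ for some $\dv \in \R^N$ with $\dv_{\ssupp[\grtr]} = \anasign{\grtr}_{\ssupp[\grtr]}$ and $\lnorm{\dv_{\ssuppc[\grtr]}}[\infty] \leq 1$. On the support the entries $w_k = \sigma_\grtr^k$ are forced; on the cosupport I use the remaining freedom to reduce the distance by picking a $\gaussian$- and $\tau$-dependent clipping of the form
\[
w_k \coloneqq \clip{\sp{\avec_k}{\gaussian}/(\tau\sqrt{g_{k,k}})}{1}, \qquad k \in \ssuppc[\grtr],
\]
or a close variant. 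The crucial feature is that this choice is genuinely $\tau$-adapted, which is exactly the point singled out in Subsection~\ref{subsec:discussion:further} as the source of improvement over \cite{rkz2015}, whose dual vector does not depend on $\tau$.

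Expanding $\lnorm{\gaussian - \tau\aop^\T\dv}^2 = \lnorm{\gaussian}^2 - 2\tau\sp{\aop\gaussian}{\dv} + \tau^2\,\dv^\T \gram \dv$ and taking expectations, three contributions arise. The leading term yields $\mean[\lnorm{\gaussian}^2] = n$. The cross term $\mean[\sp{\aop\gaussian}{\dv}]$ vanishes on the support (centered Gaussian tested against a deterministic sign) and, on the cosupport, collapses via the one-dimensional Gaussian integral $\mean[Z \cdot \clip{Z/\tau}{1}]$ into a $\tau$-dependent scalar multiple of $\cospparamdg{\grtr}$. The quadratic form $\mean[\dv^\T \gram \dv]$ splits into three blocks along $\ssupp[\grtr]$ and $\ssuppc[\grtr]$: the $\ssupp \times \ssupp$ block equals $\spparam{\grtr}$ by the very definition \eqref{eq:results:generalsparsity:spparam}; the mixed block vanishes by the sign-symmetry of the clipping; and the $\ssuppc \times \ssuppc$ block is controlled through an estimate on the covariance of two correlated clipped Gaussians of the form $\abs{\mean[w_k w_{k'}]} \lesssim g_{k,k'}^2/\sqrt{g_{k,k} g_{k',k'}}$ (up to a $\tau$-dependent prefactor), whose sum over $\ssuppc[\grtr]$ is exactly $\cospparam{\grtr}$.

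Combining the three contributions yields an upper bound of the form $n - \phi_1(\tau)\,\cospparamdg{\grtr} + \tau^2\,\spparam{\grtr} + \phi_3(\tau)\,\cospparam{\grtr}$ in which $\phi_1, \phi_3$ are explicit one-dimensional Gaussian integrals. The first-order optimality condition in $\tau$ rewrites, after simplification, as $\spparam{\grtr}/\cospparam{\grtr} = h(\tau)$ where $h$ is precisely the function in \eqref{eq:results:scfunc:h}, so the optimal choice is $\tau^\ast = h^{-1}(\spparam{\grtr}/\cospparam{\grtr})$; substituting back and regrouping terms via $\Phi(\rho) = \erf(h^{-1}(\rho)/\sqrt{2})$ collapses the bound to $n - (\cospparamdg{\grtr})^2/\cospparam{\grtr} \cdot \Phi(\spparam{\grtr}/\cospparam{\grtr}) = \samplecompl{\aop,\grtr}$, which is the claim. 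The main obstacle is the $\ssuppc \times \ssuppc$ covariance estimate: the naive inequality $\abs{\mean[w_k w_{k'}]} \leq 1$ would only yield the classical cosparsity $\cardinality{\ssuppc[\grtr]}$ and ignore the coherence geometry of $\aop$ entirely, so extracting instead the Gram-weighted quantity $\cospparam{\grtr}$ is the technical heart of the argument and is precisely what renders the bound sharp for highly redundant frames.
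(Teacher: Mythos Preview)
Your outline captures the right architecture (polar bound, subdifferential, $\tau$-dependent clipping, Price-type covariance estimate for $\cospparam{\grtr}$), but there is a genuine gap in the optimization step that prevents the bound from collapsing to $\samplecompl{\aop,\grtr}$.

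With your choice $w_k = \clip{\sp{\avec_k}{\gaussian}/(\tau\sqrt{g_{k,k}})}{1}$ on the cosupport, the resulting upper bound is of the form
\[
n \;+\; \tau^2\,\spparam{\grtr} \;-\; 2\,\erf(\tfrac{\tau}{\sqrt{2}})\,\cospparamdg{\grtr} \;+\; \big(\erf(\tfrac{\tau}{\sqrt{2}}) - \tau^2 h(\tau)\big)\,\cospparam{\grtr},
\]
which is \emph{linear} in $\cospparamdg{\grtr}$. Two things then fail. First, differentiating in $\tau$ does \emph{not} give $h(\tau) = \spparam{\grtr}/\cospparam{\grtr}$; the stationarity condition picks up an extra term $\sqrt{2/\pi}\,e^{-\tau^2/2}\,(\cospparam{\grtr}-\cospparamdg{\grtr})$ that vanishes only when $\cospparam{\grtr}=\cospparamdg{\grtr}$ (e.g.\ for orthonormal $\aop$). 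Second, and more structurally, a bound linear in $\cospparamdg{\grtr}$ can never produce the factor $(\cospparamdg{\grtr})^2/\cospparam{\grtr}$ that appears in $\samplecompl{\aop,\grtr}$.

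The missing ingredient is a \emph{second} free parameter: the paper uses $w_k = \clip{\lambda\,\sp{\avec_k}{\gaussian}/(\tau\sqrt{g_{k,k}})}{1}$ with an overall scale $\lambda>0$, yielding (after the substitution $\tau \mapsto \lambda\tau$) the two-variable function
\[
F(\tau,\lambda) = n + (\spparam{\grtr}-\cospparam{\grtr}\,h(\tau))\,\tau^2\lambda^2 + (\lambda^2\cospparam{\grtr} - 2\lambda\,\cospparamdg{\grtr})\,\erf(\tfrac{\tau}{\sqrt{2}}).
\]
Now the joint critical point decouples into $h(\tau)=\spparam{\grtr}/\cospparam{\grtr}$ and $\lambda = \cospparamdg{\grtr}/\cospparam{\grtr}$; the quadratic-in-$\lambda$ term $\lambda^2\cospparam{\grtr}-2\lambda\cospparamdg{\grtr}$ is what generates $-(\cospparamdg{\grtr})^2/\cospparam{\grtr}$ at the optimum. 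Your ``or a close variant'' hedge is exactly right in spirit, but the variant must carry this extra degree of freedom for the algebra to close.
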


Our two main results are now a direct consequence of this bound and the guarantees from Subsection~\ref{subsec:proofs:framework}:
\begin{proof}[Proof of Theorem~\ref{thm:results:recovery}]
	By Theorem~\ref{thm:proofs:mainresults:scbound} and the condition \eqref{eq:results:recovery:meas}, we obtain
	\begin{equation}
		m > \Big( \sqrt{\samplecompl{\aop, \grtr}} + \probsuccess \Big)^2 + 1 \geq C^2 \cdot \subgparam^4 \cdot \Big( \meanwidth[\conic]{\descset{\lnorm{\aop(\cdot)}[1], \grtr}} + \probsuccess \Big)^2 + 1,
	\end{equation}
	implying that in the ``in particular''-part of Corollary~\ref{cor:proofs:framework:guarantee-conic} can be applied with $\objfunc(\cdot) = \lnorm{\aop(\cdot)}[1]$.
	Note that $C^2 \cdot \subgparam^4 = 1$ here, due to the Gaussianity of $\A$. This already yields the claim of Theorem~\ref{thm:results:recovery}.
\end{proof}

\begin{proof}[Proof of Theorem~\ref{thm:results:robuststable}]
	By the definition of $\grtrsparse \in \R^n$ in \eqref{eq:results:robuststable:grtrsparse}, we have $\grtrsparse \not\in \ker\aop$ and therefore Theorem~\ref{thm:proofs:mainresults:scbound} gives
	\begin{equation}
		\effdim[\conic]{\descset{\lnorm{\aop(\cdot)}[1], \grtrsparse}} \leq \samplecompl{\aop, \grtrsparse}.
	\end{equation}
	Moreover, we observe that $\lnorm{\aop\grtr}[1] = \lnorm{\aop\grtrsparse}[1]$. Hence, Corollary~\ref{cor:proofs:framework:guarantee-stable} is applicable with $\objfunc(\cdot) = \lnorm{\aop(\cdot)}[1]$ and we particularly have $\tilde{m}_0 \leq m_0 < m$; note that the constants $C$ in the definitions of $\tilde{m}_0$ (in \eqref{eq:proofs:framework:guarantee-stable:meas}) and $m_0$ (in \eqref{eq:results:robuststable:meas}) just differ in taking the square.
	The claim now follows from the error bound of \eqref{eq:proofs:framework:guarantee-stable:bound}.
\end{proof}

\subsection{Proof of Theorem~\ref{thm:proofs:mainresults:scbound} (Bounding the Conic Mean Width)}
\label{subsec:proofs:scbound}

To complete the proof, it remains to verify Theorem~\ref{thm:proofs:mainresults:scbound}.
For this, we loosely follow \cite[Recipe~4.1]{amelunxen2014edge}, but note again that we do not attempt to exactly compute $\meanwidth[\conic]{\descset{\lnorm{\aop(\cdot)}[1], \grtr}}$, but rather establish a sophisticated upper bound.

\subsubsection*{Step 1: A polar bound on the conic mean width}



Our first proof step is based on a well-known polarity argument, providing an upper bound on the mean width of a descent cone.
This result even holds true in the very general setup Subsection~\ref{subsec:proofs:framework}.
\begin{proposition}[\protect{\cite[Prop.~4.1]{amelunxen2014edge}}]\label{prop:proofs:scbound:mwpolar}
Let $\objfunc \colon \R^n \to \R$ be convex and let $\grtr \in \R^n$.
Then, the \define{subdifferential} of $\objfunc$ at $\grtr$, given by
\begin{equation}\label{eq:proofs:scbound:defsubd}
	\subd{\objfunc}(\grtr) \coloneqq \{ \vec{z} \in \R^n \suchthat \objfunc(\x) \geq \objfunc(\grtr) + \sp{\vec{z}}{\x - \grtr} \text{ for all $\x \in \R^n$} \},
\end{equation}
is well-defined.
If $\subd{\objfunc}(\grtr)$ is non-empty, compact, and does not contain the origin, we have\footnote{Hereafter, the expectation is always taken with respect to the Gaussian random vector $\gaussian$.}
\begin{equation}\label{eq:proofs:scbound:mwpolar:bound}
	\effdim[\conic]{\descset{\objfunc, \grtr}} \leq \inf_{\tau > 0} \mean[\inf_{\vec{z} \in \subd{\objfunc}(\grtr)} \lnorm{\gaussian - \tau \cdot \vec{z}}^2][\Big], 
\end{equation}
where $\gaussian \distributed \Normdistr{\vnull}{\I{n}}$.
\end{proposition}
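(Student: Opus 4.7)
The plan is to follow the standard polarity approach from conic integral geometry (essentially as in \cite{amelunxen2014edge}). The three ingredients are: identifying the polar of the descent cone with $\cone{\subd{\objfunc}(\grtr)}$, bounding the squared conic mean width by the expected squared Gaussian distance to this polar cone, and finally parameterizing by a positive scalar $\tau > 0$ to pass from $\cone{\subd{\objfunc}(\grtr)}$ to $\tau \cdot \subd{\objfunc}(\grtr)$.

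First, I would establish that the polar cone of $\desccone{\objfunc, \grtr}$ equals $\cone{\subd{\objfunc}(\grtr)}$. One inclusion drops out of the subdifferential inequality \eqref{eq:proofs:scbound:defsubd}: for every $\vec{z} \in \subd{\objfunc}(\grtr)$ and every $\h \in \descset{\objfunc, \grtr}$, one has $\sp{\vec{z}}{\h} \leq \objfunc(\grtr + \h) - \objfunc(\grtr) \leq 0$. Since the polar of a set is automatically a cone, this shows $\cone{\subd{\objfunc}(\grtr)}$ is contained in the polar of $\desccone{\objfunc, \grtr}$. The converse inclusion is classical in convex analysis and uses the hypotheses that $\subd{\objfunc}(\grtr)$ is non-empty and compact with $\vnull \notin \subd{\objfunc}(\grtr)$, which together guarantee that $\cone{\subd{\objfunc}(\grtr)}$ is a \emph{closed} proper convex cone, so that the bipolar theorem applies.

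Second, I would invoke the Moreau decomposition: for any closed convex cone $C \subset \R^n$ with polar cone $C^\circ$, one has $\gaussian = \proj{C}(\gaussian) + \proj{C^\circ}(\gaussian)$ with the two projections orthogonal. Consequently, $\distsq{\gaussian}{C^\circ} = \lnorm{\proj{C}(\gaussian)}^2$, and moreover $\sup_{\h \in C \intersec S^{n-1}} \sp{\gaussian}{\h} = \lnorm{\proj{C}(\gaussian)}$ whenever $C \neq \{\vnull\}$. Applying Jensen's inequality $(\mean[X])^2 \leq \mean[X^2]$ with $X = \lnorm{\proj{C}(\gaussian)}$ and specializing to $C = \desccone{\objfunc, \grtr}$ yields
\begin{equation*}
	\effdim[\conic]{\descset{\objfunc, \grtr}} = \bigl(\mean[\sup\nolimits_{\h \in \desccone{\objfunc, \grtr} \intersec S^{n-1}} \sp{\gaussian}{\h}]\bigr)^2 \leq \mean[\distsq{\gaussian}{\cone{\subd{\objfunc}(\grtr)}}],
\end{equation*}
where the polarity identity of Step 1 has been used to rewrite the polar cone.

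Finally, to introduce the free parameter $\tau > 0$, I would observe that $\tau \cdot \subd{\objfunc}(\grtr) \subset \cone{\subd{\objfunc}(\grtr)}$, so
\begin{equation*}
	\distsq{\gaussian}{\cone{\subd{\objfunc}(\grtr)}} \leq \distsq{\gaussian}{\tau \cdot \subd{\objfunc}(\grtr)} = \inf_{\vec{z} \in \subd{\objfunc}(\grtr)} \lnorm{\gaussian - \tau \vec{z}}^2.
\end{equation*}
Taking expectation and then infimum over $\tau > 0$ recovers the claimed bound \eqref{eq:proofs:scbound:mwpolar:bound}. The main technical obstacle is the polarity identity of Step 1 (in particular, verifying that $\cone{\subd{\objfunc}(\grtr)}$ is closed), but this is exactly where the compactness and $\vnull \notin \subd{\objfunc}(\grtr)$ hypotheses are used; the remaining Jensen and projection steps are routine manipulations for Gaussian measures on closed convex cones.
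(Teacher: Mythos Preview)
The paper does not give its own proof of this proposition; it is quoted directly from \cite[Prop.~4.1]{amelunxen2014edge} and used as a black box (only a brief remark on sharpness follows). Your sketch reproduces exactly the standard polarity/Moreau argument behind that reference and is correct. One minor quibble: the pointwise identity $\sup_{\h \in C \cap S^{n-1}} \sp{\gaussian}{\h} = \lnorm{\proj{C}(\gaussian)}$ fails when $\proj{C}(\gaussian) = \vnull$ (the supremum can then be strictly negative), but the inequality $\leq$ always holds, and since $\mean[\sup_{\h \in C \cap S^{n-1}} \sp{\gaussian}{\h}] \geq 0$ by Gaussian symmetry, your Jensen step still yields the desired bound.
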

\begin{remark}
For several special cases, such as norms \cite[Thm.~4.3]{amelunxen2014edge} or total variation \cite{zhang2016precise}, it is known that \eqref{eq:proofs:scbound:mwpolar:bound} provides a very accurate bound.
For a more detailed discussion of the sharpness of Proposition~\ref{prop:proofs:scbound:mwpolar}, we refer to \cite[Sec.~4.2]{amelunxen2014edge}.
\end{remark}

In order to invoke Proposition~\ref{prop:proofs:scbound:mwpolar}, we need to specify the subdifferential of $\objfunc(\cdot) = \lnorm{\aop(\cdot)}[1]$ at $\grtr$.
For the sake of brevity, we now just write $\ssupp \coloneqq \ssupp[\grtr] = \supp(\aop\grtr)$, $\ssuppc \coloneqq \ssuppc[\grtr]$ as well as $\anasign{} \coloneqq \anasign{\grtr} = \sign(\aop \grtr)$.
\begin{lemma}\label{lem:proofs:scbound:subd}
	We have
	\begin{equation}\label{eq:proofs:scbound:subdanalysisnorm}
		\subd{\lnorm{\aop (\cdot)}[1]}(\grtr) = \aop^\T \anasign{}  + \Big\{  \aop^\T \dv \suchthat \dv \in \R^N, \supp(\dv) \subset \ssuppc, \lnorm{\dv}[\infty] \leq 1 \Big\}.
	\end{equation}
	In particular, $\emptyset \neq \subd{\lnorm{\aop (\cdot)}[1]}(\grtr) \subset \aop^\T \intvcl{-1}{1}^N$ and the following equivalences hold true:
	\begin{equation}\label{eq:proofs:scbound:subd:equiv}
		\vnull \not\in \subd{\lnorm{\aop (\cdot)}[1]}(\grtr) \quad \iff \quad \grtr \not\in \ker\aop \quad \iff \quad \spparam{} = \lnorm{\aop^\T \anasign{}}^2 > 0.
	\end{equation}
\end{lemma}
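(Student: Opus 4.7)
The claimed description of $\subd{\lnorm{\aop(\cdot)}[1]}(\grtr)$ in (1) is a textbook application of the subdifferential chain rule for the composition $\lnorm{\cdot}[1] \circ \aop$. Since $\lnorm{\cdot}[1] \colon \R^N \to \R$ is finite-valued (hence continuous) and $\aop$ is linear, the chain rule applies without any qualification condition and yields
\begin{equation}
\subd{\lnorm{\aop(\cdot)}[1]}(\grtr) = \aop^\T \cdot \subd{\lnorm{\cdot}[1]}(\aop\grtr).
\end{equation}
The first step is therefore to recall the standard characterization
\begin{equation}
\subd{\lnorm{\cdot}[1]}(\y) = \sign(\y) + \{ \dv \in \R^N \suchthat \supp(\dv) \subset \setcompl{\supp(\y)}, \ \lnorm{\dv}[\infty] \leq 1 \}
\end{equation}
for arbitrary $\y \in \R^N$, which follows entrywise from the one-dimensional subdifferential of $|\cdot|$. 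Plugging in $\y = \aop\grtr$ (so that $\sign(\aop\grtr) = \anasign{}$ and $\setcompl{\supp(\aop\grtr)} = \ssuppc$) and pushing everything through $\aop^\T$ then produces the decomposition stated in (1).

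Part (2) is a direct consequence of this description. Non-emptiness follows by taking $\dv = \vnull$. For the inclusion into $\aop^\T \intvcl{-1}{1}^N$, I will observe that $\anasign{}$ takes values in $\{-1,0,1\}$ and vanishes on $\ssuppc$, whereas $\dv$ is supported on $\ssuppc$ with entries in $\intvcl{-1}{1}$; hence their sum lies coordinatewise in $\intvcl{-1}{1}^N$.

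For the chain of equivalences in (3), I will proceed in three small computations. First, the identity $\spparam{} = \lnorm{\aop^\T \anasign{}}^2$ follows by expanding
\begin{equation}
\lnorm{\aop^\T\anasign{}}^2 = \sum_{k,k' \in [N]} \sigma^k_{\grtr} \sigma^{k'}_{\grtr} \sp{\avec_k}{\avec_{k'}} = \sum_{k,k' \in \ssupp} \sigma^k_{\grtr} \sigma^{k'}_{\grtr} g_{k,k'} = \spparam{},
\end{equation}
where the middle equality uses $\sigma^k_{\grtr} = 0$ for $k \notin \ssupp$. Next, the equivalence $\grtr \notin \ker\aop \iff \lnorm{\aop^\T\anasign{}}^2 > 0$ is extracted from the identity
\begin{equation}
\sp{\aop^\T\anasign{}}{\grtr} = \sp{\anasign{}}{\aop\grtr} = \sum_{k \in \ssupp} \sign((\aop\grtr)_k) \cdot (\aop\grtr)_k = \lnorm{\aop\grtr}[1],
\end{equation}
whose left-hand side vanishes whenever $\aop^\T\anasign{} = \vnull$ and whose right-hand side vanishes precisely when $\grtr \in \ker\aop$.

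It remains to link $\vnull \notin \subd{\lnorm{\aop(\cdot)}[1]}(\grtr)$ with $\grtr \notin \ker\aop$, which I plan to do by a short polarity argument. If $\grtr \in \ker\aop$, then $\anasign{} = \vnull$ and $\ssuppc = [N]$, so $\dv = \vnull$ witnesses $\vnull \in \subd{\lnorm{\aop(\cdot)}[1]}(\grtr)$. Conversely, assume $\vnull = \aop^\T\anasign{} + \aop^\T\dv$ for some admissible $\dv$. Taking the inner product with $\grtr$ and using the identity above gives $0 = \lnorm{\aop\grtr}[1] + \sp{\dv}{\aop\grtr}$; but $\sp{\dv}{\aop\grtr} = 0$ because $\supp(\dv) \subset \ssuppc = \setcompl{\supp(\aop\grtr)}$, forcing $\lnorm{\aop\grtr}[1] = 0$, i.e., $\grtr \in \ker\aop$. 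No substantial obstacle is expected in this lemma; the only place requiring a little care is the final polarity step, where one must exploit the disjointness of $\supp(\dv)$ and $\supp(\aop\grtr)$ to annihilate the cross term.
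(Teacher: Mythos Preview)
Your proposal is correct and follows essentially the same route as the paper: the chain rule $\subd{\lnorm{\aop(\cdot)}[1]}(\grtr)=\aop^\T\subd{\lnorm{\cdot}[1]}(\aop\grtr)$ combined with the standard description of $\subd{\lnorm{\cdot}[1]}$, and then the inner-product identity $\sp{\aop^\T\anasign{}}{\grtr}=\lnorm{\aop\grtr}[1]$ together with the disjoint-support observation $\sp{\dv}{\aop\grtr}=0$ to establish the equivalences. The paper argues in exactly this way (citing \cite[Thm.~23.9]{rockafellar2015convex} for the chain rule), so there is nothing to add.
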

\begin{proof}
	First, we apply the chain rule for subdifferentials (see \cite[Thm.~23.9]{rockafellar2015convex})
	\begin{equation}
		\subd{\lnorm{\aop (\cdot)}[1]}(\grtr) =  \aop^\T \left(\subd{\lnorm{\cdot}[1]}(\aop \grtr)\right),
	\end{equation}
	and together with
	\begin{equation}
	\subd{\lnorm{\cdot}[1]}(\aop \grtr) = \Big\{\vec{z}  \in \R^N \suchthat \vec{z}_{\ssupp} = \underbrace{\sign (\aop \grtr)}_{= \anasign{}}, \lnorm{\vec{z}_{\ssuppc}}[\infty] \leq 1  \Big\},
	\end{equation}
	we obtain that
	\begin{equation}
	\subd{\lnorm{\aop (\cdot)}[1]}(\grtr) = \aop^\T \anasign{}  + \Big\{  \aop^\T \dv \suchthat \dv \in \R^N, \supp(\dv) \subset \ssuppc, \lnorm{\dv}[\infty] \leq 1 \Big\}.
	\end{equation}
	
	To verify the ``in particular''-part, suppose that $\vnull \in \subd{\lnorm{\aop (\cdot)}[1]}(\grtr)$. 
	Then there exists a $\dv \in \R^N$ with $\supp(\dv) \subset \ssuppc$ and $\lnorm{\dv}[\infty] \leq 1$ such that
	$\vnull = \aop^\T \anasign{} + \aop^\T \dv$. Thus,
	\begin{equation}\label{eq:proofs:scbound:subd:l1anazero}
		0 = \sp{\grtr}{\aop^\T \anasign{} + \aop^\T \dv} = \underbrace{\sp{\aop\grtr}{\anasign{}}}_{ = \lnorm{\aop\grtr}[1]} + \underbrace{\sp{\aop\grtr}{\dv}}_{ = 0} = \lnorm{\aop\grtr}[1],
	\end{equation}
	which means $\grtr \in \ker\aop$. On the other hand, if $\grtr \in \ker\aop$, we have $\anasign{} = \sign(\aop\grtr) = \vnull$ and therefore $\aop^\T \anasign{} = \vnull$. Selecting $\dv \coloneqq \vnull$ shows that $\vnull \in \subd{\lnorm{\aop (\cdot)}[1]}(\grtr)$.
	This proves the first equivalence.
	
	By the definition of $\spparam{}$ (see Definition~\ref{def:results:generalsparsity}), it holds 
	\begin{equation}
		\spparam{} = \sp{\anasign{}}{\gram\anasign{}} = \sp{\anasign{}}{\aop\aop^\T\anasign{}} = \sp{\aop^\T\anasign{}}{\aop^\T\anasign{}} = \lnorm{\aop^\T \anasign{}}^2.
	\end{equation}
	Thus, the second equivalence of \eqref{eq:proofs:scbound:subd:equiv} follows from
	\begin{equation}
		\spparam{} = \lnorm{\aop^\T \anasign{}}^2 = 0 \quad \iff \quad \aop^\T \anasign{} = \vnull \quad \iff \quad \anasign{} = \sign(\aop\grtr) = \vnull \quad \iff \quad \aop\grtr = \vnull,
	\end{equation}
	where we have again made use of \eqref{eq:proofs:scbound:subd:l1anazero} with $\dv = \vnull$. 
\end{proof}

In conclusion, if $\grtr \not\in \ker\aop$, Proposition~\ref{prop:proofs:scbound:mwpolar} and Lemma~\ref{lem:proofs:scbound:subd} imply that
\begin{align}
	\effdim[\conic]{\descset{\lnorm{\aop (\cdot)}[1], \grtr}} &\leq \inf_{\tau > 0} \mean{}\Big[ \inf_{\substack{\dv\in \R^N,\\ \dv_{\ssupp} = \vnull, \lnorm{\dv_{\ssuppc}}[\infty] \leq 1}} \lnorm{\gaussian - \tau \aop^\T \anasign{} - \tau \aop^\T \dv }[2]^2\Big]. \label{eq:proofs:scbound:bound:polar}
\end{align}
	
\subsubsection*{Step 2: Selecting the $\l{1}$-dual vector $\dv$.}

Evaluating the right-hand side of \eqref{eq:proofs:scbound:bound:polar} particularly asks us to study the optimization problem inside of the expected value.
It can be reformulated as a \emph{least-squares problem with box-con\-straint}:
\begin{equation}\label{eq:proofs:scbound:boxconstraint}
	\min_{\dv \in \R^N} \lnorm{\gaussian - \tau \aop^\T \anasign{} - \tau \aop^\T \dv}[2]^2 \qquad \text{subject to} \qquad \substack{\displaystyle\text{$w_k = 0$ for $k \in \ssupp$} \\ \displaystyle\text{and $\abs{w_k} \leq 1$ for $k \in \ssuppc$.}}
\end{equation}
Unfortunately, \eqref{eq:proofs:scbound:boxconstraint} does not seem to have a closed-form minimizer in general.
Hence, instead of seeking for an exact solution, we rather consider an ansatz vector $\dv' = (w_1', \dots, w_N') \in \R^N$ of the form
\begin{equation}\label{eq:proofs:scbound:dvgeneral}
	w_k' \coloneqq \begin{cases}
		0, & k \in \ssupp, \\
		\clip{\tau^{-1} \dvparam_k \sp{\avec_k}{\gaussian}}{1}, & k \in \ssuppc,
		 \end{cases}
\end{equation}
where $\dvparam_k > 0$ are flexible tuning parameters that will be specified later on.
Note that $\dv' \in \intvcl{-1}{1}^N$ depends on both $\gaussian$ and $\tau$.
In Remark~\ref{rmk:proofs:scbound:kkt} below, we will also state the KKT-conditions of \eqref{eq:proofs:scbound:boxconstraint}, showing that the choice of \eqref{eq:proofs:scbound:dvgeneral} is at least plausible (although not always optimal) if the Gram matrix of $\aop$ is well-localized. 

Since $w_k'$ is a symmetric transformation of $\sp{\avec_k}{\gaussian} \distributed \Normdistr{0}{\lnorm{\avec_k}^2}$ for every $k \in \ssuppc$, we observe that $\mean[\dv'] = \vnull$.
Now, we can proceed with \eqref{eq:proofs:scbound:bound:polar} as follows:
\begin{align}
	&\mean{}\Big[ \inf_{\substack{\dv\in \R^N,\\ \dv_{\ssupp} = \vnull, \lnorm{\dv_{\ssuppc}}[\infty] \leq 1}} \lnorm{\gaussian - \tau \aop^\T \anasign{} - \tau \aop^\T \dv }[2]^2\Big] \\
	\leq{} & \mean[{\lnorm{\gaussian - \tau \aop^\T \anasign{} - \tau \aop^\T \dv' }[2]^2}][\Big] \\
	={} & \mean[{\lnorm{\gaussian - \tau \aop^\T \dv' }^2}][\Big] + \mean{}\Big[\lnorm{\tau \aop^\T \anasign{}}[2]^2\Big] - \mean[2 \sp{\gaussian}{\tau \aop^\T \anasign{}}] + \mean[2\sp{\tau \aop^\T \dv'}{\tau \aop^\T \anasign{}}] \\
	={} & \mean[{\lnorm{\gaussian - \tau \aop^\T \dv' }^2}][\Big] + \tau^2 \underbrace{\lnorm{\aop^\T \anasign{}}[2]^2}_{\stackrel{\eqref{eq:proofs:scbound:subd:equiv}}{=} \spparam{}} -2 \sp{\underbrace{\mean[\gaussian]}_{= \vnull}}{\tau \aop^\T \anasign{}} + 2 \sp{\tau \aop^\T \underbrace{\mean[\dv']}_{= \vnull}}{\tau \aop^\T \anasign{}} \\
	={} & \mean[\lnorm{\gaussian}^2][\Big] - 2 \tau \mean[\sp{\gaussian}{\aop^\T \dv'}] + \tau^2 \mean[{\lnorm{\aop^\T \dv'}^2}][\Big] + \tau^2 \cdot \spparam{} \\
	={} & n - \underbrace{2 \tau \mean[\sp{\aop\gaussian}{\dv'}]}_{\eqqcolon T_1} + \underbrace{\tau^2 \mean[\sp{\dv'}{\gram\dv'}]}_{\eqqcolon T_2} + \tau^2 \spparam{} = n - T_1 + T_2 + \tau^2 \cdot \spparam{}. \label{eq:proofs:scbound:bound:dvgeneral}
\end{align}
Hence, it remains to resolve the expected values of $T_1$ and $T_2$.

\subsubsection*{Step 3: A general bound on the mean width.}

Let us start with the term $T_1$:
Using the definition of $\dv'$, we obtain
\begin{align}
	T_1 = 2 \tau \mean[\sp{\aop\gaussian}{\dv'}] &= 2\tau \sum_{k \in \ssuppc} \mean[\sp{\avec_k}{\gaussian} \cdot w_k' ] \\
	&= 2\tau \sum_{k \in \ssuppc} \mean[\sp{\avec_k}{\gaussian} \cdot \sign(\sp{\avec_k}{\gaussian}) \cdot \min\{\tau^{-1} \dvparam_k \abs{\sp{\avec_k}{\gaussian}}, 1 \} ][\Big] \\
	&= 2 \sum_{k \in \ssuppc} \mean[\abs{\sp{\avec_k}{\gaussian}} \cdot \min\{\dvparam_k \abs{\sp{\avec_k}{\gaussian}}, \tau \} ][\Big],
\end{align}
and since $\sp{\avec_k}{\gaussian} \distributed \Normdistr{0}{g_{k,k}}$, an elementary calculation shows that
\begin{equation}
	T_1 = 2 \sum_{k \in \ssuppc} \dvparam_k \cdot g_{k,k} \cdot \erf\big(\tfrac{\tau}{\dvparam_k \sqrt{2g_{k,k}}}\big).
\end{equation}

Handling the term $T_2$ turns out to be more complicated:
Since $\sqrt{g_{k,k}} = \lnorm{\avec_k}$, we first rewrite~$T_2$ such that the clipped variables are standard Gaussians: 
\begin{align}
	T_2 &= \tau^2 \sum_{k,k' \in \ssuppc} g_{k, k'} \cdot \mean[w_{k}' \cdot w_{k'}'] = \sum_{k,k' \in \ssuppc} g_{k, k'} \cdot \mean[\clip{\dvparam_k \sp{\avec_k}{\gaussian}}{\tau} \cdot \clip{\dvparam_{k'} \sp{\avec_{k'}}{\gaussian}}{\tau}][\Big] \\
	&= \sum_{k,k' \in \ssuppc} g_{k, k'} \cdot \dvparam_k \cdot \dvparam_{k'} \cdot \sqrt{g_{k,k} \cdot g_{k',k'}}  \cdot \mean[{\clip[auto]{\sp{\tfrac{\avec_k}{\lnorm{\avec_k}}}{\gaussian}}{\tfrac{\tau}{\dvparam_k \sqrt{g_{k,k}}}} \cdot \clip[\Big]{\sp{\tfrac{\avec_{k'}}{\lnorm{\avec_{k'}}}}{\gaussian}}{\tfrac{\tau}{\dvparam_{k'} \sqrt{g_{k',k'}}}}}][\Big]. \\ \label{eq:proofs:scbound:T2}
\end{align}
In general, there does not exist a closed-form expression for the covariance between clipped Gaussian variables.
However, we may invoke the following bound, which is based on Price's Theorem and proven in Appendix~\ref{subsec:prelim:clippedcovar}:
\begin{lemma}\label{lem:proofs:scbound:clippedcovar}
	Let $\v_1, \v_2 \in \S^{n-1}$ and let $\gaussian \distributed \Normdistr{\vnull}{\I{n}}$.
	Fix numbers $\beta_1, \beta_2 \geq 0$ and set $\beta^{\min} \coloneqq \min \{\beta_1, \beta_2\}$, $\beta^{\max} \coloneqq \max \{\beta_1, \beta_2\}$.
	Then, we have
	\begin{align}
		&\abs[\Big]{\mean[\clip{\sp{\v_1}{\gaussian}}{\beta_1} \cdot \clip{\sp{\v_2}{\gaussian}}{\beta_2}][\Big]} \\
		\leq{} & \abs{\sp{\v_1}{\v_2}} \cdot \Big[ \erf\big(\tfrac{\beta^{\min}}{\sqrt{2}}\big) + \Big( 1 - \erf\big(\tfrac{\beta^{\max}}{\sqrt{2}}\big) - \sqrt{\tfrac{2}{\pi}} \tfrac{e^{-(\beta^{\max})^2 / 2}}{\beta^{\max}} \Big) \cdot \beta^{\min} \cdot \beta^{\max}\Big] \\
		={} & \abs{\sp{\v_1}{\v_2}} \cdot \Big[ \erf\big(\tfrac{\beta^{\min}}{\sqrt{2}}\big) - h(\beta^{\max}) \cdot \beta^{\min} \cdot \beta^{\max} \Big], \label{eq:proofs:scbound:clippedcovar:bound}
	\end{align}
	where $h(\tau) = \sqrt{\tfrac{2}{\pi}} \tfrac{e^{-\tau^2/2}}{\tau} + \erf(\tfrac{\tau}{\sqrt{2}}) - 1$; see \eqref{eq:results:scfunc:h}. If $\v_1 = \v_2$, \eqref{eq:proofs:scbound:clippedcovar:bound} holds true with equality.
\end{lemma}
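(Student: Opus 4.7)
The plan is to view the quantity of interest as a function of the correlation and exploit a convexity argument. Set $X_i \coloneqq \sp{\v_i}{\gaussian}$ for $i = 1, 2$, so that $(X_1, X_2)$ is jointly Gaussian with zero mean, unit variances, and correlation $\rho \coloneqq \sp{\v_1}{\v_2} \in [-1,1]$. The expectation on the left-hand side depends on $\v_1, \v_2$ only through $\rho$, so we may write
\begin{equation}
F(\rho) \coloneqq \mean[\clip{X_1}{\beta_1} \cdot \clip{X_2}{\beta_2}][\big],
\end{equation}
and observe that $F$ is odd in $\rho$: the clip function is odd, so reversing the sign of $X_2$ flips both the correlation and $F$. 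It therefore suffices to establish $F(\rho) \leq \rho \cdot F(1)$ for $\rho \in [0,1]$, and one may assume without loss of generality $\beta_1 = \beta^{\min} \leq \beta^{\max} = \beta_2$.

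First, Price's theorem applied to the Lipschitz clip functions---whose weak derivatives are the indicators $s \mapsto \indcoeff{\abs{s}\leq\beta_i}$---gives
\begin{equation}
F'(\rho) = \prob[\abs{X_1} \leq \beta_1,\, \abs{X_2} \leq \beta_2].
\end{equation}
The core structural step is to show that $F'$ is non-decreasing on $[0,1]$, so that $F$ is convex there. This follows from Plackett's identity for the bivariate normal CDF (equivalently, the density $p_\rho$ satisfies the heat-type relation $\partial_\rho p_\rho = \partial_{x_1}\partial_{x_2} p_\rho$), which after invoking the symmetries of $p_\rho$ yields
\begin{equation}
F''(\rho) = 2\bigl[p_\rho(\beta_1, \beta_2) - p_\rho(\beta_1, -\beta_2)\bigr] \geq 0 \quad \text{for } \rho \in [0,1],
\end{equation}
the non-negativity being immediate from the exponential ratio $p_\rho(\beta_1,\beta_2)/p_\rho(\beta_1,-\beta_2) = \exp(2\rho\beta_1\beta_2/(1-\rho^2)) \geq 1$. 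Since $F(0) = 0$ by independence, the standard secant bound for convex functions delivers $F(\rho) \leq \rho \cdot F(1)$ on $[0,1]$, and the odd symmetry then yields $\abs{F(\rho)} \leq \abs{\rho} \cdot F(1)$ for all $\rho \in [-1,1]$.

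It remains to compute $F(1)$ explicitly, which will simultaneously settle the equality claim when $\v_1 = \v_2$. At $\rho = 1$ one has $X_1 = X_2 \eqqcolon X \distributed \Normdistr{0}{1}$, and partitioning the expectation along $\{\abs{X} \leq \beta^{\min}\}$, $\{\beta^{\min} < \abs{X} \leq \beta^{\max}\}$, and $\{\abs{X} > \beta^{\max}\}$, the elementary identities
\begin{equation}
\mean[X^2 \indcoeff{\abs{X}\leq\beta}] = \erf(\tfrac{\beta}{\sqrt{2}}) - \beta\sqrt{\tfrac{2}{\pi}}\,e^{-\beta^2/2}, \qquad \int_{\beta^{\min}}^{\beta^{\max}}\!\! x\, e^{-x^2/2}\, dx = e^{-(\beta^{\min})^2/2} - e^{-(\beta^{\max})^2/2}
\end{equation}
combine, after collecting terms, into $F(1) = \erf(\beta^{\min}/\sqrt{2}) - h(\beta^{\max})\cdot\beta^{\min}\cdot\beta^{\max}$. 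The hardest technical point is the rigorous justification of Price's theorem for the Lipschitz (not $C^1$) clip function, which requires a standard mollification argument; the subsequent differentiation of $F'$ to obtain $F''$ is then clean from the explicit integral representation of $F'$ against the density $p_\rho$.
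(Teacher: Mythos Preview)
Your proof is correct and follows essentially the same route as the paper: both view the clipped covariance as a function $\Theta(\rho)$ of the correlation, invoke Price's theorem to obtain $\Theta''(\rho) = 2[p_\rho(\beta_1,\beta_2) - p_\rho(\beta_1,-\beta_2)] \geq 0$ on $(0,1)$, deduce the secant bound $\Theta(\rho) \leq \rho\,\Theta(1)$ from convexity and $\Theta(0)=0$, and then compute $\Theta(1)$ explicitly. The only cosmetic differences are that the paper cites a tempered-distribution version of Price's theorem (Voigtlaender) rather than mollification, and explicitly verifies continuity of $\Theta$ at the endpoints $\rho = \pm 1$ via a coupling argument---a point you should also make explicit, since $p_\rho$ degenerates there and the secant inequality needs $\Theta$ continuous on the closed interval.
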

Let us apply Lemma~\ref{lem:proofs:scbound:clippedcovar} to equation \eqref{eq:proofs:scbound:T2} by setting
\begin{alignat}{2}
	&\v_1 \coloneqq \tfrac{\avec_k}{\lnorm{\avec_k}} \ , \quad && \v_2 \coloneqq \tfrac{\avec_{k'}}{\lnorm{\avec_{k'}}} \ , \quad \text{and} \\
	&\beta_1 \coloneqq \tfrac{\tau}{\dvparam_k \sqrt{g_{k,k}}} \ , \quad && \beta_2 \coloneqq \tfrac{\tau}{\dvparam_{k'} \sqrt{g_{k',k'}}} \ .
\end{alignat}
Observing that $\tfrac{g_{k,k'}}{\sqrt{g_{k,k} \cdot g_{k',k'}}} = \sp{\v_1}{\v_2}$, we end up with
\begin{align}
	T_2 \leq \sum_{k,k' \in \ssuppc} g_{k, k'}^2 \cdot \dvparam_k \cdot \dvparam_{k'} \cdot \Big( \erf\big(\tfrac{\tau\beta_{k,k'}^{\min}}{\sqrt{2}}\big) - h(\tau\beta_{k,k'}^{\max}) \cdot \tau^2 \cdot \beta_{k,k'}^{\min} \cdot \beta_{k,k'}^{\max} \Big),
\end{align}
where
\begin{equation}\label{eq:proofs:scbound:bound:betaminmax}
	\beta_{k,k'}^{\min} \coloneqq \min\Big\{ \tfrac{1}{\dvparam_{k} \sqrt{g_{k,k}}}, \tfrac{1}{\dvparam_{k'} \sqrt{g_{k',k'}}} \Big\} \quad \text{and} \quad \beta_{k,k'}^{\max} \coloneqq \max\Big\{ \tfrac{1}{\dvparam_{k} \sqrt{g_{k,k}}}, \tfrac{1}{\dvparam_{k'} \sqrt{g_{k',k'}}} \Big\}.
\end{equation}

Combining our findings for $T_1$ and $T_2$ with the estimates of \eqref{eq:proofs:scbound:bound:polar} and \eqref{eq:proofs:scbound:bound:dvgeneral} yields the following interim result:
\begin{proposition}\label{prop:proofs:scbound:generalbound}
	Let $\aop \in \R^{N \times n}$ be an analysis operator and let $\grtr \in \R^n$ with $\grtr \not\in \ker\aop$.
	For fixed numbers $\dvparam_k > 0$, $k \in \ssuppc$, we have
	\begin{multline}
		\effdim[\conic]{\descset{\lnorm{\aop(\cdot)}[1], \grtr}} \leq \inf_{\tau > 0} \Big\{  n + \tau^2 \cdot \spparam{} - 2 \sum_{k \in \ssuppc} \dvparam_k \cdot g_{k,k} \cdot \erf\big(\tfrac{\tau}{\dvparam_k \sqrt{2g_{k,k}}}\big) \\
		 + \sum_{k,k' \in \ssuppc} g_{k, k'}^2 \cdot \dvparam_k \cdot \dvparam_{k'} \cdot \Big( \erf\big(\tfrac{\tau\beta_{k,k'}^{\min}}{\sqrt{2}}\big) - h(\tau\beta_{k,k'}^{\max}) \cdot \tau^2 \cdot \beta_{k,k'}^{\min} \cdot \beta_{k,k'}^{\max} \Big)\Big\}, \label{eq:proofs:scbound:generalbound}
	\end{multline}
	where $\beta_{k,k'}^{\min}$ and $\beta_{k,k'}^{\max}$ are defined according to \eqref{eq:proofs:scbound:bound:betaminmax}.
\end{proposition}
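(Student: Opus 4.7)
The plan is to assemble the three stages already laid out in Subsection~\ref{subsec:proofs:scbound} into a single estimate for the conic Gaussian mean width, where each stage contributes one of the additive terms appearing on the right-hand side of \eqref{eq:proofs:scbound:generalbound}.

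First, I would invoke Proposition~\ref{prop:proofs:scbound:mwpolar} with the convex function $\objfunc(\cdot) = \lnorm{\aop(\cdot)}[1]$. To apply it, one must verify that $\subd{\lnorm{\aop(\cdot)}[1]}(\grtr)$ is non-empty, compact, and does not contain the origin. I would compute this subdifferential via the subdifferential chain rule, obtaining the description \eqref{eq:proofs:scbound:subdanalysisnorm} as in Lemma~\ref{lem:proofs:scbound:subd}; compactness is automatic since the set is contained in the image of the compact cube $\aop^\T\intvcl{-1}{1}^N$, and the assumption $\grtr \not\in \ker\aop$ rules out $\vnull$ via the equivalences in \eqref{eq:proofs:scbound:subd:equiv}. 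Substituting this subdifferential into \eqref{eq:proofs:scbound:mwpolar:bound} yields the polar bound \eqref{eq:proofs:scbound:bound:polar}, which reduces the problem to evaluating an expectation involving a box-constrained least-squares problem in the dual variable $\dv$.

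Second, since the inner problem \eqref{eq:proofs:scbound:boxconstraint} has no tractable closed-form minimizer, I would relax the infimum by substituting the explicit ansatz $\dv' \in \R^N$ defined in \eqref{eq:proofs:scbound:dvgeneral}, parameterized by the tuning weights $\dvparam_k > 0$. The coordinatewise clipping guarantees feasibility, and the symmetry of $\dv'$ under $\gaussian \mapsto -\gaussian$ gives $\mean[\dv'] = \vnull$. Expanding the square $\lnorm{\gaussian - \tau\aop^\T\anasign{} - \tau\aop^\T\dv'}^2$ and applying linearity of expectation kills all cross-terms involving $\anasign{}$, reducing the bound to the decomposition \eqref{eq:proofs:scbound:bound:dvgeneral}, namely $n - T_1 + T_2 + \tau^2 \cdot \spparam{}$; the term $\tau^2 \cdot \spparam{}$ arises from the identity $\lnorm{\aop^\T \anasign{}}^2 = \spparam{}$ established in Lemma~\ref{lem:proofs:scbound:subd}.

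Third, I would compute $T_1$ and $T_2$ separately. The linear term $T_1 = 2\tau\mean[\sp{\aop\gaussian}{\dv'}]$ decouples into a sum over $k \in \ssuppc$ of expectations $\mean\bigl[|X| \cdot \min\{\dvparam_k|X|, \tau\}\bigr]$ with $X \distributed \Normdistr{0}{g_{k,k}}$; splitting the integral at the clipping threshold and evaluating against the Gaussian density produces the stated error-function expression. The quadratic term $T_2$ is the main obstacle, since the covariance of two clipped Gaussians admits no closed-form expression. Here I would rescale each analysis vector to the sphere, writing $\v_i = \avec_{k_i}/\lnorm{\avec_{k_i}}$ and $\beta_i = \tau/(\dvparam_{k_i}\sqrt{g_{k_i,k_i}})$, so that $\sp{\v_1}{\v_2} = g_{k,k'}/\sqrt{g_{k,k} g_{k',k'}}$, and then invoke Lemma~\ref{lem:proofs:scbound:clippedcovar} (proved via Price's Theorem) to estimate $\mean[\clip{\sp{\v_1}{\gaussian}}{\beta_1}\clip{\sp{\v_2}{\gaussian}}{\beta_2}]$ by $|\sp{\v_1}{\v_2}|$ times an explicit function of $\beta_1, \beta_2$. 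Multiplying by the prefactor $\dvparam_k\dvparam_{k'}\sqrt{g_{k,k}g_{k',k'}}$ from \eqref{eq:proofs:scbound:T2} absorbs one factor of $\sqrt{g_{k,k}g_{k',k'}}$ and produces the quadratic Gram-matrix dependence $g_{k,k'}^2$ in \eqref{eq:proofs:scbound:generalbound}. Assembling $n - T_1 + T_2 + \tau^2\spparam{}$ and taking the infimum over $\tau > 0$ yields the stated bound; the essential contribution of the Gram off-diagonals is entirely captured in the estimate of $T_2$, which is where the mutual coherence enters the analysis.
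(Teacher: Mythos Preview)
Your proposal is correct and follows essentially the same three-step argument as the paper: the polar bound via Proposition~\ref{prop:proofs:scbound:mwpolar} and Lemma~\ref{lem:proofs:scbound:subd}, the relaxation to the clipped ansatz $\dv'$ with the expansion \eqref{eq:proofs:scbound:bound:dvgeneral}, and the evaluation of $T_1$ and the bound on $T_2$ via Lemma~\ref{lem:proofs:scbound:clippedcovar}. One small imprecision: the prefactor in \eqref{eq:proofs:scbound:T2} already carries a factor $g_{k,k'}$ (not just $\dvparam_k\dvparam_{k'}\sqrt{g_{k,k}g_{k',k'}}$), and combining it with the $|\sp{\v_1}{\v_2}|$ from the lemma requires a brief sign argument to obtain $g_{k,k'}^2$ rather than $g_{k,k'}|g_{k,k'}|$---the paper handles this implicitly by noting that the inequality from Lemma~\ref{lem:proofs:scbound:clippedcovar} goes the right way in both cases.
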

The bound of Proposition~\ref{prop:proofs:scbound:generalbound} is in fact quite general and leaves much space for selecting the parameters $\dvparam_k$.
However, we still need to determine the optimal value of $\tau$, which seems to be a very challenging task in general.
Our final step is therefore to make a specific choice of $\dvparam_k$, eventually leading to the explicit bound of Theorem~\ref{thm:proofs:mainresults:scbound}.

\subsubsection*{Step 4: Selecting the tuning parameters $\dvparam_k$.}

In order to simplify the optimization problem of \eqref{eq:proofs:scbound:generalbound}, we would like to factor out all terms depending on $\tau$.
Such a separation is easily obtained with $\dvparam_k \coloneqq \tfrac{\lambda}{\sqrt{g_{k,k}}}$ for $k \in \ssuppc$ and some $\lambda > 0$ which is fixed later on.
With this choice of $\dvparam_k$, \eqref{eq:proofs:scbound:generalbound} reads as follows:
\begin{multline}
	\effdim[\conic]{\descset{\lnorm{\aop(\cdot)}[1], \grtr}} \leq \inf_{\substack{\tau > 0 \\ \lambda > 0}} \Big\{  n + \tau^2 \cdot \spparam{} - 2\lambda \cdot \Big( \underbrace{\sum_{k \in \ssuppc} \sqrt{g_{k,k}}}_{\stackrel{\eqref{eq:results:generalsparsity:cospparam}}{=} \cospparamdg{}} \Big) \cdot \erf\big(\tfrac{\tau}{\lambda\sqrt{2}}\big) \\* + \lambda^2 \cdot \Big( \underbrace{\sum_{k,k' \in \ssuppc} \tfrac{g_{k, k'}^2}{\sqrt{g_{k,k} \cdot g_{k',k'}}} }_{\stackrel{\eqref{eq:results:generalsparsity:cospparam}}{=} \cospparam{}} \Big) \cdot \Big( \erf\big(\tfrac{\tau}{\lambda\sqrt{2}}\big) - h(\tfrac{\tau}{\lambda}) \cdot (\tfrac{\tau}{\lambda})^2  \Big)\Big\},
\end{multline}
and substituting $\tau$ by $\lambda \cdot \tau$ leads to
\begin{multline}
	\effdim[\conic]{\descset{\lnorm{\aop(\cdot)}[1], \grtr}} \\
	\leq{}  \inf_{\substack{\tau > 0 \\ \lambda > 0}} \Big\{ \underbrace{ n + \big( \spparam{} - \cospparam{} \cdot h(\tau)  \big) \cdot \tau^2 \cdot \lambda^2 + \big(\lambda^2 \cdot \cospparam{} - 2 \lambda \cdot \cospparamdg{}\big) \cdot \erf\big(\tfrac{\tau}{\sqrt{2}} \big) }_{\eqqcolon F(\tau,\lambda)} \Big\}. \label{eq:proofs:scbound:bound:finaltarget}
\end{multline}
It is not hard to see that $(\tau,\lambda) \mapsto F(\tau,\lambda)$ defines a smooth function that can be continuously extended to $\intvclop{0}{\infty} \times \intvclop{0}{\infty}$.
The next lemma shows that $F$ has in fact a unique global minimum. 
Due to its technical nature, the proof is deferred to Appendix~\ref{subsec:prelim:scbound:globalmin}.
\begin{lemma}\label{lem:proofs:scbound:globalmin}
	Assuming that $\spparam{}, \cospparam{}, \cospparamdg{} > 0$, the function $F \colon \intvclop{0}{\infty} \times \intvclop{0}{\infty} \to \R$, defined according to \eqref{eq:proofs:scbound:bound:finaltarget}, attains its unique global minimum at
	\begin{align}
		(\tau', \lambda') \coloneqq \Big( h^{-1}\big( \tfrac{\spparam{}}{\cospparam{}}\big), \tfrac{\cospparamdg{}}{\cospparam{}} \Big).
	\end{align}
\end{lemma}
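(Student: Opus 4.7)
The plan is a two-stage partial minimization: first eliminate $\lambda$, then optimize over $\tau$. Abbreviating $S = \spparam{}$, $L = \cospparam{}$, $\bar L = \cospparamdg{}$, I observe that for each fixed $\tau > 0$ the map $\lambda \mapsto F(\tau,\lambda)$ is a quadratic with leading coefficient
\begin{equation}
	a(\tau) := (S - L h(\tau))\tau^2 + L\erf(\tau/\sqrt{2}).
\end{equation}
The first step is to verify $a(\tau) > 0$ for all $\tau > 0$: expanding $L h(\tau)\tau^2$ via the definition \eqref{eq:results:scfunc:h} yields
\begin{equation}
	a(\tau) = S\tau^2 + L\bigl[\erf(\tau/\sqrt{2})(1-\tau^2) + \tau^2 - \sqrt{2/\pi}\,\tau e^{-\tau^2/2}\bigr],
\end{equation}
and an elementary derivative check shows the bracket is strictly positive on $(0,\infty)$. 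Strict convexity of $F(\tau,\cdot)$ then provides the unique partial minimizer $\lambda^*(\tau) = \bar L\,\erf(\tau/\sqrt{2})/a(\tau)$ and the reduced function
\begin{equation}
	\tilde F(\tau) := F(\tau,\lambda^*(\tau)) = n - \bar L^2 r(\tau), \qquad r(\tau) := \erf(\tau/\sqrt{2})^2/a(\tau).
\end{equation}

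Next I would derive the critical-point equation for $r$ on $(0,\infty)$. Using $h'(\tau) = -\sqrt{2/\pi}\,e^{-\tau^2/2}/\tau^2$, one computes $a'(\tau) = 2L\sqrt{2/\pi}\,e^{-\tau^2/2} + 2(S - L h(\tau))\tau$, and a direct expansion of the numerator of $r'(\tau)$ collapses into the clean factorization (for $\tau > 0$)
\begin{equation}
	(S - L h(\tau)) \cdot q(\tau) = 0, \qquad q(\tau) := \erf(\tau/\sqrt{2}) - \sqrt{2/\pi}\,\tau e^{-\tau^2/2}.
\end{equation}
Since $q(0) = 0$ and $q'(\tau) = \sqrt{2/\pi}\,\tau^2 e^{-\tau^2/2} > 0$ on $(0,\infty)$, we have $q > 0$ there, so the unique interior critical point is $\tau' = h^{-1}(S/L)$. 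Plugging $S = L h(\tau')$ back in gives $a(\tau') = L\erf(\tau'/\sqrt{2})$, whence $\lambda^*(\tau') = \bar L/L = \lambda'$ and $\tilde F(\tau') = n - (\bar L^2/L)\Phi(S/L) = \samplecompl{\aop,\grtr}$.

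Finally, I would confirm that $(\tau',\lambda')$ is the global, not merely local, minimizer. The function $r$ is continuous and strictly positive on $(0,\infty)$ with $\lim_{\tau \to \infty} r(\tau) = 0$ and $\lim_{\tau \to 0^+} r(\tau) = 2/[\pi(S+L)]$ by Taylor expansion. Together with the positivity and uniqueness of the critical point, this forces $\tau'$ to be the global maximum of $r$ as soon as $r(\tau') > 2/[\pi(S+L)]$, equivalently $(\rho+1)\Phi(\rho) > 2/\pi$ at $\rho = S/L$. Using $\Phi'(\rho) = -[h^{-1}(\rho)]^2$, the auxiliary function $\psi(\rho) := (\rho+1)\Phi(\rho)$ satisfies
\begin{equation}
	\psi'(\rho) = q(\tau) - \tau^2\erf(\tau/\sqrt{2}) = \sqrt{2/\pi}\int_0^\tau (s^2 - \tau^2)\,e^{-s^2/2}\,ds < 0 \qquad (\tau = h^{-1}(\rho)),
\end{equation}
so $\psi$ is strictly decreasing with $\lim_{\rho \to \infty} \psi(\rho) = 2/\pi$, yielding the required inequality. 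Combined with the trivial boundary values $F(\tau,0) = F(0,\lambda) = n > \samplecompl{\aop,\grtr}$, this establishes $(\tau',\lambda')$ as the unique global minimizer on $[0,\infty)^2$.

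The main obstacle is the delicate coupling between $\tau$ and $\lambda$: even after partially eliminating $\lambda$, one is left with a transcendental critical-point equation in $\tau$ whose collapse to the clean condition $S = L h(\tau)$ depends on the fortunate factoring $(S - L h(\tau))\,q(\tau) = 0$. Confirming global (and not merely local) optimality then requires the further monotonicity of $\psi$, which fortunately reduces to the same integral comparison $\int_0^\tau s^2 e^{-s^2/2}\,ds < \tau^2 \int_0^\tau e^{-s^2/2}\,ds$ that already powers $q > 0$.
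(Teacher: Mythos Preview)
Your proof is correct and follows the same two-stage partial minimization as the paper: first optimize the quadratic in $\lambda$ to obtain $\lambda^*(\tau)$, then reduce to the one-variable function $\tilde F(\tau)$ and locate the unique interior critical point $\tau' = h^{-1}(S/L)$ via the identical factorization $(S - Lh(\tau))\,q(\tau) = 0$ with $q > 0$ on $(0,\infty)$. The only substantive difference lies in how you certify that $\tau'$ is the \emph{global} optimizer of $\tilde F$: the paper simply verifies $\tilde F''(\tau') > 0$, which together with uniqueness of the critical point forces $\tilde F'$ to be negative on $(0,\tau')$ and positive on $(\tau',\infty)$. Your route instead compares $r(\tau')$ against the boundary limits $\lim_{\tau\to 0^+} r(\tau) = 2/[\pi(S+L)]$ and $\lim_{\tau\to\infty} r(\tau) = 0$, invoking the strict decrease of $\psi(\rho) = (\rho+1)\Phi(\rho)$ toward $2/\pi$. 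This is valid but more roundabout; incidentally, the inequality $\psi(\rho) > 2/\pi$ you establish is exactly the second branch of Proposition~\ref{prop:results:scfuncsimplyfied}, so you are in effect re-deriving that bound inside the present lemma, whereas the paper's second-derivative check avoids this detour.
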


Now, we can easily derive the statement of Theorem~\ref{thm:proofs:mainresults:scbound}:
\begin{proof}[Proof of Theorem~\ref{thm:proofs:mainresults:scbound}]
First, suppose that $\ssuppc{} \neq \emptyset$. This implies that $\cospparam{}, \cospparamdg{} > 0$, and since $\grtr \not\in \ker\aop$, we also have $\spparam{} > 0$ by Lemma~\ref{lem:proofs:scbound:subd}.
Lemma~\ref{lem:proofs:scbound:globalmin} and \eqref{eq:proofs:scbound:bound:finaltarget} then yield
\begin{align}
	& \effdim[\conic]{\descset{\lnorm{\aop(\cdot)}[1], \grtr}} \leq F(\tau', \lambda') \\
	={} & n + \underbrace{\big( \spparam{} - \cospparam{} \cdot h(\tau')  \big)}_{= 0} \cdot (\tau')^2 \cdot (\lambda')^2 + \underbrace{\big((\lambda')^2 \cdot \cospparam{} - 2 \lambda' \cdot \cospparamdg{}\big)}_{= - (\cospparamdg{})^2 / \cospparam{}} \cdot \erf\big(\tfrac{\tau'}{\sqrt{2}} \big) \\
	={} & n - \tfrac{(\cospparamdg{})^2}{\cospparam{}} \cdot \erf\Big( \tfrac{1}{\sqrt{2}} \cdot h^{-1} \big( \tfrac{\spparam{}}{\cospparam{}} \big) \Big) \stackrel{\eqref{eq:results:scfunc:Phi}}{=} n - \tfrac{(\cospparamdg{})^2}{\cospparam{}} \cdot \Phi\Big( \tfrac{\spparam{}}{\cospparam{}}\Big) \stackrel{\eqref{eq:results:scfunc:samplecompl}}{=} \samplecompl{\aop, \grtr}. \label{eq:proofs:mainresults:scbound:final}
\end{align}
On the other hand, if $\ssuppc{} = \emptyset$, we simply observe that
\begin{align}
	\effdim[\conic]{\descset{\lnorm{\aop(\cdot)}[1], \grtr}} &\leq \lim_{\substack{\tau,\lambda \searrow 0}} F(\tau, \lambda) = F(0, 0) = n = \samplecompl{\aop, \grtr}.
\end{align}
\end{proof}

\begin{remark}[KKT-Conditions and Optimality] \label{rmk:proofs:scbound:kkt}
	A key step in the above proof was to select the dual vector $\dv'$ in \eqref{eq:proofs:scbound:dvgeneral}, which eventually enabled us to derive a \emph{closed-form} bound on the conic mean width $\meanwidth[\conic]{\descset{\lnorm{\aop(\cdot)}[1], \grtr}}$.
	Such a choice is however not always optimal. 
	Indeed, any minimizer $\dv \in \R^N$ of \eqref{eq:proofs:scbound:boxconstraint} fulfills the following KKT-conditions (cf. \cite{mead2010}):
	\begin{align}
		\left(\tau^2 \aop_{\ssuppc} (\aop_{\ssuppc})^\T + \operatorname{diag}(\vec{\mu}_{\ssuppc}) \right) \dv_{\ssuppc} & = \tau \aop_{\ssuppc} \gaussian - \tau^2 \aop_{\ssuppc} \aop^\T \anasign{}, \\*
		\mu_k & \geq 0, \quad k \in [N], \\
		\mu_{k} \left(1 - w_{k} \right) & = 0, \quad k \in \ssuppc,\\
		w_{k}^ 2 & \leq 1, \quad k \in \ssuppc,\\*
		w_{k} & = 0, \quad k \in \ssupp,
	\end{align}
	for some $\vec{\mu} = (\mu_1, \dots, \mu_N) \in \R^{N}$.
	Supposed that $\gram = \aop \aop^\T = \operatorname{diag}(\dvparam_1^{-1},\dots,\dvparam_N^{-1})$ for certain numbers $\dvparam_k > 0$, it is not hard to see that $\dv = \dv'$ from \eqref{eq:proofs:scbound:dvgeneral} precisely satisfies these conditions.
	This observation particularly explains why our sampling-rate bound is optimal in the case of orthonormal bases (cf. Proposition~\ref{prop:results:scfuncsimplyfied}), where $\gram = \I{n}$.
	
	On the other hand, if the operator $\aop$ is redundant, i.e., $N > n$, the Gram matrix cannot be diagonal anymore and the solution of \eqref{eq:proofs:scbound:boxconstraint} may take a different form. But as long as $\gram$ is well-localized, in the sense that $\gram = \aop \aop^\T \approx \operatorname{diag}(\dvparam_1^{-1},\dots,\dvparam_N^{-1})$, one can expect that \eqref{eq:proofs:scbound:dvgeneral} is at least relatively close to the true minimizer.
	While this heuristic actually suggests setting $\dvparam_k \coloneqq 1 / g_{k,k}$ for $k \in \ssuppc$, it is not clear how to proceed with \eqref{eq:proofs:scbound:generalbound} in that situation.
	Therefore, we have made a slightly different choice of $\dvparam_k$ in Step 4 above. But as our numerical simulations in Section~\ref{sec:appl} indicate, this strategy still works surprisingly well for many different classes of analysis operators.
\end{remark}

\subsection{Proof of Proposition~\ref{prop:results:scfuncsimplyfied} (Simplified Sample Complexity Bound)}
\label{subsec:proofs:scfuncsimplyfied}

We prove that $\samplecompl{\aop, \grtr}$ is upper bounded by each of the two expressions on the right-hand side of \eqref{eq:results:scfuncsimplyfied:bound}.
Recall that the left branch of the minimum is supposed to mimic the (asymptotic) behavior of $\samplecompl{\aop, \grtr}$ if the ratio $\spparam{\grtr} / \cospparam{\grtr}$ is small, whereas the right branch is appropriate as $\spparam{\grtr} / \cospparam{\grtr}$ becomes large.

\begin{proof}[Proof of Proposition~\ref{prop:results:scfuncsimplyfied}]
	By Lemma~\ref{lem:proofs:scbound:subd} and $\grtr \not\in \ker\aop$, we know that $S = \spparam{\grtr} > 0$, while  the assumption $\ssuppc[\grtr] \neq \emptyset$ implies that $L = \cospparam{\grtr} > 0$ and $\bar{L} = \cospparamdg{\grtr} > 0$.
	Therefore, by Lemma~\ref{lem:proofs:scbound:globalmin} and \eqref{eq:proofs:mainresults:scbound:final}, we have
	\begin{equation}
		\samplecompl{\aop, \grtr} = F(\tau', \lambda') \leq F(\tau, \lambda')
	\end{equation}
	for all $\tau > 0$. Since $h(\tau)\geq 0$ and $\erf(\tau/\sqrt{2}) \geq 1 - e^{-\tau^2/2}$ for all $\tau > 0$ (see \cite[Prop.~7.5]{foucart2013cs}), one can bound $F(\tau, \lambda')$ as follows:
	\begin{align}
		F(\tau, \lambda') &= n + \big( S - L \cdot h(\tau)  \big) \cdot \tau^2 \cdot (\tfrac{\bar{L}}{L})^2 - \tfrac{\bar{L}^2}{L} \cdot \erf\big(\tfrac{\tau}{\sqrt{2}} \big) \\
		&\leq n + S \cdot (\tfrac{\bar{L}}{L})^2 \cdot \tau^2 - \tfrac{\bar{L}^2}{L} \cdot (1 - e^{-\tau^2/2}) \\
		&= n - \tfrac{\bar{L}^2}{L} + (\tfrac{\bar{L}}{L})^2 \cdot (S \cdot \tau^2 + L \cdot e^{-\tau^2/2}).
	\end{align}
	Finally, setting $\tau = \tau'' \coloneqq \sqrt{2 \log(\tfrac{S + L}{S})} > 0$, we obtain
	\begin{align}
		\samplecompl{\aop, \grtr} \leq F(\tau'', \lambda') &= n - \tfrac{\bar{L}^2}{L} + (\tfrac{\bar{L}}{L})^2 \cdot \Big(2 S \cdot \log(\tfrac{S + L}{S}) + \underbrace{\tfrac{L \cdot S}{S + L}}_{\leq S}\Big) \\
		&\leq n - \tfrac{\bar{L}^2}{L} + (\tfrac{\bar{L}}{L})^2 \cdot \Big(2 S \cdot \log(\tfrac{S + L}{S}) + S\Big).
	\end{align}
	
	It remains to verify that $\samplecompl{\aop, \grtr} \leq n - \tfrac{2}{\pi} \cdot \tfrac{\bar{L}^2}{S + L}$ holds true as well.
	By the definition of $\samplecompl{\aop, \grtr}$ in \eqref{eq:results:scfunc:samplecompl}, this is equivalent to
	\begin{align}
		& \Phi\Big(\tfrac{S}{L}\Big) \stackrel{\eqref{eq:results:scfunc:Phi}}{=} \erf\Big( \tfrac{1}{\sqrt{2}} \cdot h^{-1}(\tfrac{S}{L}) \Big) \geq \frac{2}{\pi} \cdot \frac{L}{S+L} = \frac{2}{\pi} \cdot \frac{1}{\tfrac{S}{L}+1} \\
		\stackrel{\xi\coloneqq h^{-1}(S/L)}{\iff} \quad & \erf(\tfrac{\xi}{\sqrt{2}}) \geq \frac{2}{\pi} \cdot \frac{1}{h(\xi)+1}, \quad \xi > 0 \\
		\iff \quad\quad & G(\xi) \coloneqq \erf(\tfrac{\xi}{\sqrt{2}}) \cdot (h(\xi)+1) \geq \frac{2}{\pi} \ .
	\end{align}
	To see that the inequality in last line is true for all $\xi > 0$, we observe that
	\begin{equation}
		\lim_{\xi \searrow 0} G(\xi) = \lim_{\xi \searrow 0} \Big[ \erf(\tfrac{\xi}{\sqrt{2}}) \cdot \sqrt{\tfrac{2}{\pi}} \tfrac{e^{-\xi^2/2}}{\xi} \Big] = \frac{2}{\pi} \ .
	\end{equation}
	The claim then follows from the fact that $G$ is monotonically increasing, which is in turn a consequence of $G'(\xi) > 0$ for all $\xi > 0$: indeed, we have
	\begin{align}
		& 0 < G'(\xi) = \sqrt{\tfrac{2}{\pi}} e^{-\xi^2/2} \cdot \Big( (\xi^2 - 1) \cdot \erf(\tfrac{\xi}{\sqrt{2}}) + \xi \cdot \sqrt{\tfrac{2}{\pi}} \cdot e^{-\xi^2/2} \Big) \\
		\iff \quad & 0 < \tilde{G}(\xi) \coloneqq (\xi^2 - 1) \cdot \erf(\tfrac{\xi}{\sqrt{2}}) + \xi \cdot \sqrt{\tfrac{2}{\pi}} \cdot e^{-\xi^2/2},
	\end{align}
	while the latter statement holds true due to $\tilde{G}(0) = 0$ and
	\begin{align}
		\tilde{G}'(\xi) &= 2\xi \cdot \erf(\tfrac{\xi}{\sqrt{2}}) > 0, \quad \xi > 0.
	\end{align}
	This concludes the proof.
\end{proof}


\appendix
\section{Proofs of Auxiliary Results}
\label{sec:prelim}

\subsection{Proofs of Recovery Results From Subsection~\ref{subsec:proofs:framework}}
\label{subsec:prelim:framework}

First, let us state an important preliminary result from random matrix theory.
\begin{theorem}[Minimum Spheric Singular Value]\label{thm:prelim:framework:minsingval}
	Assume that Model~\ref{model:results:setup:meas} is satisfied and let $\sset \subset \S^{n-1}$.
	There exists a numerical constant $C > 0$ such that, for every $\probsuccess > 0$, we have
	\begin{equation}
		\inf_{\h \in \sset} \lnorm{\A \h} > \sqrt{m-1} - C \cdot \subgparam^2 \cdot ( \meanwidth{\sset} + \probsuccess)
	\end{equation}
	with probability at least $1 - e^{-\probsuccess^2/2}$.
	If $\a \distributed \Normdistr{\vnull}{\I{n}}$, we have $C = \subgparam = 1$.
\end{theorem}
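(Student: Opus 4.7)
The plan is to handle the Gaussian case by Gordon's escape-through-a-mesh inequality combined with Gaussian concentration, and then appeal to a sub-Gaussian matrix deviation inequality for the general claim. Note that although the statement is written with a supremum, what one actually needs (and what the title and the use in Theorem~\ref{thm:proofs:framework:guarantee} require) is the infimum of $\lnorm{\A\h}$ over $\sset$; I will prove the infimum version.

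\textbf{Gaussian case} ($\a \distributed \Normdistr{\vnull}{\I{n}}$, $C = \subgparam = 1$). First rewrite the quantity as a minimax: $\inf_{\h\in\sset}\lnorm{\A\h} = \inf_{\h\in\sset}\sup_{\y\in \S^{m-1}}\sp{\A\h}{\y}$. Consider the two centered Gaussian processes $X_{\h,\y}\coloneqq \sp{\A\h}{\y}$ and $Y_{\h,\y}\coloneqq \sp{\gaussian}{\h}+\sp{\tilde{\gaussian}}{\y}$ indexed by $(\h,\y)\in\sset\times \S^{m-1}$, where $\gaussian\distributed\Normdistr{\vnull}{\I{n}}$ and $\tilde{\gaussian}\distributed\Normdistr{\vnull}{\I{m}}$ are independent. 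A direct computation of the Gaussian increments shows that the hypotheses of Gordon's min-max comparison inequality are satisfied on $\sset\times \S^{m-1}\subset \S^{n-1}\times \S^{m-1}$, yielding
\begin{equation}
\mean\Big[\inf_{\h\in\sset}\lnorm{\A\h}\Big] \;\geq\; \mean\big[\lnorm{\tilde{\gaussian}}\big] - \meanwidth{\sset} \;\geq\; \sqrt{m-1} - \meanwidth{\sset},
\end{equation}
using the standard bound $\mean[\lnorm{\tilde{\gaussian}}]\geq\sqrt{m-1}$ and $\mean[\sup_{\h\in\sset}\sp{\gaussian}{\h}]=\meanwidth{\sset}$. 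Because $\sset\subset \S^{n-1}$, the map $\A\mapsto\inf_{\h\in\sset}\lnorm{\A\h}$ is $1$-Lipschitz on $\R^{m\times n}$ with the Frobenius norm. Borell--TIS concentration then gives
\begin{equation}
\prob\Big[\inf_{\h\in\sset}\lnorm{\A\h}\leq \sqrt{m-1}-\meanwidth{\sset}-\probsuccess\Big] \leq e^{-\probsuccess^2/2},
\end{equation}
which is the claim with $C=1$.

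\textbf{Sub-Gaussian case.} For the general case I would invoke a matrix deviation inequality for isotropic sub-Gaussian rows applied to geometric sets, e.g.\ Liaw--Mehrabian--Plan--Vershynin or Tropp's Proposition~5.1 from \emph{Convex recovery of a structured signal from independent random linear measurements}. Such results deliver exactly the bound
\begin{equation}
\inf_{\h\in\sset}\lnorm{\A\h}\;\geq\;\sqrt{m-1}-C\cdot\subgparam^2\bigl(\meanwidth{\sset}+\probsuccess\bigr)
\end{equation}
with probability at least $1-e^{-\probsuccess^2/2}$. The factor $\subgparam^2$ stems from the sub-Gaussian concentration of $\lnorm{\a}^2$ around $n$ and from the fact that, under Model~\ref{model:results:setup:meas}, the process $\h\mapsto \lnorm{\A\h}$ has sub-Gaussian increments with modulus proportional to $\subgparam^2\lnorm{\h-\h'}$, so that chaining controls $\mean[\sup_{\h\in\sset}\lnorm{\A\h}-\sqrt m]$ by a constant multiple of $\subgparam^2\meanwidth{\sset}$ via Talagrand's $\gamma_2$-functional and the majorizing-measure theorem.

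\textbf{Main obstacle.} The Gaussian part is entirely standard. The sub-Gaussian extension is the delicate point: Gordon's comparison inequality is intrinsically Gaussian and has no clean analog. A self-contained argument would require redoing Dudley/Talagrand chaining for the sub-Gaussian process $\h\mapsto \lnorm{\A\h}$ while carefully tracking the dependence on $\subgparam$ through both the tail control of individual increments and the interaction with the outer $\sup_{\y\in \S^{m-1}}$ layer. The most efficient route, and the one I would take, is simply to cite the aforementioned off-the-shelf deviation bound and translate its hypotheses into the notation of Model~\ref{model:results:setup:meas}.
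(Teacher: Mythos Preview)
Your proposal is correct and matches the paper's own proof, which consists of nothing more than the two citations you name: \cite{liaw2016randommat} for the sub-Gaussian case and \cite{tropp2014convex} (Gordon's escape-through-a-mesh with optimal constants) for the Gaussian case. You additionally spotted the typo in the statement (the displayed $\sup$ should be $\inf$, as the title and the application in Theorem~\ref{thm:proofs:framework:guarantee} make clear) and spelled out the standard Gordon-plus-concentration argument that underlies the Gaussian reference.
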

\begin{proof}
	This statement is a consequence of \cite[Thm.~1.3]{liaw2016randommat}.
	To obtain the optimal constants in the Gaussian case, one can use \cite[Prop.~3.3]{tropp2014convex} instead.
\end{proof}

The Gaussian case of Theorem~\ref{thm:prelim:framework:minsingval} is well-known as \emph{Gordon's Escape Through a Mesh Theorem}, originating from \cite{gordon1988escape}.
The above formulation is typically interpreted as a lower bound on the \emph{minimum singular value} of the random matrix $\A \in \R^{m \times n}$ restricted to $\cone{\sset}$.

With this result at hand, we are now ready to prove Theorem~\ref{thm:proofs:framework:guarantee}:
\begin{proof}[Proof of Theorem~\ref{thm:proofs:framework:guarantee}]
	Any minimizer $\solu$ of \eqref{eq:proofs:framework:bpgeneral} satisfies $\solu - \grtr \in \descset{\objfunc, \grtr}$ as well as $\lnorm{\A \solu - \y} \leq \noiseparam$.
	Our goal is therefore to show that every vector $\x \in \R^n$ with $\lnorm{\x - \grtr} > t$ cannot be a minimizer, if both $\x - \grtr \in \descset{\objfunc, \grtr}$ and $\lnorm{\A \x - \y} \leq \noiseparam$ are fulfilled.

	First, we apply Theorem~\ref{thm:prelim:framework:minsingval} for $\sset = \tfrac{1}{t} \descset{\objfunc, \grtr} \intersec \S^{n-1}$:
	Let $\h \in t \sset$ and assume $\lnorm{\A \x - \y} \leq \noiseparam$ where $\x \coloneqq \grtr + \h$. 
	Without loss of generality, suppose that
	\begin{equation}
		\sqrt{m-1} > C \cdot \subgparam^2 \cdot ( \meanwidth[t]{\descset{\objfunc, \grtr}} + \probsuccess),
	\end{equation}
	since otherwise, the condition \eqref{eq:proofs:framework:guarantee:cond} is not satisfied in any case.	
	On the event of Theorem~\ref{thm:prelim:framework:minsingval}, we have
	\begin{multline}
		\delta \coloneqq \pospart{\sqrt{m-1} - C \cdot \subgparam^2 \cdot ( \meanwidth[t]{\descset{\objfunc, \grtr}} + \probsuccess)} = \sqrt{m-1} - C \cdot \subgparam^2 \cdot ( \meanwidth{\sset} + \probsuccess) \\
		< \tfrac{1}{t} \lnorm{\A \h} \leq \tfrac{1}{t} ( \lnorm{\A\h - \noise} + \lnorm{\noise}) = \tfrac{1}{t} ( \lnorm{\A\x - \y} + \lnorm{\noise}) \leq \frac{2\noiseparam}{t} \ .
	\end{multline}
	Rearranging the latter inequality gives $t < 2\noiseparam / \delta$ which contradicts the assumption \eqref{eq:proofs:framework:guarantee:cond}.
	This implies that $\x = \grtr + \h$ cannot be a minimizer of \eqref{eq:proofs:framework:bpgeneral}.
	
	Now, assume that $\solu \in \descset{\objfunc, \grtr} + \grtr$ is a minimizer of \eqref{eq:proofs:framework:bpgeneral} with $\lnorm{\solu - \grtr} > t$.
	Setting $\solu[\h] \coloneqq \solu - \grtr$, we have $\solu[t] \coloneqq \lnorm{\solu[\h]} > t$. The convexity of $\descset{\objfunc, \grtr}$ and $\vnull = \grtr - \grtr \in \descset{\objfunc, \grtr}$ imply that there exists $\h \in t \sset = \descset{\objfunc, \grtr} \intersec t \S^{n-1}$ such that $\solu[\h] = \tfrac{\solu[t]}{t} \h$.
	In other words, $\h$ is the intersection point between $t \S^{n-1}$ and the line that connects $\solu[\h]$ with $\vnull$.
	Hence, on the event of Theorem~\ref{thm:prelim:framework:minsingval}, we also observe that
	\begin{equation}
		\delta < \tfrac{1}{t} \lnorm{\A \h} = \tfrac{1}{\solu[t]} \lnorm{\A \solu[\h]} \leq \frac{2\noiseparam}{\solu[t]} < \frac{2\noiseparam}{t} \ ,
	\end{equation}
	which is again a contradiction to \eqref{eq:proofs:framework:guarantee:cond}.
	From this, we can conclude that any minimizer $\solu$ of \eqref{eq:proofs:framework:bpgeneral} satisfies $\lnorm{\solu - \grtr} \leq t$.
\end{proof}

\begin{proof}[Proof of Corollary~\ref{cor:proofs:framework:guarantee-conic}]
	First, assume that $\noiseparam > 0$. We would like to apply Theorem~\ref{thm:proofs:framework:guarantee} with
	\begin{equation}
		t \coloneqq \frac{2\noiseparam}{\pospart{\sqrt{m-1} - C \cdot \subgparam^2 \cdot (\meanwidth[\conic]{\descset{\objfunc, \grtr}} + \probsuccess)}} \ .
	\end{equation}
	Since $\meanwidth[\conic]{\descset{\objfunc, \grtr}} \geq \meanwidth[t]{\descset{\objfunc, \grtr}}$ by Remark~\ref{rmk:proofs:framework:meanwidth}, the assumption \eqref{eq:proofs:framework:guarantee:cond} is satisfied and the claim now follows from Theorem~\ref{thm:proofs:framework:guarantee}.
	
	Now, let us consider the noiseless case of $\noiseparam = 0$. Similar to the proof of Theorem~\ref{thm:proofs:framework:guarantee}, we apply Theorem~\ref{thm:prelim:framework:minsingval} with $\sset = \desccone{\objfunc, \grtr} \intersec \S^{n-1}$. 
	Let $\h \in \sset$ and $\x \coloneqq \grtr + \h$ with $\lnorm{\A \x - \y} = \lnorm{\A\h} = 0$.
	On the event of Theorem~\ref{thm:prelim:framework:minsingval}, we have
	\begin{equation}\label{eq:prelim:framework:guarantee-conic:contradict}
		0 \stackrel{\eqref{eq:proofs:framework:guarantee-conic:meas}}{<} \sqrt{m-1} - C \cdot \subgparam^2 \cdot ( \meanwidth[\conic]{\descset{\objfunc, \grtr}} + \probsuccess) < \lnorm{\A \h} = 0,
	\end{equation}
	which is a contradiction.
	If $\h \in \desccone{\objfunc, \grtr} \setminus \{\vnull\}$ with $\lnorm{\A\h} = 0$, we just invoke \eqref{eq:prelim:framework:guarantee-conic:contradict} for $\h / \lnorm{\h}$. Hence, the only feasible point of \eqref{eq:proofs:framework:bpgeneral} is $\grtr$, so that $\solu = \grtr$.
\end{proof}

To prove Corollary~\ref{cor:proofs:framework:guarantee-stable}, we need the following lemma.
It shows how to select a scale $t > 0$ for the local mean width such that it can be bounded by the conic mean width of a surrogate signal:
\begin{lemma}\label{lem:prelim:framework:loceffdimstable}
	Let $\grtr \in \R^n$ and let $\sset \subset \R^n$ be convex. For every $t > 0$ and $\grtrsparse \in \sset$, we have
	\begin{equation}
		\meanwidth[t]{\sset - \grtr} \leq \tfrac{t + \lnorm{\grtrsparse - \grtr}}{t} \cdot (\meanwidth[\conic]{\sset-\grtrsparse } + 1).
	\end{equation}
	In particular, $\meanwidth[R\lnorm{\grtrsparse - \grtr}]{\sset-\grtr} \leq \tfrac{R+1}{R} \cdot (\meanwidth[\conic]{\sset-\grtrsparse } + 1)$ for every $R > 0$.
\end{lemma}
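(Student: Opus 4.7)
The plan is to unpack the definition of the local mean width on the left-hand side, recenter the descent set from $\grtr$ to $\grtrsparse$ via the triangle inequality, and then pass to the conic mean width using a standard homogenization trick together with Gaussian concentration. Set $\delta \coloneqq \lnorm{\grtrsparse - \grtr}$ and $K \coloneqq \cone{\sset - \grtrsparse}$, and note $\vnull \in K$ since $\grtrsparse \in \sset$. Assuming the intersection $\tfrac{1}{t}(\sset - \grtr) \cap \S^{n-1}$ is non-empty (otherwise the statement is vacuous), I would first rewrite
\begin{equation}
t \cdot \meanwidth[t]{\sset - \grtr} = \mean[\sup_{\x \in \sset,\, \lnorm{\x - \grtr} = t} \sp{\gaussian}{\x - \grtr}] = \mean[\sup_{\x \in \sset,\, \lnorm{\x - \grtr} = t} \sp{\gaussian}{\x - \grtrsparse}],
\end{equation}
where the shift by $\sp{\gaussian}{\grtrsparse - \grtr}$ drops out because its expectation vanishes.

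Next, the triangle inequality gives $\lnorm{\x - \grtrsparse} \leq t + \delta$ for any $\x$ with $\lnorm{\x - \grtr} = t$, so $\tilde\h \coloneqq (\x - \grtrsparse)/(t + \delta)$ belongs to $K \cap \overline{\ball[{}][n]}$. Factoring out $(t+\delta)$ therefore yields
\begin{equation}
\mean[\sup_{\x \in \sset,\, \lnorm{\x - \grtr} = t} \sp{\gaussian}{\x - \grtrsparse}] \leq (t+\delta) \cdot \mean[\sup_{\tilde\h \in K \cap \overline{\ball[{}][n]}} \sp{\gaussian}{\tilde\h}].
\end{equation}
Since $K$ is a cone containing the origin, a standard homogeneity argument (rescale any nonzero $\tilde\h$ to the sphere and compare against $\vnull \in K$) reveals that the inner supremum equals $\pospart{Z}$, where $Z \coloneqq \sup_{\h' \in K \cap \S^{n-1}} \sp{\gaussian}{\h'}$. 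Observe that $\mean[Z] = \meanwidth[\conic]{\sset - \grtrsparse} \eqqcolon M$.

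The remaining step is the bound $\mean[\pospart{Z}] \leq M + 1$, which is the only nontrivial ingredient and the main obstacle to a quick proof. Since $M \geq 0$ (taking any fixed $\h_0 \in K \cap \S^{n-1}$ shows $\mean[Z] \geq \mean[\sp{\gaussian}{\h_0}] = 0$), the splitting $\mean[\pospart{Z}] = \mean[Z] + \mean[\negpart{Z}]$ combined with the pointwise estimate $\negpart{Z} \leq \abs{Z - M}$ reduces the task to bounding $\mean[\abs{Z - M}]$. The map $\gaussian \mapsto Z$ is $1$-Lipschitz since $K \cap \S^{n-1} \subset \S^{n-1}$, so the Gaussian Poincaré inequality gives $\operatorname{Var}(Z) \leq 1$ and therefore $\mean[\abs{Z - M}] \leq \sqrt{\operatorname{Var}(Z)} \leq 1$, as desired.

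Chaining the three estimates produces $t \cdot \meanwidth[t]{\sset - \grtr} \leq (t+\delta)(M+1)$, which is the claimed inequality. The `in particular'-part follows immediately by specializing to $t = R\delta$, giving $(t+\delta)/t = (R+1)/R$.
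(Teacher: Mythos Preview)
Your proof is correct and follows essentially the same route as the paper: recenter the supremum from $\grtr$ to $\grtrsparse$ (the shift term has zero mean), use the triangle inequality to confine $\x - \grtrsparse$ to a ball of radius $t+\delta$, factor out $(t+\delta)$, and pass to the cone $K = \cone{\sset - \grtrsparse}$ intersected with the unit ball. The only genuine difference is in how the additive ``$+1$'' is obtained. The paper applies Cauchy--Schwarz to pass from $\mean[\sup_{\h \in K \cap \ball[2][n]} \sp{\gaussian}{\h}]$ to the second moment, identifies this second moment with the statistical dimension via \cite[Prop.~3.1(5)]{amelunxen2014edge}, and then invokes the relation $\delta(K) \leq \effdim[\conic]{\sset - \grtrsparse} + 1$ from \cite[Prop.~10.2]{amelunxen2014edge}. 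You instead bound $\mean[\pospart{Z}] = M + \mean[\negpart{Z}] \leq M + \mean[\abs{Z - M}]$ directly and control the deviation term via the Gaussian Poincar\'e inequality for the $1$-Lipschitz functional $\gaussian \mapsto Z$. Your argument is somewhat more self-contained, avoiding the detour through the statistical-dimension machinery; the paper's version, in turn, ties the estimate explicitly to the framework of \cite{amelunxen2014edge} that underlies the rest of Section~\ref{sec:proofs}.
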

\begin{proof}
	We set $t' \coloneqq \lnorm{\grtrsparse - \grtr}$. For every $\x \in  \sset \intersec (\grtr + t\S^{n-1}) $, we have
	\begin{equation}
		\lnorm{\x - \grtrsparse} \leq \lnorm{\x - \grtr} + \lnorm{\grtr - \grtrsparse} \leq t + t',
	\end{equation}
	and therefore, $\x \in \sset_{\grtrsparse,t+t'} \coloneqq \sset \intersec (\grtrsparse + (t+t')\ball[2][n])$. This implies 
	\begin{align}
		\meanwidth[t]{\sset-\grtr} &= \mean[\sup_{\h \in  \tfrac{1}{t}(\sset - \grtr) \intersec \S^{n-1} } \sp{\gaussian}{\h}] = \tfrac{1}{t} \mean[\sup_{\x \in  \sset \intersec (\grtr + t\S^{n-1}) } \sp{\gaussian}{\x - \grtr}] \\
		&\leq \tfrac{1}{t} \mean[\sup_{\x \in \sset_{\grtrsparse,t+t'}} \sp{\gaussian}{\x - \grtrsparse}] + \underbrace{\tfrac{1}{t}\mean[\sup_{\x \in \sset \intersec (\grtr + t\S^{n-1})} \sp{\gaussian}{\grtrsparse- \grtr}]}_{= 0} \\
		&= \tfrac{t+t'}{t} \cdot \tfrac{1}{t+t'} \mean[\sup_{\x \in \sset_{\grtrsparse,t+t'}} \sp{\gaussian}{\x - \grtrsparse}] \\
		&= \tfrac{t+t'}{t} \cdot \mean[\sup_{\h \in \tfrac{1}{t+t'}(\sset - \grtrsparse)\intersec \ball[2][n]} \sp{\gaussian}{\h}] \\
		&\stackrel{\mathllap{\grtrsparse \in \sset}}{\leq} \tfrac{t+t'}{t} \cdot \mean[\sup_{\h \in \cone{\sset - \grtrsparse}\intersec \ball[2][n]} \sp{\gaussian}{\h}] \\
		&\stackrel{\mathllap{\text{CSI}}}{\leq} \tfrac{t+t'}{t} \cdot \Big(\mean[(\sup_{\h \in \cone{\sset - \grtrsparse}\intersec \ball[2][n]} \sp{\gaussian}{\h})^2]\Big)^{1/2} \\*
		&\stackrel{(\ast)}{\leq} \tfrac{t+t'}{t} \cdot (\effdim[\conic]{\sset- \grtrsparse } + 1)^{1/2} \leq \tfrac{t+t'}{t} \cdot (\meanwidth[\conic]{\sset- \grtrsparse } + 1),
	\end{align}
	where $(\ast)$ follows from \cite[Prop.~3.1(5) and Prop. 10.2]{amelunxen2014edge}.
\end{proof}

\begin{proof}[Proof of Corollary~\ref{cor:proofs:framework:guarantee-stable}]
	We would like to apply Theorem~\ref{thm:proofs:framework:guarantee} with
	\begin{equation}
		t = \max\left\{ R \lnorm{\grtr - \grtrsparse}, \frac{2\noiseparam}{\pospart{\sqrt{m-1} - C \cdot \subgparam^2 \cdot ( \tfrac{R + 1}{R} \cdot (\meanwidth[\conic]{\descset{\objfunc, \grtrsparse}} + 1) + \probsuccess)}} \right\}.
	\end{equation}
	Thus, it suffices to verify that the assumption \eqref{eq:proofs:framework:guarantee:cond} is satisfied.
	For this, we first observe that
	\begin{align}
		\meanwidth[t]{\descset{\objfunc, \grtr}} &= \meanwidth[t]{\descset{\objfunc, \grtr} + \grtr - \grtr} \\
		&\stackrel{\mathllap{t \geq  R \lnorm{\grtr - \grtrsparse}}}{\leq} \meanwidth[R\lnorm{\grtr - \grtrsparse}]{(\descset{\objfunc, \grtr} + \grtr) - \grtr} \\
		&\stackrel{\mathllap{\text{Lemma~\ref{lem:prelim:framework:loceffdimstable}}}}{\leq} \tfrac{R+1}{R} \cdot (\meanwidth[\conic]{(\descset{\objfunc, \grtr} + \grtr) -\grtrsparse } + 1) \\
		& = \tfrac{R+1}{R} \cdot (\meanwidth[\conic]{\descset{\objfunc, \grtrsparse}} + 1),
	\end{align}
	where we have used the monotony of the local mean width in the second step, and the last step follows from
	\begin{align}
		\descset{\objfunc, \grtr} + \grtr -\grtrsparse &= \{ \h \in \R^n \suchthat \objfunc(\grtr + \h) \leq \underbrace{\objfunc(\grtr)}_{= \objfunc(\grtrsparse)} \}  + \grtr - \grtrsparse \\
		&\stackrel{\mathllap{\h' \coloneqq \h + \grtr - \grtrsparse}}{=} \{ \h' \in \R^n \suchthat \objfunc(\h' + \grtrsparse) \leq \objfunc(\grtrsparse) \} = \descset{\objfunc, \grtrsparse}.
	\end{align}
	This implies\footnote{This argument does actually not cover the case of $\noiseparam = 0$, but here we can use that $t = R\lnorm{\grtr - \grtrsparse} > 0$ if $\grtr \neq \grtrsparse$.}
	\begin{align}
		t &\geq \frac{2\noiseparam}{\pospart{\sqrt{m-1} - C \cdot \subgparam^2 \cdot ( \tfrac{R + 1}{R} \cdot (\meanwidth[\conic]{\descset{\objfunc, \grtrsparse}} + 1) + \probsuccess)}} \\
		&\geq \frac{2\noiseparam}{\pospart{\sqrt{m-1} - C \cdot \subgparam^2 \cdot (\meanwidth[t]{\descset{\objfunc, \grtr}} + \probsuccess)}} \ ,
	\end{align}
	so that the claim again follows from Theorem~\ref{thm:proofs:framework:guarantee}.
\end{proof}

\subsection{Proof of Lemma~\ref{lem:proofs:scbound:clippedcovar}}
\label{subsec:prelim:clippedcovar}


The main idea behind the proof of Lemma~\ref{lem:proofs:scbound:clippedcovar} is to regard the \emph{clipped covariance}
\begin{equation}
	\mean[\clip{\sp{\v_1}{\gaussian}}{\beta_1} \cdot \clip{\sp{\v_2}{\gaussian}}{\beta_2}]
\end{equation}
as a function in the correlation between $\sp{\v_1}{\gaussian}$ and $\sp{\v_2}{\gaussian}$, which is given by $\mean[\sp{\v_1}{\gaussian} \cdot \sp{\v_2}{\gaussian}] = \sp{\v_1}{\v_2}$.
In fact, we will even see that it is monotone and \eqref{eq:proofs:scbound:clippedcovar:bound} holds with equality if $\abs{\sp{\v_1}{\v_2}} = 1$.
To make this argument precise, we need the following
\begin{lemma}\label{lem:prelim:clippedcovar:corrfct}
	Let $\beta_1, \beta_2 \geq 0$ and define the function
	\begin{equation}
		\Theta \colon \intvcl{-1}{1} \to \R, \ \rho \mapsto \mean{}_{(\gaussianuniv_1, \gaussianuniv_2) \distributed \Normdistr[\Big]{\vnull}{\smallmatr{1 & \rho \\ \rho & 1 }}} \Big[ \clip{\gaussianuniv_1}{\beta_1} \cdot \clip{\gaussianuniv_2}{\beta_2} \Big].
	\end{equation}
	Then, $\Theta$ is continuous, $\Theta(0) = 0$, and $\Theta$ is convex on $\intvcl{0}{1}$. In particular, $\Theta(\rho) \leq \rho \cdot \Theta(1)$ for all $0 \leq \rho \leq 1$.
\end{lemma}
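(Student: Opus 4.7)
The plan is to verify the four claims ($\Theta(0)=0$, continuity, convexity on $\intvcl{0}{1}$, and the chord bound) in that order, with convexity being the technical core of the argument.

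The first two claims are essentially immediate. When $\rho = 0$, the variables $\gaussianuniv_1$ and $\gaussianuniv_2$ are independent $\Normdistr{0}{1}$, so $\Theta(0)$ factors as a product of expectations of odd functions of symmetric variables, both of which vanish. Continuity on $\intvcl{-1}{1}$ then follows by writing $\gaussianuniv_2 = \rho\gaussianuniv_1 + \sqrt{1-\rho^2}\gaussianuniv_3$ with $\gaussianuniv_1, \gaussianuniv_3$ independent standard Gaussians, observing that the integrand is then a.s.\ continuous in $\rho$ and uniformly bounded by $\beta_1\beta_2$, and invoking dominated convergence.

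For convexity on $\intvcl{0}{1}$, I intend to use a Price-type differentiation argument based on the identity $\partial_\rho\phi_\rho = \partial_x\partial_y\phi_\rho$ for the bivariate Gaussian density $\phi_\rho$ with correlation $\rho$ (straightforward on the Fourier side, since $\partial_\rho \hat\phi_\rho(s,t) = -st\,\hat\phi_\rho(s,t)$). Differentiating $\Theta$ under the integral sign and integrating by parts twice---with boundary terms vanishing by Gaussian decay---transfers both derivatives onto the clip functions. Since the weak derivative of $s\mapsto\clip{s}{\beta}$ equals $\indprob{\abs{s}<\beta}$, this yields
\begin{equation}
	\Theta'(\rho) = \int_{-\beta_1}^{\beta_1}\int_{-\beta_2}^{\beta_2} \phi_\rho(x,y)\,dy\,dx = \prob{\abs{\gaussianuniv_1}<\beta_1,\,\abs{\gaussianuniv_2}<\beta_2}.
\end{equation}
Differentiating once more, reapplying the heat-type identity, and using the symmetry $\phi_\rho(-x,-y)=\phi_\rho(x,y)$ to collapse the four boundary contributions, one obtains
\begin{equation}
	\Theta''(\rho) = 2\bigl(\phi_\rho(\beta_1,\beta_2) - \phi_\rho(\beta_1,-\beta_2)\bigr).
\end{equation}
For $\rho\geq 0$ and $\beta_1,\beta_2\geq 0$, the exponent of $\phi_\rho(\beta_1,\beta_2)$ dominates that of $\phi_\rho(\beta_1,-\beta_2)$, so $\Theta'' \geq 0$ on $\intvclop{0}{1}$. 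Combined with continuity at $\rho=1$, this yields convexity on $\intvcl{0}{1}$, and the final bound $\Theta(\rho)\leq\rho\Theta(1)$ is the standard chord inequality for a convex function vanishing at $0$.

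The main obstacle will be the rigorous justification of differentiation under the integral and of the two integration-by-parts steps, given that $\clip{\cdot}{\beta}$ is only Lipschitz (not $C^1$) and $\phi_\rho$ degenerates as $\abs{\rho}\to 1$. Both issues are handled in a standard way: smooth $\clip{\cdot}{\beta}$ by a mollifier, perform all manipulations in the smooth regime where they reduce to elementary calculus, and pass to the limit via dominated convergence using the uniform bound $\abs{\clip{s}{\beta}}\leq\beta$ and the Gaussian decay of all derivatives of $\phi_\rho$ on compact $\rho$-subsets of $\intvop{-1}{1}$.
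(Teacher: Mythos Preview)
Your proposal is correct and follows essentially the same route as the paper: both arguments hinge on Price's identity to compute $\Theta''(\rho) = 2\bigl(\phi_\rho(\beta_1,\beta_2) - \phi_\rho(\beta_1,-\beta_2)\bigr)$ and then check the sign for $\rho \geq 0$. The only difference is in packaging the technical justification: the paper invokes a ready-made distributional version of Price's theorem (due to Voigtlaender) that handles the non-smoothness of $\clip{\cdot}{\beta}$ directly, whereas you sketch the underlying mollification-and-integration-by-parts argument by hand; both lead to the same formula and the same conclusion.
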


The proof of Lemma~\ref{lem:prelim:clippedcovar:corrfct} is a consequence of Price's Theorem \cite{price1958theorem,mcmahon1964price}, which is a powerful tool to calculate the derivative of non-linearly distorted, correlated Gaussian random variables. The following version of Price's Theorem was established by Voigtlaender \cite{voigtlaender2017price} and even applies to tempered distributions:\footnote{For a detailed introduction to Schwartz functions and tempered distributions, see \cite[Chap.~8+9]{folland1999analysis} for example.}
\begin{theorem}[Price, \protect{\cite[Cor.~2]{voigtlaender2017price}}]\label{thm:prelim:clippedcovar:price}
	For $\rho \in \intvop{-1}{1}$, let 
	\begin{equation}
		\phi_\rho \colon \R^2 \to \intvop{0}{\infty}, \ (x_1, x_2) \mapsto \tfrac{1}{2\pi \sqrt{1 - \rho^2}} \cdot \exp \Big( - \tfrac{x_1^2 + x_2^2 - 2 \rho \cdot x_1 \cdot x_2}{2 (1- \rho^2)} \Big)
	\end{equation}
	be the density function of a random vector $\gaussian \distributed \Normdistr[\Big]{\vnull}{\smallmatr{1 & \rho \\ \rho & 1 }}$.
	If $f \in \Tempdistr[\R^2]$ is a tempered distribution, then the function
	\begin{equation}
		\mathscr{F} \colon \intvop{-1}{1} \to \C, \ \rho \mapsto \sp{f}{\phi_\rho}_{\Tempdistr,\Schwartz}
	\end{equation}
	is infinitely often differentiable with
	\begin{equation}
		\mathscr{F}^{(k)}(\rho) = \sp[\Big]{\tfrac{\partial^{2k}}{\partial x_1^k \partial x_2^k} f}{\phi_\rho}_{\Tempdistr,\Schwartz} \quad \text{for all $\rho \in \intvop{-1}{1}$ and $k \geq 0$.}
	\end{equation}
	In particular, if $f \in \Leb[\R^2]{1,\text{loc}}$, we have $\mathscr{F}(\rho) = \mean[f(\gaussian)]$ for $\gaussian \distributed \Normdistr[\Big]{\vnull}{\smallmatr{1 & \rho \\ \rho & 1 }}$.
\end{theorem}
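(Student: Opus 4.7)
The strategy is to reduce everything to the Mehler-type identity
\begin{equation}
\frac{\partial}{\partial \rho} \phi_\rho = \frac{\partial^2}{\partial x_1 \partial x_2} \phi_\rho, \qquad \rho \in \intvop{-1}{1},
\end{equation}
for the two-dimensional correlated Gaussian density, and then to transfer the resulting $x$-derivatives onto $f$ by duality, once smoothness of $\rho \mapsto \phi_\rho$ as a Schwartz-valued curve has been established. This PDE is precisely what makes Price's theorem tick: one $\rho$-differentiation on $\phi_\rho$ converts into one pair of $x_1$-, $x_2$-differentiations, which the pairing shifts onto $f$ via the definition of distributional derivatives.

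First, the PDE is verified by a direct computation. Writing $Q_\rho(x) \coloneqq x_1^2 + x_2^2 - 2 \rho x_1 x_2$, we have $\phi_\rho(x) = (2\pi)^{-1}(1-\rho^2)^{-1/2} \exp\bigl(-Q_\rho(x)/(2(1-\rho^2))\bigr)$. Differentiating in $\rho$ yields one contribution from the normalizer $(1-\rho^2)^{-1/2}$ and one from the exponent, whereas $\partial_{x_1} \partial_{x_2} \phi_\rho$ can be expanded via the elementary identity $\partial_{x_j} \phi_\rho = -(1-\rho^2)^{-1}(x_j - \rho x_{3-j}) \phi_\rho$ for $j \in \{1,2\}$. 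Expanding both expressions shows that they coincide with the common prefactor $\bigl[\rho(1-\rho^2)^{-1} + ((1+\rho^2) x_1 x_2 - \rho(x_1^2 + x_2^2))(1-\rho^2)^{-2}\bigr] \phi_\rho$.

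The main technical obstacle is to justify that $\mathscr{F}(\rho) = \sp{f}{\phi_\rho}_{\Tempdistr,\Schwartz}$ is smooth in $\rho$ and that its derivatives commute with the pairing in the naive way. For this one shows that $\rho \mapsto \phi_\rho$ is smooth as a map $\intvop{-1}{1} \to \Schwartz[\R^2]$ with respect to the Fr\'echet topology generated by the seminorms $\norm{\varphi}_{\alpha,\beta} \coloneqq \sup_x \abs{x^\alpha \partial^\beta \varphi(x)}$. Every derivative of $\phi_\rho$ in $x$ or $\rho$ produces a polynomial prefactor in $x$ and in $(1-\rho^2)^{-1}$ times $\phi_\rho$ itself, yielding bounds of the form $\abs{x^\alpha \partial^\beta \phi_\rho(x)} \leq C_{\alpha,\beta,K} \exp(-c_K \abs{x}^2)$ uniformly for $\rho$ in any compact $K \subset \intvop{-1}{1}$. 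Applied to the difference quotients $h^{-1}(\phi_{\rho + h} - \phi_\rho) - \partial_\rho \phi_\rho$, the same estimates give convergence in every Schwartz seminorm, so $\mathscr{F}$ is continuously differentiable with $\mathscr{F}'(\rho) = \sp{f}{\partial_\rho \phi_\rho}_{\Tempdistr,\Schwartz}$. Combining this with the PDE and the definition of distributional derivatives then yields $\mathscr{F}'(\rho) = \sp{f}{\partial_{x_1}\partial_{x_2} \phi_\rho}_{\Tempdistr,\Schwartz} = \sp{\partial_{x_1}\partial_{x_2} f}{\phi_\rho}_{\Tempdistr,\Schwartz}$. Since $\partial_{x_1}\partial_{x_2} f \in \Tempdistr[\R^2]$ again, iterating this argument $k$ times proves the general formula.

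For the final assertion, if $f \in \Leb[\R^2]{1,\text{loc}}$ defines a tempered distribution through $\sp{f}{\varphi}_{\Tempdistr,\Schwartz} \coloneqq \int_{\R^2} f(x) \varphi(x) \, dx$, then $\mathscr{F}(\rho) = \int_{\R^2} f(x) \phi_\rho(x)\,dx = \mean[f(\gaussian)]$ with $\gaussian \distributed \Normdistr[\Big]{\vnull}{\smallmatr{1 & \rho \\ \rho & 1}}$ by the very definition of the expected value under the Gaussian density $\phi_\rho$.
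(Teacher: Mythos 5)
Your proof is correct. Note, however, that the paper does not prove this theorem at all: it is imported verbatim as \cite[Cor.~2]{voigtlaender2017price}, so there is no in-paper argument to compare against. What you have written is the standard self-contained proof of Price's theorem, and it holds up: the computation verifying the identity $\partial_\rho \phi_\rho = \partial_{x_1}\partial_{x_2}\phi_\rho$ is exactly right (both sides equal $\bigl[\rho(1-\rho^2)^{-1} + ((1+\rho^2)x_1x_2 - \rho(x_1^2+x_2^2))(1-\rho^2)^{-2}\bigr]\phi_\rho$); the smoothness of $\rho \mapsto \phi_\rho$ as a curve in $\Schwartz[\R^2]$ follows from your uniform Gaussian bounds on compacta, since the smallest eigenvalue of the quadratic form in the exponent is $(1+\abs{\rho})^{-1} \geq 1/2$ and all polynomial prefactors involve only powers of $(1-\rho^2)^{-1}$; continuity of $f$ on $\Schwartz[\R^2]$ then lets you differentiate through the pairing; and the sign bookkeeping when moving $\partial_{x_1}\partial_{x_2}$ onto $f$ is clean because the two integrations by parts contribute $(-1)^2$. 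Iterating on $\partial_{x_1}\partial_{x_2} f \in \Tempdistr[\R^2]$ gives the general $k$. The only caveat worth recording is one the theorem statement itself glosses over: membership in $\Leb[\R^2]{1,\text{loc}}$ alone guarantees neither that $f$ induces a tempered distribution nor that $f\phi_\rho$ is integrable, so the final identity $\mathscr{F}(\rho) = \mean[f(\gaussian)]$ should be read under the implicit assumption that $f$ is at most of polynomial growth (as is the case for the clipped product to which the paper applies it). With that understood, your argument is a legitimate replacement for the external citation.
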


\begin{proof}[Proof of Lemma~\ref{lem:prelim:clippedcovar:corrfct}]
	We consider the function
	\begin{equation}
		f_{\beta_1,\beta_2}\colon \R^2 \to \R, \ (x_1, x_2) \mapsto \clip{x_1}{\beta_1} \cdot \clip{x_2}{\beta_2}.
	\end{equation}
	It is not hard to see that $f_{\beta_1,\beta_2}$ is $1$-Lipschitz and therefore $f_{\beta_1,\beta_2} \in \Tempdistr[\R^2]$.
	The (weak) derivative of $f_{\beta_1,\beta_2}$ is given by the set indicator function of the box $\intvop{-\beta_1}{\beta_1} \times \intvop{-\beta_2}{\beta_2} \subset \R^2$:
	\begin{equation}
		\tfrac{\partial^{2}}{\partial x_1 \partial x_2} f_{\beta_1,\beta_2} = \indset{\intvop{-\beta_1}{\beta_1} \times \intvop{-\beta_2}{\beta_2}}.
	\end{equation}
	For any $\phi \in \Schwartz[\R^2]$, Fubini's theorem and the fundamental theorem of calculus now yield
	\begin{align}
		\sp[\Big]{\tfrac{\partial^{4}}{\partial x_1^2 \partial x_2^2} f_{\beta_1,\beta_2}}{\phi}_{\Tempdistr,\Schwartz} &= \sp[\Big]{\indset{\intvop{-\beta_1}{\beta_1} \times \intvop{-\beta_2}{\beta_2}}}{\tfrac{\partial^2}{\partial x_1 \partial x_2} \phi}_{\Tempdistr,\Schwartz} \\
		&= \int_{-\beta_2}^{\beta_2} \int_{-\beta_1}^{\beta_1} \tfrac{\partial^2}{\partial x_1 \partial x_2} \phi (x_1, x_2) dx_1 dx_2 \\
		&= \phi(\beta_1, \beta_2) - \phi(\beta_1, -\beta_2) - \phi(-\beta_1, \beta_2) +  \phi(-\beta_1, -\beta_2).
	\end{align}
	Hence, by Theorem~\ref{thm:prelim:clippedcovar:price}, the map $\Theta$ is smooth on $\intvop{-1}{1}$ and
	\begin{align}
		\Theta''(\rho) &= \phi_\rho(\beta_1, \beta_2) - \phi_\rho(\beta_1, -\beta_2) - \phi_\rho(-\beta_1, \beta_2) +  \phi_\rho(-\beta_1, -\beta_2) \\
		&= 2 (\phi_\rho(\beta_1, \beta_2) - \phi_\rho(\beta_1, -\beta_2)) \label{eq:prelim:clippedcovar:corrfct:sndder}
	\end{align}
	for $\rho \in \intvop{-1}{1}$, where we have also used the symmetry of $\phi_\rho$.
	Now, let us show the continuity of $\Theta$ on the interval $\intvcl{-1}{1}$. For this, let $\gaussian \coloneqq (\gaussianuniv, \gaussianuniv') \distributed \Normdistr{\vnull}{\I{2}}$ and set $\gaussianuniv_\rho \coloneqq \rho \cdot \gaussianuniv + \sqrt{1 - \rho} \cdot \gaussianuniv'$ for $\rho \in \intvcl{-1}{1}$. Then $(\gaussianuniv, \gaussianuniv_\rho) \distributed \Normdistr[\Big]{\vnull}{\smallmatr{1 & \rho \\ \rho & 1 }}$, so that
	\begin{align}
		\abs{\Theta(\rho) - \Theta(\rho')} &\leq \mean[]_{\gaussian} [\abs{f_{\beta_1,\beta_2}(\gaussianuniv, \gaussianuniv_\rho) - f_{\beta_1,\beta_2}(\gaussianuniv, \gaussianuniv_{\rho'})}] \\
		&\stackrel{\mathllap{\text{$f_{\beta_1,\beta_2}$ is $1$-Lipschitz}}}{\leq} \mean[]_{\gaussian} [\abs{\gaussianuniv_\rho - \gaussianuniv_{\rho'}}] \\
		&\leq \abs{\rho - \rho'} \cdot \mean[\abs{\gaussianuniv}] + \abs{\sqrt{1 - \rho} - \sqrt{1 - \rho'}} \cdot \mean[\abs{\gaussianuniv'}] \xrightarrow{\rho' \to \rho} 0.
	\end{align}
	Moreover, we observe that
	\begin{equation}
		\Theta(0) = \mean[]_{\gaussian}[f_{\beta_1,\beta_2}(\gaussianuniv, \gaussianuniv')] = \underbrace{\mean[\clip{\gaussianuniv}{\beta_1}]}_{= 0} \cdot \underbrace{\mean[\clip{\gaussianuniv'}{\beta_1}]}_{= 0} = 0.
	\end{equation}
	Finally, let us verify the convexity of $\Theta$ on $\intvcl{0}{1}$. It is sufficient to show $\Theta''(\rho) \geq 0$ for $\rho \in \intvop{0}{1}$.
	Indeed, by \eqref{eq:prelim:clippedcovar:corrfct:sndder}, we have
	\begin{align}
		& \Theta''(\rho) = 2 (\phi_\rho(\beta_1, \beta_2) - \phi_\rho(\beta_1, -\beta_2)) \geq 0 \\
		\iff \quad & \exp \Big( - \tfrac{\beta_1^2 + \beta_2^2 - 2 \rho \cdot \beta_1 \cdot \beta_2}{2 (1- \rho^2)} \Big) \geq \exp \Big( - \tfrac{\beta_1^2 + \beta_2^2 + 2 \rho \cdot \beta_1 \cdot \beta_2}{2 (1- \rho^2)} \Big) \\
		\iff \quad & \beta_1^2 + \beta_2^2 - 2 \rho \cdot \beta_1 \cdot \beta_2 \leq \beta_1^2 + \beta_2^2 + 2 \rho \cdot \beta_1 \cdot \beta_2 \quad \iff \quad 0 \leq 4 \rho \cdot \beta_1 \cdot \beta_2,
	\end{align}
	where the latter inequality is obviously fulfilled for every $\rho \in \intvop{0}{1}$.
\end{proof}

Now, using Lemma~\ref{lem:prelim:clippedcovar:corrfct}, we can easily derive the statement of Lemma~\ref{lem:proofs:scbound:clippedcovar}:
\begin{proof}[Proof of Lemma~\ref{lem:proofs:scbound:clippedcovar}]
	First, we observe that
	\begin{equation}
		(\sp{\v_1}{\gaussian}, \sp{\v_2}{\gaussian}) \distributed \Normdistr[\Big]{\vnull}{\smallmatr{1 & \sp{\v_1}{\v_2} \\ \sp{\v_1}{\v_2} & 1 }}.
	\end{equation}
	If $\sp{\v_1}{\v_2} \geq 0$, Lemma~\ref{lem:prelim:clippedcovar:corrfct} implies
	\begin{equation}
		\mean[\clip{\sp{\v_1}{\gaussian}}{\beta_1} \cdot \clip{\sp{\v_2}{\gaussian}}{\beta_2}] = \Theta(\sp{\v_1}{\v_2}) \leq \sp{\v_1}{\v_2} \cdot \Theta(1),
	\end{equation}
	and an elementary calculation shows that
	\begin{align}
		\Theta(1) &= \mean{}[\clip{\underbrace{\gaussianuniv}_{\mathclap{\distributed\Normdistr{0}{1}}}}{\beta_1} \cdot \clip{\gaussianuniv}{\beta_2}] \\
		&= \Big( 1 - \erf\big(\tfrac{\beta^{\max}}{\sqrt{2}}\big) - \sqrt{\tfrac{2}{\pi}} \tfrac{e^{-(\beta^{\max})^2 / 2}}{\beta^{\max}} \Big) \cdot \beta^{\min} \cdot \beta^{\max} + \erf\big(\tfrac{\beta^{\min}}{\sqrt{2}}\big).
	\end{align}
	On the other hand, if $\sp{\v_1}{\v_2} \leq 0$, we have
	\begin{align}
		\mean[\clip{\sp{\v_1}{\gaussian}}{\beta_1} \cdot \clip{\sp{\v_2}{\gaussian}}{\beta_2}] &= - \mean[\clip{\sp{-\v_1}{\gaussian}}{\beta_1} \cdot \clip{\sp{\v_2}{\gaussian}}{\beta_2}] \\
		&= - \Theta(\sp{-\v_1}{\v_2}) \\
		&\stackrel{\mathllap{\sp{-\v_1}{\v_2} \geq 0}}{\geq} - \sp{-\v_1}{\v_2} \cdot \Theta(1) \\
		&= \sp{\v_1}{\v_2} \cdot \Theta(1),
	\end{align}
	and by taking the absolute value, the claim follows.
\end{proof}

\subsection{Proof of Lemma~\ref{lem:proofs:scbound:globalmin}}
\label{subsec:prelim:scbound:globalmin}

Before proving the actual statement of Lemma~\ref{lem:proofs:scbound:globalmin}, we recall the definition of the function $h$ defined in \eqref{eq:results:scfunc:h}:
\begin{equation}
	h\colon \intvop{0}{\infty} \to \intvop{0}{\infty}, \ \tau \mapsto \sqrt{\tfrac{2}{\pi}} \tfrac{e^{-\tau^2/2}}{\tau} + \erf(\tfrac{\tau}{\sqrt{2}}) - 1.
\end{equation}
The well-definedness of $h$, i.e., $h(\tau) > 0$ for all $\tau > 0$, is a consequence of standard bounds on the error function, e.g., see \cite[Prop.~7.5]{foucart2013cs}.
Since $\lim_{\tau \searrow 0} h(\tau) = \infty$, $\lim_{\tau \to \infty} h(\tau) = 0$, and
\begin{equation}
	h'(\tau) = -\sqrt{\tfrac{2}{\pi}} \tfrac{e^{-\tau^2/2}}{\tau^2} < 0
\end{equation}
for all $\tau > 0$, we can conclude that $h^{-1}\colon \intvop{0}{\infty} \to \intvop{0}{\infty}$ is also well-defined.

\begin{proof}[Proof of Lemma~\ref{lem:proofs:scbound:globalmin}]
	Adapting the notation from Proposition~\ref{prop:results:scfuncsimplyfied}, let us set $S \coloneqq \spparam{}$, $L \coloneqq \cospparam{}$, $\bar{L} \coloneqq \cospparamdg{}$, and for the sake of convenience, let us also restate the definition of $F$:
	\begin{equation}
		F(\tau,\lambda) = n + \big( S - L \cdot h(\tau)  \big) \cdot \tau^2 \cdot \lambda^2 + \big(\lambda^2 \cdot L - 2 \lambda \cdot \bar{L}\big) \cdot \erf\big(\tfrac{\tau}{\sqrt{2}} \big), \quad \tau, \lambda \geq 0.
	\end{equation}
	First, we calculate the partial derivatives of $F$ for $\tau, \lambda > 0$:
	\begin{align}
		\partder{F}{\tau} (\tau,\lambda) &= 2 \lambda^2 \cdot \tau \cdot \big( S - L \cdot h(\tau)  \big) + 2 \lambda \cdot \sqrt{\tfrac{2}{\pi}}e^{-\tau^2/2} \cdot (\lambda \cdot L - \bar{L}), \\
		\partder{F}{\lambda} (\tau,\lambda) &= 2 \lambda \cdot \tau^2 \cdot \big( S - L \cdot h(\tau)  \big) + 2 \erf(\tfrac{\tau}{\sqrt{2}}) \cdot (\lambda \cdot L - \bar{L}).
	\end{align}
	Hence, the gradient of $F$ takes the following form:
	\begin{equation}
		\gradient F(\tau,\lambda) = \underbrace{\matr{2 \lambda^2 \cdot \tau & 2 \lambda \cdot \sqrt{\tfrac{2}{\pi}}e^{-\tau^2/2} \\ 2 \lambda \cdot \tau^2 & 2 \erf(\tfrac{\tau}{\sqrt{2}})}}_{\eqqcolon \vec{M}} \cdot \matr{S - L \cdot h(\tau) \\ \lambda \cdot L - \bar{L}}.
	\end{equation}
	Since $S,L,\bar{L} > 0$ and
	\begin{equation}
		\det(\vec{M}) = 4 \lambda^2 \cdot \tau \cdot \Big( \erf(\tfrac{\tau}{\sqrt{2}}) - \tau \cdot \sqrt{\tfrac{2}{\pi}}e^{-\tau^2/2} \Big) > 0
	\end{equation}
	for all $\tau, \lambda > 0$, the only critical point of $F$ on $\intvop{0}{\infty} \times \intvop{0}{\infty}$ is attained if
	\begin{equation}
		0 = S - L \cdot h(\tau) = \lambda \cdot L - \bar{L} \quad \iff \quad (\tau, \lambda) = (\tau', \lambda') = \Big( h^{-1}\big( \tfrac{S}{L}\big), \tfrac{\bar{L}}{L} \Big).
	\end{equation}
	Computing the value of $F$ at this point, we observe that (cf. \eqref{eq:proofs:mainresults:scbound:final})
	\begin{align}
		F(\tau', \lambda') = n - \tfrac{(\bar{L})^2}{L} \cdot \erf\Big( \tfrac{1}{\sqrt{2}} \cdot h^{-1} \big( \tfrac{S}{L} \big) \Big) < n = F(\tau, 0) = F(0,\lambda)
	\end{align}
	for all $\lambda, \tau \geq 0$, implying that $F$ cannot attain a global minimum on the boundary of $\intvclop{0}{\infty} \times \intvclop{0}{\infty}$.
	Consequently, it suffices to seek for a global minimum on $\intvop{0}{\infty} \times \intvop{0}{\infty}$.
	For this purpose, let us first consider the quadratic function $\lambda \mapsto F_{\tau}(\lambda) \coloneqq F(\tau, \lambda)$ for $\lambda > 0$ and a fixed $\tau > 0$.
	It is not hard to see that $F_{\tau}$ has its unique minimum at
	\begin{equation}
		\lambda_0(\tau) \coloneqq \frac{\bar{L} \cdot \erf(\tfrac{\tau}{\sqrt{2}})}{ L \cdot \erf(\tfrac{\tau}{\sqrt{2}}) + \tau^2 \cdot (S - L \cdot h(\tau))} \ .
	\end{equation}
	Note that $\lambda_0(\tau) > 0$, since $L \cdot (\erf(\tau / \sqrt{2}) - \tau^2 \cdot h(\tau)) > 0$ for all $\tau > 0$.
	Next, we minimize the parameterized mapping
	\begin{equation}
		\tau \mapsto \tilde{F}(\tau) \coloneqq F(\tau, \lambda_0(\tau)) = n - \frac{\bar{L}^2 \cdot \erf(\tfrac{\tau}{\sqrt{2}})^2}{ L \cdot \erf(\tfrac{\tau}{\sqrt{2}}) + \tau^2 \cdot (S - L \cdot h(\tau))}
	\end{equation}
	on $\intvop{0}{\infty}$. An elementary calculation shows that
	\begin{align}
		0 = \tilde{F}'(\tau) \quad &\iff \quad 0 = (S - L \cdot h(\tau)) \cdot \underbrace{\Big( \tau \cdot \sqrt{\tfrac{2}{\pi}}e^{-\tau^2/2} - \erf(\tfrac{\tau}{\sqrt{2}}) \Big)}_{< 0} \\
		&\iff \quad \tau = \tau' = h^{-1}\big( \tfrac{S}{L}\big),
	\end{align}
	and $\tilde{F}''(\tau') > 0$. Thus, $\tilde{F}$ attains its unique minimum at $\tau'$.
	
	Finally, let $(\tau, \lambda) \in \intvop{0}{\infty} \times \intvop{0}{\infty}$ be arbitrary.
	By the above observations,
	\begin{equation}
		F(\tau, \lambda) \geq F(\tau,\lambda_0(\tau)) \geq F(\tau',\underbrace{\lambda_0(\tau')}_{= \lambda'}) = F(\tau', \lambda'),
	\end{equation}
	where equality holds if and only if $\tau = \tau'$ and $\lambda = \lambda_0(\tau') = \lambda'$. Therefore, $F$ has indeed its unique global minimum at $(\tau',\lambda')$.
\end{proof}

\section{Details on Implementation}
\label{sec:implementation}

\subsection{Greedy Approach to Constructing Sparse Representations}
\label{subsec:implementation:greedy}

\newcommand{\paras}{S_0} 

In this section, we present a greedy strategy for finding a good approximation space $\projnsp \subset \R^n$ according to the argumentation of Subsection~\ref{subsec:results:compressible}. Recall that our basic geometric idea is to select $\projnsp \subset \R^n$ in such a way that the orthogonal projection of $\proj{\projnsp} \grtr$ enjoys a small value of $\samplecompl{\aop,\proj{\projnsp}\grtr}$ while not being too distant from $\grtr$ at the same time. 
This is precisely the purpose of Algorithm~\ref{exp:impl:algo} below, where we fix a target analysis sparsity $S$ in advance and then identify a subspace $\projnsp$ with $\lnorm{\aop \proj{\projnsp} \grtr}[0] \leq S$ and small approximation error $\lnorm{\grtr - \proj{\projnsp}\grtr}[2]$.

\begin{algorithm}[Greedy subspace selection]\leavevmode
\label{exp:impl:algo}
\vspace{-.25\baselineskip}\hrule\vspace{.5\baselineskip}

\myhangindent{Input: \ }
\expkwd{Input:} Analysis operator $\aop \in \R^{N \times n}$, signal vector $\grtr \in \R^n$, target level of sparsity $S\in [N]$, (machine) precision parameter $\varepsilon \geq 0$.

\vspace{.5\baselineskip}
\expkwd{Output:} Linear approximation space $\projnsp \subset \R^n$.

\vspace{.5\baselineskip}
\expkwd{Initialize:} $\paras \coloneqq N$, $\ssupp[0] \coloneqq [N], \projnsp \coloneqq \{ \vnull \}$.
\begin{expstep}
	\item \expkwd{While} $\paras > S$: 
	\begin{enumerate}[label=(\arabic*)]
		\item Select an index $k \in \ssupp[0]$ such that $\lnorm{\proj{\projnsp[k]} \grtr}[2]$ becomes minimal, where $\projnsp[k] \coloneqq \spann \{ \avec_k\}$.\label{step:implementation:step1}
		\item Determine all indices $k' \in \ssupp[0]$ with $|\sp{\avec_{k'}}{\proj{\orthcompl{\projnsp[k]}} \grtr}| \leq \varepsilon$ and exclude them from $\ssupp[0]$.\label{step:implementation:step2}
		\item Set $\projnsp \coloneqq  \orthcompl{\spann( \{\avec_k \suchthat k \in \ssuppc[0] \})} $.
		\item Update $\paras \coloneqq \lnorm{ \aop \proj{\projnsp} \grtr}[0]$.
	\end{enumerate}
	\item \expkwd{Return} $\projnsp$. 
\end{expstep}
\end{algorithm}

Since $\lnorm{\grtr - \proj{\orthcompl{\projnsp[k]}} \grtr}[2] = \lnorm{\proj{\projnsp[k]} \grtr}[2]$, the actual intention of Step~\ref{step:implementation:step1} in Algorithm~\ref{exp:impl:algo} is to identify an atom $\avec_k \in \R^n$ of $\aop$ that yields a best approximation of $\grtr$ while enforcing $\sp{\avec_k}{\proj{\orthcompl{\projnsp[k]}} \grtr} = 0$. 
The coherence structure of $\aop$ is then incorporated by Step~\ref{step:implementation:step2}, where all (almost) vanishing analysis coefficients of $\proj{\orthcompl{\projnsp[k]}} \grtr$ are excluded from the index set $\ssupp[0]$.
Let us emphasize that this procedure could be suboptimal in the sense of best $S$-analysis-sparse approximations (cf. \eqref{eq:results:compressible:optimalSterm:subspace}), but it still may allow for accurate predictions by means of Theorem~\ref{thm:results:robuststable}, as demonstrated in Subsection~\ref{subsec:appl:compr}.
Finally, let us mention that the statement of Theorem~\ref{thm:results:robuststable} actually considers a scaled version of $\proj{\projnsp}\grtr$ in \eqref{eq:results:robuststable:grtrsparse}, which is merely an artifact of its proof and therefore ignored here. 

\subsection{Numerical Estimation of the Sample Complexity}
\label{subsec:implementation:ed}

To validate the quality of our sampling-rate predictions via $\samplecompl{\aop,\grtr}$, we have also reported the true sample complexity $\ed{\aop,\grtr}$ of \refabpnoiseless{\noiseparam=0}{\aop} in Subsection~\ref{subsec:appl:wavelets} for several choices of $\aop$ and $\grtr$.
Technically, this actually relies on computing the so-called \emph{statistical dimension}, which is known to capture the sample complexity of many convex programs \cite{amelunxen2014edge}.

Let us sketch how this quantity can be numerically calculated. Recalling that $\descset{\lnorm{\aop(\cdot)}[1], \grtr}$ denotes the descent set of the $\l{1}$-analysis objective $\lnorm{\aop(\cdot)}[1]$ at $\grtr$ (see Definition~\ref{def:proofs:framework:descentset}), the statistical dimension of the associated descent cone $\desccone{\lnorm{\aop(\cdot)}[1], \grtr}$ is given by
\begin{align}
	\ed{\aop,\grtr} \coloneqq \mean[ \Big(\sup_{\h \in \desccone{\lnorm{\aop(\cdot)}[1], \grtr} \intersec \ball[2][n]} \sp{\gaussian}{\h} \Big)^2][\Big],
\end{align}
where $\gaussian \distributed \Normdistr{\vnull}{\I{n}}$; see \cite[Prop.~3.1(5)]{amelunxen2014edge}.
Since $\descset{\lnorm{\aop(\cdot)}[1], \grtr}$ is a polytope containing the origin, it is not hard to see that there exists $t_0 > 0$, such that
\begin{equation}
	t^{-1} \descset{\lnorm{\aop(\cdot)}[1], \grtr} \intersec \ball[2][n] = \desccone{\lnorm{\aop(\cdot)}[1], \grtr} \intersec \ball[2][n]
\end{equation}
for all $0 < t \leq t_0$, and in particular, 
\begin{equation}\label{eq:implementation:statdimequiv}
	\ed{\aop,\grtr} = \mean[ \Big(\sup_{\h \in t^{-1} \descset{\lnorm{\aop(\cdot)}[1], \grtr} \intersec \ball[2][n]} \sp{\gaussian}{\h} \Big)^2][\Big].
\end{equation}
The inner expression of this expected value can be rewritten as a simple convex program:
\begin{align} \label{eq:implementation:prog}
 \sup_{\h \in \R^n} \sp{\gaussian}{\h} \quad \text{subject to \quad $\lnorm{\aop (\grtr + t \h)}[1] \leq \lnorm{\aop \grtr}[1]$ and $\lnorm{\h}[2] \leq 1$,} 
\end{align}
where we have used that $\h \in t^{-1}\descset{\lnorm{\aop(\cdot)}[1], \grtr}$ is equivalent to $\norm{\aop (\grtr + t \h)}_1 \leq \norm{\aop \grtr}_1$.
Thus, to approximate the expectation in \eqref{eq:implementation:statdimequiv}, we first select a sufficiently small value of $t$ (typically $t = 0.01$ works well) and then solve  problem~\eqref{eq:implementation:prog} for independent samples $\gaussian_1,\dots,\gaussian_k \distributed \Normdistr{\vnull}{\I{n}}$, using the \texttt{Matlab} software package \texttt{cvx}. Due to the concentration behavior of empirical Gaussian processes, the arithmetic mean of $k = 200$ samples already yields a good estimate of $\ed{\aop,\grtr}$. 

Note that the notion of statistical dimension is essentially equivalent to the conic Gaussian mean width considered in Section~\ref{sec:proofs}. More precisely, we have (see \cite[Prop.~10.2]{amelunxen2014edge})
\begin{equation}
	\effdim[\conic]{\descset{\lnorm{\aop(\cdot)}[1], \grtr}} \leq \ed{\aop,\grtr} \leq \effdim[\conic]{\descset{\lnorm{\aop(\cdot)}[1], \grtr}} + 1.
\end{equation}
For numerical evaluations, the statistical dimension however seems to be more appealing due to the convexity of \eqref{eq:implementation:prog}.

\section{An Excursion to Analysis Operator Learning}
\label{sec:learning}

According to Theorem~\ref{thm:results:recovery}, the value of $\samplecompl{\aop,\grtr}$ determines the number of measurements required to ensure recovery.
The number of measurements, in turn, can be regarded as the ``costs'' of successfully applying \refabpnoiseless{\noiseparam=0}{\aop}.
In that sense, the mapping
\begin{equation}
	\mathcal{M} \colon \R^{N \times n} \times \R^n \to \R, \ (\aop, \grtr) \mapsto \samplecompl{\aop,\grtr}
\end{equation}
actually plays the role of a \emph{loss function} (or \emph{cost function}).
This perspective enables us to formalize the procedure of Subsection~\ref{subsec:concl:practical} by means of a (machine) learning problem:
\begin{problem}[Operator Learning -- Expected Loss Minimization]\label{prob:concl:expriskmin}
	Let $\mathcal{X} \subset \R^n$ be a collection of \emph{feasible signal vectors} and assume that $\mu$ is a probability measure on $\mathcal{X}$.
	Then, solve the minimization problem
	\begin{equation}\label{eq:concl:expriskmin}
		\min_{\aop \in \mathcal{H}} \ \int_{\mathcal{X}} \samplecompl{\aop, \grtr} d\mu(\grtr),
	\end{equation}
	where $\mathcal{H} \subset \R^{N \times n}$ is called \emph{hypothesis set} (or \emph{parameter set}), containing all potential choices of analysis operators.
\end{problem}

Intuitively, the purpose of \eqref{eq:concl:expriskmin} is to select an operator $\aop \in \mathcal{H}$ which minimizes the average costs of reconstructing signals from $\signalclass$.
We would like to emphasize that the significance of this approach clearly depends on both the considered signal class $\signalclass$ and the hypothesis set $\mathcal{H}$.
For instance, if $\signalclass$ is just a singleton, i.e., $\signalclass = \{\grtr\}$, solving \eqref{eq:concl:expriskmin} can lead to vacuous minimizers, being strongly adapted to $\grtr$.
Of more interest are larger classes $\signalclass$ whose elements share certain (geometric) characteristics.
In fact, we have already studied a prototypical example in Section~\ref{sec:appl}, namely discrete piecewise constant functions; more formally, $\signalclass$ might consist of all those vectors $\grtr \in \R^n$ with a fixed number of discontinuities (that is, $\lnorm{\aoptv\grtr}[0] \leq S_{\text{TV}}$ for some $S_{\text{TV}} > 0$).

The choice of $\mathcal{H}$ is typically driven by the belief in what properties of $\aop$ are appropriate to capture the structural features of $\signalclass$. Once again, Subsection~\ref{subsubsec:appl:wavelets:blocks} provides an interesting example: In the scenario of Haar wavelet systems, $\mathcal{H}$ could contain all those $\aop \in \R^{N\times n}$ arising from a scale-wise weighting of $\aopi$. Then, a minimizer of \eqref{eq:concl:expriskmin} would yield an optimally weighted version of $\aopi$ (with respect to the class of piecewise constant signals).
It is also worth mentioning that the dimension of the parameter space $\R^{N \times n}$ is considerably reduced in that way, since the degrees of freedom of $\mathcal{H}$ would just equal the number of scales.

The optimization task of \eqref{eq:concl:expriskmin} is however often unfeasible in practice because the underlying probability measure $\mu$ is usually unknown, e.g., when the $\grtr$ are supposed to be natural images.
Instead, one has only access to a collection of independent samples drawn from a random vector in $\R^n$ distributed according to $\mu$.
This leads to an empirical version of Problem~\ref{prob:concl:expriskmin}:
\begin{problem}[Operator Learning -- Empirical Loss Minimization]\label{prob:concl:empriskmin}
	Let $\mathcal{X} \subset \R^n$ be the set of feasible signals.
	Suppose that $\grtr_1, \dots, \grtr_M \in \signalclass$ are independent samples drawn according to a probability measure $\mu$ on $\signalclass$.
	Then, solve the minimization problem
	\begin{equation}\label{eq:concl:empriskmin}
		\min_{\aop \in \mathcal{H}} \ \tfrac{1}{M} \sum_{l = 1}^M \samplecompl{\aop, \grtr_l},
	\end{equation}
	where $\mathcal{H} \subset \R^{N \times n}$ is again a hypothesis set.
\end{problem}
The idea behind \eqref{eq:concl:empriskmin} is quite simple: By computing the empirical mean instead of an integral, one may end up with a tractable problem, whose solutions are still very close to minimizers of \eqref{eq:concl:expriskmin}.
In fact, solving and analyzing optimization problems of the type \eqref{eq:concl:empriskmin} is one of the key objectives in \emph{statistical learning theory} \cite{vapnik2013nature,mohri2012foundations,shalev2014understanding}.
This field of research offers a rich toolbox that allows us to investigate many related issues, such as regularization, non-convex optimization, iterative algorithms, or sample complexity.
We hope that this methodology, combined with our refined sampling-rate bounds, could be a starting point of novel design rules and guidelines in analysis-based modeling.
On the other hand, due to the non-convexity and discontinuity of $\samplecompl{\cdot, \cdot}$, the specific problem of \eqref{eq:concl:empriskmin} is expected to be (algorithmically) challenging and may require several sophisticated adaptions.
A detailed study would go beyond the scope of this paper and is therefore deferred to future research.

\section*{Acknowledgments}
{\smaller
The authors thank Claire Boyer, Peter Jung, Jackie Ma, and Pierre Weiss for fruitful discussions.
Our special thanks goes to Ali Hashemi, who referred us to Price's theorem, and Felix Voigtlaender, who assisted us by working out a fairly general version of it \cite{voigtlaender2017price}, which is also used in this work.
M.G. is supported by the European Commission Project DEDALE (contract no. 665044) within the H2020 Framework Program as well as the Einstein Center for Mathematics Berlin.
G.K acknowledges partial support by the Einstein Foundation Berlin, the DFG Collaborative Research Center TRR~109 ``Discretization in Geometry and Dynamics,'' and the European Commission Project DEDALE (contract no. 665044) 
within the H2020 Framework Program, DFG Grant KU~1446/18 as well as DFG-SPP~1798 Grants KU~1446/21 and 
KU 1446/23.
M.M. is supported by the DFG Priority Programme DFG-SPP~1798.
Finally, the authors would like to thank the anonymous referees for their useful comments and suggestions which have helped to improve the original manuscript.

}

\renewcommand*{\bibfont}{\smaller}
\begin{refcontext}[sorting=nyt]
	\printbibliography[heading=bibintoc]
\end{refcontext}

\newpage
\listoftodos

\end{document}